%
%
%
%
%
%
%
\documentclass[%
reprint,
superscriptaddress,
frontmatterverbose, 
preprintnumbers,
longbibliography,
 amsmath,amssymb,
 aps,
 pra,
notitlepage,
nofootinbib,
twocolumn
]{revtex4-1}

\usepackage{lipsum}

\usepackage{graphicx}
\usepackage{dcolumn}
\usepackage{bm}
\usepackage{hyperref}
\usepackage{subfigure}
\hypersetup{
colorlinks=true,
linkcolor=blue,
filecolor=blue,
citecolor=blue,  
urlcolor=blue,
}

\newcommand{\mycomment}[1]{}

\usepackage{comment}

\usepackage{thmtools}
\usepackage{thm-restate}

\usepackage{enumerate} 


\usepackage{amssymb}
\usepackage{mathtools}


\usepackage{mathrsfs}
\usepackage{multirow}
\usepackage{bbm}

\usepackage{xcolor}

\definecolor{sanddune}{rgb}{0.59, 0.44, 0.09}
\definecolor{darkblue}{RGB}{0,0,102}
\definecolor{darkred}{rgb}{0.5,0.,0.}
\definecolor{BlueViolet}{RGB}{138,43,226}
\definecolor{SkyBlue}{RGB}{30,144,255}
\definecolor{DarkGreen}{RGB}{0,100,0}

\usepackage{amsthm}
\usepackage{amsmath}
\theoremstyle{plain}
\newtheorem{thm}{Theorem}
\newtheorem{lem}[thm]{Lemma}
\newtheorem{prop}[thm]{Proposition}

\theoremstyle{definition}
\newtheorem{defn}{Definition}

\newcommand{\ket}[1]{|#1\rangle}
\newcommand{\bra}[1]{\langle #1|}

\newcommand{\ketbra}[2]{|#1\rangle\langle #2|}

\newcommand{\abs}[1]{\left|#1\right|}

\newcommand{\mc}{\mathcal}

\newcommand{\mbb}{\mathbb}

\newcommand{\stab}{\mathrm{STAB}}

\newcommand{\ba}{\begin{eqnarray}}
\newcommand{\ea}{\end{eqnarray}}

\DeclareMathOperator{\Tr}{Tr}

\newcommand{\z}{\mathbf{z}}
\newcommand{\x}{\mathbf{x}}
\newcommand{\y}{\mathbf{y}}
\newcommand{\s}{\mathbf{s}}

\usepackage{makecell}
\usepackage{multirow}

\begin{document}
\newcommand{\onenorm}[1]{\left\| #1 \right\|_1}
\newcommand{\twonorm}[1]{\left\| #1 \right\|_2}
\newcommand{\norm}[1]{\left\| #1 \right\|}

\newcommand{\ols}[1]{\mskip.5\thinmuskip\overline{\mskip-.5\thinmuskip {#1} \mskip-.5\thinmuskip}\mskip.5\thinmuskip} 
\newcommand{\olsi}[1]{\,\overline{\!{#1}}} 

\title{Noise robustness and threshold of many-body quantum magic}


\author{Fuchuan Wei}
\affiliation{Yau Mathematical Sciences Center, Tsinghua University, Beijing 100084, China}
\affiliation{Department of Mathematics, Tsinghua University, Beijing 100084, China}

\author{Zi-Wen Liu}
\affiliation{Yau Mathematical Sciences Center, Tsinghua University, Beijing 100084, China}

\date{\today}

\begin{abstract}
Understanding quantum magic (i.e., nonstabilizerness) in many-body quantum systems is challenging but essential to the study of quantum computation and many-body physics.
We investigate how noise affects magic properties in entangled many-body quantum states by quantitatively examining the magic decay under noise, with a primary aim being to understand the stability of magic associated with different kinds of entanglement structures.
As a standard model, we study hypergraph states, a representative class of many-body magic states, subject to depolarizing noise. First, we show that interactions facilitated by high-degree gates are fragile to noise. 
In particular, the $\mathrm{C}^{n-1}Z$ state family  exhibits a vanishing magic threshold of $\Theta(1/n)$.
Furthermore, we demonstrate efficiently preparable families of hypergraph states without local magic but with a non-vanishing magic threshold which signifies robust magic that is entirely embedded in global entanglement. We also discuss the qudit case based on the discrete Wigner formalism. 
\end{abstract}

\pacs{}
\maketitle


\section{Introduction}

One of the central quests in quantum information research is to understand why quantum computing is potentially more powerful than classical computing.  This has driven the development of several areas that have garnered major interest in the field, including quantum computational supremacy~\cite{Harrow2017supremacy} and quantum resource theory~\cite{Chitambar2019resource}, but a complete understanding is yet to be achieved.
Building on the Gottesman--Knill theorem~\cite{gottesman1998heisenberg,NielsenChuang,AaronsonGottesman04} which indicates that  stabilizer quantum computation can be efficiently simulated on classical computers,
the theory of nonstabilizerness, aka ``magic'' \cite{Bravyi2005universal, Veitch2014resource,Howard2017application,Liu2022manybody}, emerges as a resource theory that addresses the source of quantum computational advantages.

A major obstacle to the realization of quantum computation is the inherent susceptibility of physical quantum systems to
noise and imperfections which are ubiquitous in reality. These effects can significantly undermine the resource features of quantum systems and hence diminish their power and utility for applications, rendering whether and when the feature of interest remain effective under noise effects a question of fundamental importance. 
This question underlies 
many important research areas, especially 
noisy intermediate-scale quantum (NISQ) technologies~\cite{Preskill2018quantumcomputingin,Chen2023complexity} and quantum error correction~\cite{shor95,gottesman1997,NielsenChuang}, and has been investigated for various specific features such as
computational supremacy \cite{Aharonov2023polynomial,shao2024Simulating,schuster2024polynomial,sun2024sudden} and  entanglement \cite{RevModPhys.81.865,Contreras2022survival,Miller2023robustnessofEnt}.

Here, we embark on the study of noise robustness of magic in entangled many-body quantum systems, with the overarching goal of understanding how noise effects impact the quantum computational resources in large systems with different entanglement structures, which generates important insights into the interplay between magic and entanglement, the design of circuits and architectures for quantum computation, and so on.

As a neat yet highly versatile model for entangled magic states~\cite{Liu2022manybody}, hypergraph states~\cite{Rossi2013HGstate,qu2013encoding}, which have found importance in e.g.~many-body physics~\cite{Levin2012Braiding,Miller2016hierarchy,PhysRevLett.120.170503,Ellison2021symmetryprotected} and measurement-based quantum computing (MBQC)~\cite{Raussendorf2001OneWay,Miller2016hierarchy}, provide an apt playground for concretely investigating the relation between entanglement structures and magic properties.
The main approach of this work is to quantitatively analyze the decay of magic in hypergraph states under independent noise, which we consider as a standard model, mainly based on the well-behaved robustness of magic (RoM) \cite{Howard2017application} measure of mixed state magic and its bounds. 
The conceptual picture is that noise drives pure states to mixed ones, and as it intensifies, the magic gradually decays and eventually vanishes  at a certain point as the state is brought inside the stabilizer hull.
Particularly, we are interested in the  \textit{magic noise threshold}, defined as the noise rate at which the magic vanishes.
Our first message is the fragility of  magic provided by high-degree hyperedges in hypergraph states. Specifically, we show that  the magic provided by an edge decreases exponentially with the edge degree for fixed noise rates, and moreover, the magic threshold for the hypergraph state generated by a degree-$n$ edge (representing ${\mathrm{C}^{n-1}Z}$) scales as $\Theta(1/n)$.
Secondly, we look into the relationship between local and global magic
through the lens of magic noise threshold, uncovering an intriguing separation between them.
We show that 
while non-vanishing local magic naturally implies a non-vanishing magic threshold, the converse is not necessarily true, and remarkably, the hypergraph state model enables explicit and efficient constructions of counterexamples, namely, state families that are proved to host extensive and robust magic completely ``hidden'' in global entanglement. This is expected to have further applications and sheds light on the nature of many-body magic and its interplay with entanglement and complexity.
Additionally, we present a preliminary analysis of the qudit case with odd prime local dimensions using the discrete Wigner formalism, and point out various future directions. 

\section{Preliminaries}

\subsection{Hypergraph states}
Let $\mathrm{C}^{n-1}Z=\operatorname{diag}(1,\cdots,1,-1)$ denote the multi-controlled-$Z$ gate on $n$-qubits, with $\mathrm{C}^0Z=Z$.
Given a hypergraph $G=\{[n], E\}$, where $[n]:=\{1,\cdots,n\}$ is the set of vertices and $E\subset 2^{[n]}$ is the set of hyperedges, an associated hypergraph state is defined by
\begin{equation}
\ket{\Psi_G}=\prod_{e\in E}\mathrm{C}^{\abs{e}-1}Z_e\ket{+^n},
\end{equation}
where $\mathrm{C}^{\abs{e}-1}Z_e$ is the multi-controlled-$Z$ gate applied to the qubits in $e\subset{[n]}$. The number of vertices contained in an edge $e$ is referred to as the \emph{degree} of the edge. 
The characteristic function $f_{\Psi}:\mbb{Z}_2^n\rightarrow\mbb{Z}_2$ for $\Psi_G$ is defined as
$f_\Psi(x)=\sum_{e\in E}\prod_{i\in e}x_i$,
and we have $\ket{\Psi_G}=2^{-n/2}\sum_{\\s\in\mbb{Z}_2^n}(-1)^{f(\s)}\ket{\s}$.
We write $\ket{\mathrm{C}^{n-1}Z}=\mathrm{C}^{n-1}Z\ket{+^n}$.

\subsection{Robustness of magic}

Let $\stab_n$ denote the set of all $n$-qubit stabilizer states, namely the convex hull of all pure stabilizer states. Denote $\stab=\cup_{n\ge1}\stab_n$.
For a $n$-qubit state $\rho$, its robustness of magic (RoM), denoted by $\mc{R}(\rho)$, is defined as 
\begin{equation}\label{eq:RoM_primal}
\mathcal{R}(\rho)=\min _{\sigma, \tau \in \stab_n}\left\{2 a+1 \mid \rho=(a+1) \sigma-a \tau, a \geq 0\right\}.
\end{equation}
We summarize some key properties of RoM here: i) $\mc{R}(\cdot)$ is faithful, i.e., $\mc{R}(\sigma)=1$ iff $\sigma\in\stab$; ii) for all trace-preserving stabilizer channels $\mc{E}$, we have $\mc{R}\left(\mc{E}(\rho)\right)\le\mc{R}(\rho)$; iii)
$\mc{R}(\sigma\otimes\rho)=\mc{R}(\rho)$ for $\sigma\in\operatorname{STAB}$; iv) for  a set of states $\{\rho_k\}$ and a set of real numbers $\{p_k\}$  satisfying $\sum_kp_k=1$, the \emph{convexity} of $\mc{R}(\cdot)$ implies that $\mc{R}\left(\sum_kp_k\rho_k\right)\le\sum_k\abs{p_k}\mc{R}(\rho_k)$. 
Importantly, the RoM measure provides an operational charaterization of the hardness of classically simulating a quantum system: there exists a classical algorithm based on quasiprobability sampling that simulate Clifford quantum computation assisted by magic ancilla $\rho$ with overhead that scales as  $\mc{R}(\rho)^2$~\cite{Howard2017application}.

\section{Noise Robustness and Threshold of magic}

As a standard noise model, we primarily consider the independent depolarizing noise $\mc{E}_\lambda^{\otimes n}$ acting on the $n$-body quantum system, which leaves the qubits it acts on unchanged with probability $1-\lambda$, and replaces them with $\mbb{I}_2/2$ with probability $\lambda$. This noise channel can be expressed as $\mc{E}_{\lambda}^{\otimes n}=((1-\lambda)\mc{I}+\lambda\mc{G})^{\otimes n}$, where $\mc{I}$ is the identity channel and $\mc{G}(\sigma)=\Tr(\sigma)\mbb{I}_2/2$. 
When a qubit is subject to $\mc{G}$, it is decoupled with the rest of the many-body system and can be effectively traced out.
For an $n$-qubit hypergraph state $\Psi:=\ketbra{\Psi}{\Psi}$, and a subset of qubits $I \subset [n]$, consider tracing out the qubits in $I$. We have
\begin{equation}\label{eq:rule_of_partialtrace}
\Tr_{I}(\Psi)=\frac{1}{2^{\abs{I}}}\sum_{\mathbf{b}\in\mbb{Z}_2^{\abs{I}}}\Psi^{(I,\mathbf{b})},
\end{equation}
where $\Psi^{(I,\mathbf{b})}$ are $(n-\abs{I})$-qubit hypergraph states obtained by removing vertices and edges from $\Psi$. Detailed descriptions of the rule can be found in Appendix~\ref{app:proof_of_HGstate_PartialTrace}.
Note that the analysis in this paper is easily generalizable to any replacement noise channels with $\mc{G}(\sigma)=\Tr(\sigma)\rho_{\mathrm{g}}$, where $\rho_{\mathrm{g}}$ is a fixed stabilizer state.

\begin{figure}[t]
\centering
\includegraphics[width=0.49\textwidth]{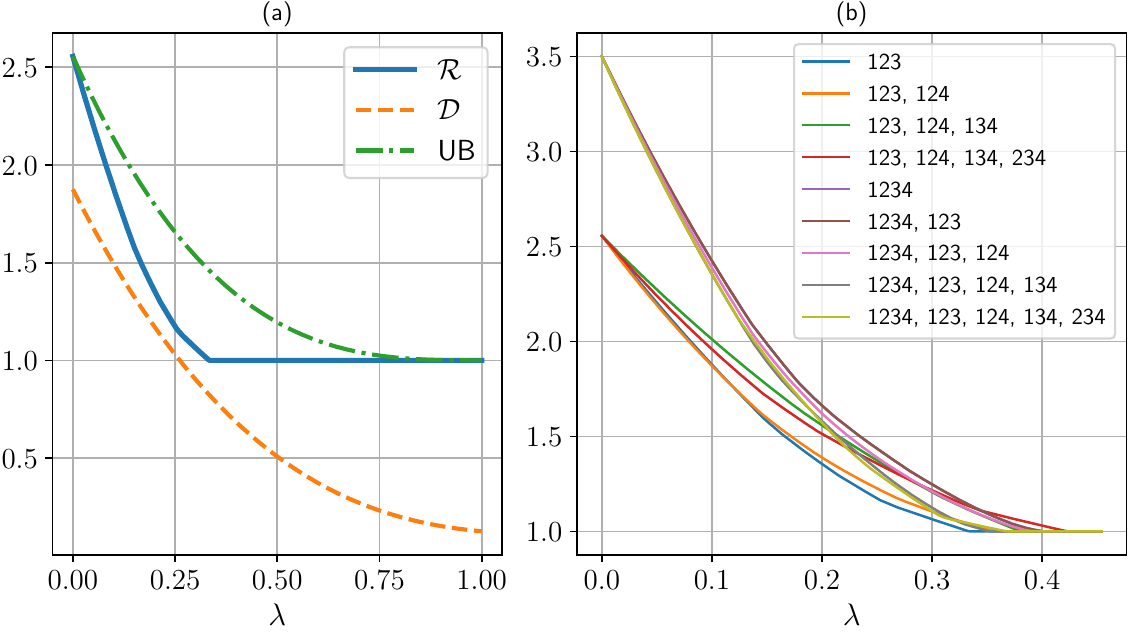}
\caption{
(a) Magic for $\ket{\mathrm{CC}Z}$ under $\mc{E}_\lambda^{\otimes 3}$ and its upper and lower bounds. The solid blue line depicts $\mc{R}\left(\mc{E}_\lambda^{\otimes 3}(\ketbra{\mathrm{CC}Z}{\mathrm{CC}Z})\right)$. The dashed green line is the upper bound given by Eq.~\eqref{eq:UB}, and the dashed orange line is the lower bound given by Eq.~\eqref{eq:LB}.
(b) $\mc{R}\left(\mc{E}_\lambda^{\otimes 4}(\Psi)\right)$ for all possible $4$-qubit hypergraph state $\Psi$ with edge degree $\ge3$, modulo permutations of qubits. 
The hypergraph state $\{1,2,3,4\}$ (labeled $1234$ in the figure legend) shares the same line as the hypergraph state $\{\{1,2,3,4\},\{1,2,3\}\}$.}
\label{fig:plot_R}
\end{figure}

Of particular interest in this work is the decay profile of magic, as characterized by $\mc{R}\left(\mc{E}_{\lambda}^{\otimes n}(\rho)\right)$, and the magic noise threshold above which the magic is eliminated.
\begin{defn}
For a state $\rho$ and $\epsilon\ge0$, we call $\lambda^*_\epsilon(\rho):=\inf_{\mc{R}\left(\mc{E}_{\lambda}^{\otimes n}(\rho)\right)\le1+\epsilon}\lambda$ the \emph{$\epsilon$-magic noise threshold}. 
For a family of states $\{\rho_n\}$ where $\rho_n$ is an $n$-qubit state, if for a fixed $\epsilon\ge0$ we have $\underset{n\rightarrow\infty}{\liminf}~\lambda^*_\epsilon(\rho_n)>0$,
we say $\{\rho_n\}$ has a \emph{non-vanishing magic threshold}.
\end{defn}

As a basic example, consider the hypergraph state $\Phi=\ketbra{\mathrm{CC}Z}{\mathrm{CC}Z}$.
By solving the dual program~\cite{Howard2017application}, we obtain the analytic expression as {$\mc{R}(\mc{E}_{\lambda}^{\otimes n}(\Phi))=\max_{1\le j\le9}\alpha_j(\lambda)$, where $\alpha_j(\lambda)=\Tr\left(\mc{E}_{\lambda}^{\otimes 3}(\Phi)A_j\right)$ are polynomials in $\lambda$ with degree at most $3$, with $A_j$ listed in Appendix~\ref{app:CCZ}.
We plot $\mc{R}(\mc{E}_{\lambda}^{\otimes n}(\Phi))$ by solid blue line in Fig.~\ref{fig:plot_R}(a).
By this analytical expression, we find that $\ket{\mathrm{CC}Z}$ has noiseless RoM $\mc{R}(\Phi)=23/9$, and magic threshold $\lambda^*_0=1/3$.

The computation of RoM is highly challenging, requiring solving a linear programming problem of size $2^{\Theta(n^2)}$~\cite{Howard2017application,Heinrich2019robustnessofmagic,hamaguchi2023handbook}, which is intractable for $n>5$ in general.
Nevertheless, we can write down simple and useful upper and lower bounds for it. 

By expressing $\mc{E}_{\lambda}^{\otimes n}(\cdot)=((1-\lambda)\mc{I}(\cdot)+\lambda\Tr(\cdot)\frac{\mbb{I}_2}{2})^{\otimes n}$ and using the convexity of $\mc{R}(\cdot)$, we find the following  upper bound:
\begin{equation}\label{eq:UB}
\mc{R}\left(\mc{E}_{\lambda}^{\otimes n}(\rho)\right)\le\sum_{I\subset[n]}(1-\lambda)^{n-\abs{I}}\lambda^{\abs{I}}\mc{R}\left(\Tr_{I}(\rho)\right),
\end{equation}
where the summation is taken over all subsets of $[n]$.
For example, for $\Phi=\ketbra{\mathrm{CC}Z}{\mathrm{CC}Z}$, $\mc{R}\left(\Tr_{I}(\Phi)\right)=1$ for all $I\neq\emptyset$, thus Eq.~\eqref{eq:UB} gives an upper bound $1+\frac{14}{9}(1-\lambda)^3$ (the green dashed line in Fig.~\ref{fig:plot_R}(a)). From this upper bound, we infer that $\mc{R}\left(\mc{E}_{\lambda}^{\otimes n}(\Phi)\right)$ gets below $1+\epsilon$ when $\lambda\ge1-(\frac{9}{14}\epsilon)^{\frac{1}{3}}$, yielding the bound  $\lambda^*_\epsilon\le1-(\frac{9}{14}\epsilon)^{\frac{1}{3}}$.
Since $\mc{R}\left(\Tr_{I}(\rho)\right)=\mc{O}(2^{n-\abs{I}})$ \cite{Liu2022manybody}, using Eq.~\eqref{eq:UB} we obtain 
$\mc{R}\left(\mc{E}_\lambda^{\otimes n}(\rho)\right)=\mc{O}\big((2-\lambda)^n\big)$  (see Appendix~\ref{app:generalUB}),
which holds for arbitrary $n$-qubit states $\rho$.

On the other hand, a lower bound of $\mc{R}(\rho)$ is given by
$\mc{D}(\rho):=\frac{1}{2^n}\sum_{P\in\mathrm{P}_n^+}\abs{\Tr(P\rho)}$~\cite{Howard2017application, Lorenzo2022stabilizer}, where $\mathrm{P}_n^+$ is the set of $n$-qubit Pauli operators with phase $+1$.
Since
$\Tr\left(P\mc{E}_{\lambda}^{\otimes n}(\rho)\right)=(1-\lambda)^{\operatorname{wt}(P)}\Tr(P\rho)$,
we obtain the following lower bound for $\mc{R}\left(\mc{E}_{\lambda}^{\otimes n}(\rho)\right)$ in terms of $\mc{D}\left(\mc{E}_{\lambda}^{\otimes n}(\rho)\right)$:
\begin{equation}\label{eq:LB}
\mc{R}\left(\mc{E}_{\lambda}^{\otimes n}(\rho)\right)\ge\frac{1}{2^n}\sum_{P\in\mathrm{P}_n^+}(1-\lambda)^{\operatorname{wt}(P)}\abs{\Tr(P\rho)},
\end{equation}
where $\operatorname{wt}(P)$ is the number of non-identity elements in $P$. For $\Phi=\ketbra{\mathrm{CC}Z}{\mathrm{CC}Z}$, we have $\mc{D}\left(\mc{E}_{\lambda}^{\otimes n}(\Phi)\right)=\frac{1}{16}(30-66\lambda+51\lambda^2-13\lambda^3)$ (the orange dashed line in Fig.~\ref{fig:plot_R}(a)). The root $\lambda\approx0.26$ of $\mc{D}\left(\mc{E}_{\lambda}^{\otimes n}(\Phi)\right)-1$ provides a lower bound for the threshold $\lambda^*_0$.

We also showcase in Fig.~\ref{fig:plot_R}(b) the numerical results for 4-qubit hypergraph states with different hyperedge structures. Depending on the presence of the degree-4 edge $\{1,2,3,4\}$, these hypergraphs exhibit noiseless RoM of $3.5$ or $23/9$.
A key observation is that increasing the number of edges results in a higher magic threshold $\lambda^*_0$ for the four hypergraphs lacking the degree-4 edge. In particular, the 3-complete hypergraph state $\{\{1,2,3\},\{1,2,4\},\{1,3,4\},\{2,3,4\}\}$ exhibits the highest magic threshold among all cases.
Also note that further adding the degree-4 edge degrades the magic threshold. Counterintuitively, the decay profile does not necessarily align with the threshold order and crossovers can occur (see e.g.~the green and red lines).

We analytically study the noise threshold and robustness properties of various representative hypergraph state families, with results summarized in Table~\ref{tab:threshold}.

\begin{table*}[htbp]
\centering
\resizebox{1\textwidth}{!}{
\begin{tabular}{c|c|c|c|c|c}
\hline
\hline
Hypergraph state $\Psi$ & Edge degree & Threshold & $\mc{R}(\mc{E}_{\lambda}^{\otimes n}(\Psi))$ upper bound & $\mc{R}(\mc{E}_{\lambda}^{\otimes n}(\Psi))$ lower bound & Local magic\\
\hline
\hline
$\ket{\mathrm{CC}Z}$ & \multirow{3}{*}{$3$} & $\lambda^*_0=1/3$ & \multicolumn{2}{c|}{$\mc{R}(\mc{E}_{\lambda}^{\otimes n}(\Psi))=\max_{1\le j\le9}\alpha_j(\lambda)$ [Fig.~\ref{fig:CCZ_R_exact}] } & $23/9$\\
\cline{1-1}\cline{3-6}
Union Jack lattice &  & $\lambda^*_0\ge\text{constant}>0$ & $\mc{O}\big((2-\lambda)^n\big)$ & $-$ & $1.0078$\\
\cline{1-1}\cline{3-6}
$3$-complete hypergraph &  & $0.39\le\lambda^*_\epsilon\le0.78$ & $1+2^{n+1}\big(1-(1-2^{-3/2})\lambda\big)^n$ & $\approx2^{(n-3)/2}(1-3\lambda/4)^n$ & $1+\mc{O}(2^{-n/2})$\\
\hline
$4$-complete hypergraph &$4$ & $\lambda^*_0\ge\text{constant}>0$ & $\mc{O}\big((2-\lambda)^n\big)$ & $\mc{D}\left(\mc{E}_{\lambda}^{\otimes n}(\Psi)\right)$ [Fig.~\ref{fig:plot_D_4complete}] &$1.25+o(1)$\\
\hline
High-degree hypergraphs & $\ge m$ & $\lambda^*_\epsilon=\mc{O}(1/m)$ & $1+\operatorname{poly}(m)(1-\lambda/2)^m$ & $-$ & $1+o(1)$\\
\hline
$\ket{\mathrm{C}^{n-1}Z}$ &$n$ & $\lambda^*_\epsilon=\Theta(1/n)$ & $1+4(1-\lambda/2)^n$ & $\approx 2 \left(1 - 3\lambda/4\right)^n$ & $1+\Theta\left((1/2)^n\right)$\\
\hline
Qudit $\ket{\mathrm{C}^{n-1}Z}$ & $n$ & $\lambda^*_0\ge0.42$ $(d=3)$ & $1+4M_d(d-1)^n\left(1-\frac{d-1}{d}\lambda\right)^n$ & $1+2\operatorname{sn}(\mc{E}_{\lambda}^{\otimes n}(\Psi))$ [Fig.~\ref{fig:plot_N_qudit_sym_d3_d5}] & $1+\mc{O}\left((\frac{d-1}{d})^n\right)$ \\
\hline
\hline
\end{tabular}
}
\caption{Noise robustness and threshold of the magic of various hypergraph states. The state $\ket{\mathrm{CC}Z}$ has 3 qubits, while other hypergraph states in the table have $n$ qubits/qudits. The qudit $\ket{\mathrm{C}^{n-1}Z}$ in the last row has local dimension $d$, whereas all other hypergraph states have local dimension 2. The threshold estimates are valid when $n$ is sufficiently large and for any fixed $\epsilon > 0$. 
The expressions $\alpha_j(\lambda)$ are polynomials in $\lambda$ of degree at most 3, with their definitions provided in Appendix~\ref{app:CCZ}. Hypergraph states with finite spatial dimension and local interaction, such as the Union Jack lattice, exhibit non-vanishing local magic, leading to a non-vanishing threshold. High-degree hypergraphs in the 5th row are composed of $\mc{O}(\log(m))$ arbitrarily placed edges each with degree $\ge m$; they are fragile to noise and have a vanishing threshold. For the upper bound in the last row, we define $M_d = \sup_{\rho \in D(\mbb{C}^d)} \mc{R}(\rho)$. Details for the technical calculations and numerical results are in Appendices~\ref{app:CCZ},~\ref{app:Example_CCCZ},~\ref{app:3-complete},~\ref{app:UnionJack},~\ref{app:4-complete},~\ref{app:qudit_CCCZ}.
}
\label{tab:threshold}
\end{table*}

\section{High-degree edges are fragile}\label{section:HighDegreeEdgeFragile}

We now demonstrate a general observation that the magic induced by high-degree edges is fragile. 
Specifically, under a constant noise rate $\lambda > 0$, the amount of magic contributed by these high-degree edges decreases exponentially with the edge size. Intuitively, this property arises because the magic provided by a $\mathrm{C}^{n-1}Z$ gate is upper bounded for all $n$, yet the magic spreads more extensively and is more exposed to noise as $n$ increases.
More generally, it is reasonable to expect that the magic induced by more extensive interactions become less robust against noise.

Define the magic capacity \cite{Seddon2019quantifying} of the $\mathrm{C}^{n-1}Z$ gate as $\mathcal{C}(\mathrm{C}^{n-1}Z)={\max}_{\ket{s}}\mathcal{R}\left(\mathrm{C}^{n-1}Z\otimes \mbb{I}_{2^n}\ket{s}\right)$, where the maximum is taken over all $2n$-qubit pure stabilizer states $\ket{s}$. It quantifies the maximum increase in RoM as the inequality $\mc{R}\left(\mathrm{C}^{n-1}Z\otimes\mbb{I}\rho\mathrm{C}^{n-1}Z\otimes\mbb{I}\right)\le\mc{C}(\mathrm{C}^{n-1}Z)\mc{R}(\rho)$ holds for {any $\rho$ on} an arbitrary number of qubits. 
Notice that when acting on the computational basis, $\mathrm{C}^{n-1}Z$ only flips the sign of $\ket{1^n}$, which induces a small perturbation and is expected to generate only a limited amount of magic.
Indeed, we prove in Appendix~\ref{app:MagicCapacity_CCCZ} that, for all $n$,
\begin{equation}\label{eq:MagicCapacity_CCCZ_UBis5}
\mathcal{C}\left(\mathrm{C}^{n-1}Z\right)<5.
\end{equation}
Although synthesizing $\mathrm{C}^{n-1}Z$ requires at least $n$ $T=\operatorname{diag}(1,e^{i\pi/4})$ gates~\cite{Beverland2020lower}, Eq.~\eqref{eq:MagicCapacity_CCCZ_UBis5} suggests that $\mathrm{C}^{n-1}Z$ can potentially be produced more efficiently using fewer magic resources.
Note that using similar techniques, we can upper bound the RoM of any $n$-qubit hypergraph state $\Psi$ by the second-order nonlinearity of its characteristic function. Specifically, we have $\mc{R}(\Psi)<1+4\chi\left(f_{\Psi}\right)$, where $\chi(f):=\min _{\text {quadratic } f^{\prime}} \operatorname{wt}\left(f+f^{\prime}\right)$ \cite{Liu2022manybody}.

Let $\Phi_n=\ketbra{\mathrm{C}^{n-1}Z}{\mathrm{C}^{n-1}Z}$ be the hypergraph state with a single edge containing all qubits.
By Eq.~\eqref{eq:MagicCapacity_CCCZ_UBis5}, we have $\mc{R}\left(\Phi_n\right)<5$ for all $n$. 
Since $\Phi_n$ is symmetric under permutations, its reduced density matrix depends only on the number of qubits removed.
By the rules described in Appendix~\ref{app:proof_of_HGstate_PartialTrace}, when $k$ qubits are removed, the $n$-edge is reduced to an $(n-k)$-edge only if the removed qubits are labeled by $\mathbf{b} = 1^{k}$; otherwise, the $n$-edge is eliminated. 
Therefore, $\Tr_{\overline{[k]}}(\Phi_n)=\left(1-\frac{1}{2^{k}}\right)\ketbra{+^{n-k}}{+^{n-k}}+\frac{1}{2^{k}}\Phi_{n-k}$, whose RoM is upper bounded by $\left(1-\frac{1}{2^{k}}\right)+\frac{5}{2^{k}}$ due to the convexity of $\mc{R}(\cdot)$. Using Eq.~\eqref{eq:UB}, we obtain
\begin{equation}\label{eq:CCCZ_UB}
\begin{aligned}
\mc{R}\left(\mc{E}_{\lambda}^{\otimes n}\left(\Phi_n\right)\right)
\le&1+4\left(1-\lambda/2\right)^n.
\end{aligned}
\end{equation}
which decays exponentially in $n$ for fixed $\lambda>0$.  
For any $\epsilon>0$,
we obtain $\lambda^*_\epsilon\le2\big(1-(\epsilon/4)^{\frac{1}{n}}\big)=\mc{O}(n^{-1})$.
By Eq.~\eqref{eq:LB}, we find a lower bound for $\mc{R}\left(\mc{E}_{\lambda}^{\otimes n}\left(\Phi_n\right)\right)$ of approximately $2 \left(1 - 3\lambda/4\right)^n$ (see Appendix~\ref{app:Example_CCCZ}), which implies $\lambda^*_\epsilon=\Omega(n^{-1})$.
To summarize, we have the following theorem:
\begin{thm}
For a fixed noise rate $0<\lambda<1$ and any $\epsilon>0$, the family $\{\ket{\mathrm{C}^{n-1}Z}\}$ has magic threshold $\lambda^*_\epsilon=\Theta(n^{-1})$.
\end{thm}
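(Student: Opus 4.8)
The plan is to establish the two matching one-sided asymptotics $\lambda^*_\epsilon=\mc{O}(n^{-1})$ and $\lambda^*_\epsilon=\Omega(n^{-1})$ separately. The upper bound on the threshold will come from the convexity-based upper bound on RoM, and the lower bound from the Pauli-weight lower bound; in fact both ingredients are already assembled in the discussion preceding the statement, so the proof mainly consists of extracting threshold asymptotics from the two decay estimates.

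For the direction $\lambda^*_\epsilon=\mc{O}(n^{-1})$, I would start from Eq.~\eqref{eq:CCCZ_UB}, $\mc{R}\!\left(\mc{E}_\lambda^{\otimes n}(\Phi_n)\right)\le 1+4(1-\lambda/2)^n$, which itself follows by inserting the permutation-symmetric reduced states $\Tr_{\overline{[k]}}(\Phi_n)=(1-2^{-k})\dm{+^{n-k}}+2^{-k}\Phi_{n-k}$ together with the uniform capacity bound $\mc{R}(\Phi_m)<5$ into Eq.~\eqref{eq:UB}. Imposing $1+4(1-\lambda/2)^n\le 1+\epsilon$ gives $(1-\lambda/2)^n\le\epsilon/4$, i.e.\ $\lambda\ge 2\bigl(1-(\epsilon/4)^{1/n}\bigr)$; a first-order expansion $(\epsilon/4)^{1/n}=1+\tfrac1n\ln(\epsilon/4)+O(n^{-2})$ yields $\lambda^*_\epsilon\le\tfrac{2}{n}\ln(4/\epsilon)\,(1+o(1))=\mc{O}(n^{-1})$. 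This step is essentially routine.

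The real work is the direction $\lambda^*_\epsilon=\Omega(n^{-1})$. Here I would invoke the lower bound Eq.~\eqref{eq:LB}, $\mc{R}\!\left(\mc{E}_\lambda^{\otimes n}(\Phi_n)\right)\ge\mc{D}\!\left(\mc{E}_\lambda^{\otimes n}(\Phi_n)\right)=2^{-n}\sum_P (1-\lambda)^{\wt(P)}\abs{\Tr(P\Phi_n)}$, and the core task is to evaluate this weighted Pauli sum. I would compute the spectrum $\Tr(P\Phi_n)=\bra{\mathrm{C}^{n-1}Z}P\ket{\mathrm{C}^{n-1}Z}$ directly from the characteristic function, writing for $P=X^aZ^b$ the exponential sum $2^{-n}\sum_x(-1)^{D_a f(x)}(-1)^{b\cdot x}$, where $D_a f(x)=f(x)\oplus f(x\oplus a)$ is the discrete derivative of $f(x)=x_1\cdots x_n$; equivalently one may expand via the operator identities $\dm{+^n}=2^{-n}\sum_a X^a$ and $\mathrm{C}^{n-1}Z=\mbb{I}-2^{1-n}\sum_a(-1)^{\abs{a}}X^a$. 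The objective is to identify the dominant family of Paulis and show that the weighted sum equals the announced estimate $\approx 2(1-3\lambda/4)^n$, whose per-qubit factor $1-3\lambda/4=\tfrac14+\tfrac34(1-\lambda)$ signals a three-to-one split between noise-exposed and noise-free single-qubit contributions. Controlling this autocorrelation sum asymptotically and uniformly in $a$ is the main obstacle, since $D_a f$ is a genuinely nonlinear Boolean function whose Walsh spectrum must be handled for every shift $a$.

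Granting the estimate, the conclusion follows quickly: $\mc{D}\!\left(\mc{E}_\lambda^{\otimes n}(\Phi_n)\right)>1+\epsilon$ as soon as $(1-3\lambda/4)^n>(1+\epsilon)/2$, which, using $-\ln(1-3\lambda/4)=\tfrac34\lambda\,(1+o(1))$ for small $\lambda$, holds whenever $\lambda\le\tfrac{4}{3n}\ln\tfrac{2}{1+\epsilon}\,(1-o(1))$. For fixed $\epsilon\in(0,1)$ the constant $c_\epsilon:=\tfrac43\ln\tfrac{2}{1+\epsilon}$ is strictly positive, so $\mc{R}\!\left(\mc{E}_\lambda^{\otimes n}(\Phi_n)\right)\ge\mc{D}\!\left(\mc{E}_\lambda^{\otimes n}(\Phi_n)\right)>1+\epsilon$ for all $\lambda\le c_\epsilon/n$ and $n$ large, giving $\lambda^*_\epsilon\ge c_\epsilon/n=\Omega(n^{-1})$. (The positivity $c_\epsilon>0$ uses that the noiseless value $\mc{D}(\Phi_n)\approx 2$ exceeds $1+\epsilon$, i.e.\ $\epsilon<1$, which is the regime in which the $\mc{D}$-based certificate is informative.) Combining the two directions proves $\lambda^*_\epsilon=\Theta(n^{-1})$.
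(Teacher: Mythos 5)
Your proposal follows the same route as the paper's proof: the direction $\lambda^*_\epsilon=\mc{O}(n^{-1})$ via the capacity-based decay bound Eq.~\eqref{eq:CCCZ_UB}, and the direction $\lambda^*_\epsilon=\Omega(n^{-1})$ via the Pauli-weighted lower bound Eq.~\eqref{eq:LB}. The upper-bound half and the final threshold extractions on both sides are correct. However, the lower-bound half has a genuine gap: the whole content of the $\Omega(n^{-1})$ direction \emph{is} the estimate $\mc{D}\left(\mc{E}_\lambda^{\otimes n}(\Phi_n)\right)\approx2\left(1-3\lambda/4\right)^n$ (with an additive error kept below a constant), and you explicitly ``grant'' it rather than prove it. Everything after that point is, as you say yourself, a two-line rearrangement. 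What you have assumed is exactly what the paper establishes in Appendix~\ref{app:Example_CCCZ}: the exact Pauli spectrum of $\ket{\mathrm{C}^{n-1}Z}$ and the exact evaluation $\mc{D}\left(\mc{E}_\lambda^{\otimes n}(\Phi_n)\right)=4^{-n}\left(8+2(4-3\lambda)^n+(4-2\lambda)^n-10(2-\lambda)^n\right)$, from which the usable bound $\mc{D}\ge2\left(1-3\lambda/4\right)^n-0.5$ (for $n\ge5$) follows. Without that computation, or a substitute for it, the theorem is not proved.

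Moreover, you overestimate the difficulty of the step you deferred, which is likely why you deferred it. Since $f(\x)=x_1\cdots x_n$ equals $1$ only at the single point $1^n$, the derivative $D_{\mathbf{a}}f(\x)=f(\x)\oplus f(\x\oplus\mathbf{a})$ is supported on at most the two points $1^n$ and $1^n\oplus\mathbf{a}$; there is no ``genuinely nonlinear'' Walsh spectrum to control uniformly in the shift. A two-point correction to the plane-wave sum immediately gives $\abs{\Tr\left(P_{(\z,\x)}\Phi_n\right)}$ equal to $1$ at the identity, $1-2^{2-n}$ for $\x\neq0^n,\ \z=0^n$, $2^{2-n}$ for $\x\neq0^n,\ \z\neq0^n,\ \z\cdot\x=0$, and $0$ otherwise. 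Substituting into Eq.~\eqref{eq:LB}, the dominant family is the third one: each coordinate of the support $\x\vee\z$ lies in $\x$ only, in $\z$ only, or in both (subject to the parity constraint $\z\cdot\x=0$), giving roughly $3^k/2$ configurations of weight $k$ and hence the per-qubit factor $\left(1+3(1-\lambda)\right)/4=1-3\lambda/4$ that you anticipated on heuristic grounds; the corrections are handled by the exact formula above. Two smaller points: your operator identity should read $\mathrm{C}^{n-1}Z=\mbb{I}-2^{1-n}\sum_{\z}(-1)^{\abs{\z}}Z^{\z}$ (it is a diagonal operator, so $Z$'s, not $X$'s); and the restriction to $\epsilon$ below a constant, which you relegate to a parenthesis, is inherent to any $\mc{D}$-based certificate (the paper's own proof needs $\epsilon<1/2$ after its $-0.5$ correction) and is in fact necessary, since $\mc{R}(\Phi_n)<5$ implies $\lambda^*_\epsilon=0$ for $\epsilon\ge4$; it deserves to be stated as a hypothesis rather than an aside.
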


Notably, $\mathrm{C}^{n-1}Z$ gates are natural to implement on various physical platforms \cite{Kim2022iToffoli,Nguyen2024Floquet,Bluvstein2024Logical,2023arXiv231209111W}, suggesting the potential of considering $\mathrm{C}^{n-1}Z$ variants of magic state distillation and injection.
However, the fragility of $\Phi_n$ to noise implies inefficiency in distilling $\Phi_n$ from $\mc{E}_\lambda^{\otimes n}(\Phi_n)$ for large $n$. Since $\mc{R}\left(\mc{E}_\lambda^{\otimes n}(\Phi_n)^{\otimes K}\right)\le\left(1+4\left(1-\lambda/2\right)^n\right)^K$, at least $\Omega((1-\lambda/2)^{-n})$ copies of $\mc{E}_\lambda^{\otimes n}(\Phi_n)$ are required to achieve constant RoM, and therefore, to distill $\Phi_n$ with constant success probability \cite{Veitch2014resource,wang2020efficiently}.

Furthermore, in an arbitrary hypergraph state $\Psi$, the magic contributed by high-degree edges is fragile to noise. 
A general result goes as follows:
\begin{thm}\label{thm:HighDegreeEdgeFragile}
Let $\Psi$ be an $n$-qubit hypergraph state. Consider adding $K$ edges $e_1,\cdots,e_K\subset[n]$ to $\Psi$ to obtain a new state $\Phi$. Then $\mc{R}\left(\mc{E}_{\lambda}^{\otimes n}\left(\Phi\right)\right)$ is upper bounded by
\begin{equation}\label{eq:HighDegreeEdgeFragile}
\begin{aligned}
\mc{R}\left(\mc{E}_{\lambda}^{\otimes n}\left(\Psi\right)\right)+C_{\Psi}\sum_{\emptyset\neq J\subset[K]}(5^{\abs{J}}+1)\Big(1-\frac{\lambda}{2}\Big)^{\abs{\cup_{j\in J}e_j}},
\end{aligned}
\end{equation}
where $C_{\Psi}:=\underset{I\subset[n],\s\in\mbb{Z}_2^{\abs{I}}}{\max}~\mc{R}\big(\Psi^{(I,\mathbf{s})}\big)$ is a constant depending on $\Psi$.
\end{thm}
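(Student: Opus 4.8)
The plan is to expand the noisy state through the partial-trace rule \eqref{eq:rule_of_partialtrace}, isolate an exact copy of $\mc{E}_\lambda^{\otimes n}(\Psi)$, and bound the remainder using the magic capacity estimate $\mc{C}(\mathrm{C}^{m-1}Z)<5$ together with the convexity of $\mc{R}(\cdot)$. Write $\mc{E}_\lambda^{\otimes n}=\sum_{I\subset[n]}(1-\lambda)^{n-\abs{I}}\lambda^{\abs{I}}\,\mc{G}_I$, where $\mc{G}_I$ depolarizes exactly the qubits in $I$, so that $\mc{G}_I(\Phi)=\Tr_I(\Phi)\otimes\frac{\mbb{I}_I}{2^{\abs{I}}}$ and $\Tr_I(\Phi)=2^{-\abs{I}}\sum_{\mathbf b}\Phi^{(I,\mathbf b)}$. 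The key structural fact, read off from the partial-trace rule, is that for fixed $I$ and outcome $\mathbf b$ a new edge $e_j$ survives (becoming $e_j\setminus I$) exactly when every traced-out qubit of $e_j$ carries $b=1$, and is deleted otherwise. Denoting by $S(\mathbf b)\subset[K]$ the set of surviving new edges, we have $\Phi^{(I,\mathbf b)}=\Psi^{(I,\mathbf b)}$ whenever $S(\mathbf b)=\emptyset$, while in general $\Phi^{(I,\mathbf b)}$ arises from $\Psi^{(I,\mathbf b)}$ by applying the $\abs{S(\mathbf b)}$ gates $\mathrm{C}^{\abs{e_j\setminus I}-1}Z$ with $j\in S(\mathbf b)$.

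Because the $S(\mathbf b)=\emptyset$ contributions reproduce $\mc{E}_\lambda^{\otimes n}(\Psi)$ term by term, I would write
\begin{equation}
\mc{E}_\lambda^{\otimes n}(\Phi)=\mc{E}_\lambda^{\otimes n}(\Psi)+\sum_{I\subset[n]}\frac{(1-\lambda)^{n-\abs{I}}\lambda^{\abs{I}}}{2^{\abs{I}}}\sum_{\mathbf b:\,S(\mathbf b)\neq\emptyset}\big(\Phi^{(I,\mathbf b)}-\Psi^{(I,\mathbf b)}\big)\otimes\tfrac{\mbb{I}_I}{2^{\abs{I}}}.
\end{equation}
This exhibits $\mc{E}_\lambda^{\otimes n}(\Phi)$ as a real combination of genuine states whose coefficients sum to $1$: the state $\mc{E}_\lambda^{\otimes n}(\Psi)$ with weight $1$, and each pair $\Phi^{(I,\mathbf b)},\Psi^{(I,\mathbf b)}$ (tensored with the maximally mixed stabilizer state) with weights $\pm\frac{(1-\lambda)^{n-\abs{I}}\lambda^{\abs{I}}}{2^{\abs{I}}}$. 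Applying property iv of $\mc{R}(\cdot)$, then property iii to discard the $\mbb{I}_I/2^{\abs{I}}$ factors, gives
\begin{equation}
\mc{R}\big(\mc{E}_\lambda^{\otimes n}(\Phi)\big)\le\mc{R}\big(\mc{E}_\lambda^{\otimes n}(\Psi)\big)+\sum_{I,\mathbf b:\,S(\mathbf b)\neq\emptyset}\frac{(1-\lambda)^{n-\abs{I}}\lambda^{\abs{I}}}{2^{\abs{I}}}\Big(\mc{R}\big(\Phi^{(I,\mathbf b)}\big)+\mc{R}\big(\Psi^{(I,\mathbf b)}\big)\Big).
\end{equation}
Here $\mc{R}(\Psi^{(I,\mathbf b)})\le C_\Psi$ by definition, and iterating the magic-capacity inequality over the $\abs{S(\mathbf b)}$ surviving gates yields $\mc{R}(\Phi^{(I,\mathbf b)})\le 5^{\abs{S(\mathbf b)}}\mc{R}(\Psi^{(I,\mathbf b)})\le 5^{\abs{S(\mathbf b)}}C_\Psi$, so the bracket is at most $C_\Psi(5^{\abs{S(\mathbf b)}}+1)$.

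It then remains to reorganize the combinatorial sum $\sum_{I,\mathbf b:\,S\neq\emptyset}\frac{(1-\lambda)^{n-\abs{I}}\lambda^{\abs{I}}}{2^{\abs{I}}}(5^{\abs{S(\mathbf b)}}+1)$ into the stated form. I would first apply the elementary pointwise bound $5^{\abs{S}}+1\le\sum_{\emptyset\neq J\subseteq S}(5^{\abs{J}}+1)=6^{\abs{S}}+2^{\abs{S}}-2$ (valid for $\abs{S}\ge1$), and then exchange the order of summation so that the outer sum runs over nonempty $J\subset[K]$ and the inner sum over all $(I,\mathbf b)$ with $J\subseteq S(\mathbf b)$. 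The inner weight factorizes over qubits: a qubit in $\cup_{j\in J}e_j$ contributes $(1-\lambda)+\lambda/2=1-\lambda/2$ (it must carry $b=1$ if traced out), while every other qubit contributes $(1-\lambda)+\lambda=1$. Hence $\sum_{I,\mathbf b:\,J\subseteq S(\mathbf b)}\frac{(1-\lambda)^{n-\abs{I}}\lambda^{\abs{I}}}{2^{\abs{I}}}=(1-\lambda/2)^{\abs{\cup_{j\in J}e_j}}$, which reproduces exactly the factor in \eqref{eq:HighDegreeEdgeFragile} and completes the proof.

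The main obstacle, and the step I would handle most carefully, is the bookkeeping of the partial-trace rule for the added edges, namely correctly identifying the survival condition $e_j\cap I\subseteq\{i:b_i=1\}$ and verifying that stacking the surviving controlled-$Z$ gates multiplies $\mc{R}$ by at most $5^{\abs{S(\mathbf b)}}$ (including the degenerate case $e_j\subseteq I$, where the ``gate'' acts as a global phase and the factor may be taken to be $1$). The combinatorial reorganization is then routine once the pointwise inequality and the qubit-wise factorization of the weights are in place; the slack in $5^{\abs{S}}+1\le 6^{\abs{S}}+2^{\abs{S}}-2$ is precisely what lets the clean over-counted $J$-sum dominate the exact surviving-set sum.
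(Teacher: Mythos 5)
Your proposal is correct and follows essentially the same route as the paper's proof: the same partial-trace decomposition isolating $\mc{E}_\lambda^{\otimes n}(\Psi)$, the same convexity step, the same iterated use of $\mc{C}(\mathrm{C}^{m-1}Z)<5$, and the same qubit-wise factorization yielding $(1-\lambda/2)^{\abs{\cup_{j\in J}e_j}}$. The only cosmetic difference is where the over-counting is placed — you relax the coefficient via $5^{\abs{S}}+1\le\sum_{\emptyset\neq J\subseteq S}(5^{\abs{J}}+1)$ and then compute the weight of $\{(I,\mathbf{b}):J\subseteq S(\mathbf{b})\}$ exactly, while the paper partitions by the exact survival set $C_{I,J}$ and relaxes the cardinality bound $\abs{C_{I,J}}\le 2^{\abs{I_2}}$; these are the same inequality reorganized.
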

The detailed proof is given in Appendix~\ref{app:proof_of_HighDegreeEdgeFragile}.
Here we outline the key idea behind the proof. 
By Eq.~\eqref{eq:rule_of_partialtrace}, the difference $\mc{E}_{\lambda}^{\otimes n}(\Phi)-\mc{E}_{\lambda}^{\otimes n}(\Psi)$ can be expressed as $\sum_{I\subset[n]}(1-\lambda)^{n-\abs{I}}\lambda^{\abs{I}}\frac{1}{2^{\abs{I}}}\sum_{\s\in\mbb{Z}_2^{\abs{I}}}\left[\Phi^{(I,\s)}-\Psi^{(I,\s)}\right]\otimes\frac{\mbb{I}_I}{2^{\abs{I}}}$. Treating $(1-\lambda)^{n-\abs{I}}\lambda^{\abs{I}}\frac{1}{2^{\abs{I}}}$ as a joint probability distribution over $(I,\mathbf{s})$, we find that with probability at most $\left(1-\frac{\lambda}{2}\right)^{\abs{\cup_{j\in J}e_j}}$, $\Phi^{(I,\s)}$ differs from $\Psi^{(I,\s)}$ for edges in $J$. Using Eq.~\eqref{eq:MagicCapacity_CCCZ_UBis5} we know that $\mc{R}(\Phi^{(I,\s)})\le5^{\abs{J}}\mc{R}(\Psi^{(I,\s)})\le5^{\abs{J}}C_{\Psi}$. Summing over all possible $\emptyset\neq J\subset[K]$ and leveraging convexity of $\mc{R}(\cdot)$, we obtain Eq.~\eqref{eq:HighDegreeEdgeFragile}.

According to the exponent $\abs{\cup_{j\in J}e_j}$ in Eq.~\eqref{eq:HighDegreeEdgeFragile}, edges with higher degrees and less overlap are less robust against noise.
Consider the following two cases that illustrate the fragility of edges with degree $\ge m$:
\emph{Case~1}: The edges $\{e_j\}$ are disjoint, with $K=\mc{O}(\operatorname{poly}(m))$ and $\abs{e_j}\ge m$ for all $j$.
\emph{Case~2}: $\{e_j\}$ are placed arbitrarily, with $K=\mc{O}(\log(m))$ and $\abs{e_j}\ge m$ for all $j$.
In both cases, applying Theorem~\ref{thm:HighDegreeEdgeFragile}, we obtain the bound
\begin{equation}
\mc{R}\left(\mc{E}_{\lambda}^{\otimes n}\left(\Phi\right)\right)\le\mc{R}\left(\mc{E}_{\lambda}^{\otimes n}\left(\Psi\right)\right)+C_{\Psi}\operatorname{poly}(m)(1-\lambda/2)^{m}.
\end{equation}
Specifically, if we take $\Psi=\ketbra{+^n}{+^n}$ in case 2, then $\mc{R}\left(\mc{E}_{\lambda}^{\otimes n}\left(\Phi\right)\right)\le1+\operatorname{poly}(m)(1-\lambda/2)^m$, indicating that $\Phi$ has magic threshold $\lambda^*_\epsilon=\mc{O}(1/m)$.
We point out that with $K=2^m$ edges, the bound does not hold for Case 2 . 
Consider taking $\Psi=\ketbra{+^n}{+^n}$ and adding $2^m$ edges to $\Psi$ to obtain $\Phi$, each containing qubits $1,2,3$ and $m$ other qubits not included in any other edge.
As $m\rightarrow\infty$, $\Tr_{\overline{[3]}}(\Phi)$ approaches $(1-q)\ketbra{+^3}{+^3}+q\ketbra{\mathrm{CC}Z}{\mathrm{CC}Z}$ for $q=(1-e^{-2})/2$, which has RoM $1.6725$. Therefore, by Proposition~\ref{prop:LocalMagic_then_Threshold} below, we know $\Phi$ has a non-vanishing magic threshold.
There remains a significant gap between $\mc{O}(\operatorname{log}(m))$ and $2^m$, which we leave as a question.

\section{Local Magic and Magic Threshold}

For an $n$-qubit state $\rho$, define its maximum $K$-local RoM by $\mc{M}_K(\rho):=\max_{J\subset[n],\abs{J}=K}\mc{R}\left(\Tr_{\overline{J}}(\rho)\right)$, where $\overline{J}:=[n]-J$.
For a family of states $\{\rho_n\}$, if there exists a constant $K$ such that $\underset{n\rightarrow\infty}{\liminf}~\mc{M}_K(\rho_n)>1$ (or $\underset{n\rightarrow\infty}{\lim}~\mc{M}_K(\rho_n)=1$), we say $\{\rho_n\}$ has \emph{non-vanishing (or vanishing) local magic}.

\begin{prop}\label{prop:LocalMagic_then_Threshold}
Suppose $\{\rho_n\}$ is a family of states with non-vanishing local magic, then it has a non-vanishing magic threshold.
\end{prop}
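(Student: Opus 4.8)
The plan is to exploit that, under the hypothesis, a constant number $K$ of qubits already carries a bounded-below amount of magic, and that on a \emph{fixed}-dimensional subsystem weak noise cannot destroy magic at a rate that depends on $n$. The whole argument thus reduces the claim about the full noisy state $\mc{E}_\lambda^{\otimes n}(\rho_n)$ to a claim about a single $K$-qubit marginal, where $K$ is the constant furnished by the non-vanishing local magic assumption.

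First I would fix $n$, pick the subset $J\subset[n]$ with $\abs{J}=K$ that attains the maximum in $\mc{M}_K(\rho_n)$, and set $\sigma_n:=\Tr_{\overline{J}}(\rho_n)$, so that $\mc{R}(\sigma_n)=\mc{M}_K(\rho_n)$. Two structural facts then push the noise through the partial trace. (a) Partial trace is a trace-preserving stabilizer channel, so by property ii) it cannot increase RoM, giving $\mc{R}\big(\mc{E}_\lambda^{\otimes n}(\rho_n)\big)\ge\mc{R}\big(\Tr_{\overline{J}}(\mc{E}_\lambda^{\otimes n}(\rho_n))\big)$. (b) Since the noise is a tensor product of single-qubit channels and $\mc{E}_\lambda$ is trace-preserving, tracing out $\overline{J}$ commutes with the noise on $J$ and annihilates the noise on $\overline{J}$, so $\Tr_{\overline{J}}(\mc{E}_\lambda^{\otimes n}(\rho_n))=\mc{E}_\lambda^{\otimes K}(\sigma_n)$. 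Combining, $\mc{R}\big(\mc{E}_\lambda^{\otimes n}(\rho_n)\big)\ge\mc{R}\big(\mc{E}_\lambda^{\otimes K}(\sigma_n)\big)$, and it suffices to keep this $K$-qubit quantity above $1+\epsilon$ for a fixed $\epsilon>0$ and all small $\lambda$.

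Next I would lower-bound $\mc{R}\big(\mc{E}_\lambda^{\otimes K}(\sigma_n)\big)$ uniformly in $n$. Decompose $\mc{E}_\lambda^{\otimes K}(\sigma_n)=(1-\lambda)^K\sigma_n+\big(1-(1-\lambda)^K\big)\omega$, where $\omega$ is the normalized remainder; the remainder is a positive, unit-trace operator (every other term in the expansion of $\mc{E}_\lambda^{\otimes K}$ applies a completely positive map to $\sigma_n$ with a positive weight), hence $\omega$ is a genuine $K$-qubit state. Solving this identity for $\sigma_n$ and applying the reverse-triangle form of convexity (property iv), with coefficients summing to one) yields $\mc{R}\big(\mc{E}_\lambda^{\otimes K}(\sigma_n)\big)\ge(1-\lambda)^K\mc{R}(\sigma_n)-\big(1-(1-\lambda)^K\big)M_K$, where $M_K:=\sup_\tau\mc{R}(\tau)$ over $K$-qubit states is a finite constant because the subsystem dimension $2^K$ is fixed (indeed $M_K=\mc{O}(2^K)$). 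Setting $c_0:=\tfrac{1}{2}\big(1+\liminf_n\mc{M}_K(\rho_n)\big)>1$, we have $\mc{R}(\sigma_n)\ge c_0$ for all large $n$, so the right-hand side is bounded below by the $n$-independent function $g(\lambda):=(1-\lambda)^K c_0-\big(1-(1-\lambda)^K\big)M_K$, which is continuous with $g(0)=c_0>1$. Choosing $\epsilon:=(c_0-1)/2$ and $\lambda_0>0$ small enough that $g(\lambda)>1+\epsilon$ for all $\lambda<\lambda_0$, we get $\mc{R}\big(\mc{E}_\lambda^{\otimes n}(\rho_n)\big)>1+\epsilon$ for every $\lambda<\lambda_0$ and all large $n$; therefore $\lambda^*_\epsilon(\rho_n)\ge\lambda_0$ eventually, i.e.\ $\liminf_n\lambda^*_\epsilon(\rho_n)\ge\lambda_0>0$.

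The one genuinely nontrivial point, and the step I expect to be the main obstacle, is obtaining a lower bound on RoM that is uniform in $n$. The natural lower bound $\mc{D}$ used elsewhere in the paper is unsuitable here, since $\mc{D}(\sigma_n)>1$ need not follow from $\mc{R}(\sigma_n)>1$. The resolution is to avoid $\mc{D}$ altogether and instead invert the convex decomposition of the noisy marginal, which directly transfers the robustness of $\sigma_n$ to $\mc{E}_\lambda^{\otimes K}(\sigma_n)$; the finiteness of $M_K$, equivalently the fact that only a fixed number $K$ of qubits is ever involved, is precisely what makes the resulting $\lambda_0$ independent of $n$.
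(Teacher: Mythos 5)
Your proposal is correct, and its first half coincides with the paper's proof: both arguments pick the maximizing $K$-qubit subset $J_n$, use monotonicity of RoM under the (stabilizer) partial-trace channel, and use the fact that tracing out $\overline{J_n}$ commutes with the product noise to reduce everything to the single inequality $\mc{R}\big(\mc{E}_\lambda^{\otimes n}(\rho_n)\big)\ge\mc{R}\big(\mc{E}_\lambda^{\otimes K}(\Tr_{\overline{J_n}}(\rho_n))\big)$. Where you genuinely diverge is in the key quantitative step on the $K$-qubit marginal. You invert the affine decomposition $\mc{E}_\lambda^{\otimes K}(\sigma_n)=(1-\lambda)^K\sigma_n+\big(1-(1-\lambda)^K\big)\omega$ (your identification of the remainder $\omega$ as a genuine state is valid, since every non-identity term in the expansion of $\mc{E}_\lambda^{\otimes K}$ is completely positive with positive weight) and apply the signed-convexity property iv) together with the finiteness of $M_K=\sup_\tau\mc{R}(\tau)$, obtaining the explicit, $n$-independent lower bound $\mc{R}\big(\mc{E}_\lambda^{\otimes K}(\sigma_n)\big)\ge(1-\lambda)^K c_0-\big(1-(1-\lambda)^K\big)M_K$ and hence the explicit threshold $\lambda_0=1-\big((1+\epsilon+M_K)/(c_0+M_K)\big)^{1/K}$. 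The paper instead proves an auxiliary geometric lemma (Lemma~\ref{lem:Distance_LB_by_RoM}): any $K$-qubit state with $\mc{R}\ge b$ is at trace distance at least a constant $C'$ from every state with $\mc{R}\le a<b$ (proved by contradiction, using the same finiteness of RoM in fixed dimension that you use), and then bounds the displacement $\onenorm{\mc{E}_\lambda^{\otimes K}(\tau)-\tau}\le3\big(1-(1-\lambda)^K\big)$, concluding $\lambda^*_\epsilon\ge1-(1-C'/3)^{1/K}$. The two routes rest on the same essential fact---RoM is uniformly bounded on a fixed number of qubits---but yours is more direct and yields a cleaner closed-form threshold, while the paper's is more modular: its distance lemma is a reusable statement about the geometry of the stabilizer polytope (robustness gaps imply trace-norm gaps), at the cost of a less explicit constant $C'$. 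One cosmetic remark: your choice $c_0=\tfrac12\big(1+\liminf_n\mc{M}_K(\rho_n)\big)$ is automatically finite since $\mc{M}_K(\rho_n)\le M_K$, and your conclusion for the specific $\epsilon=(c_0-1)/2$ extends to all smaller $\epsilon$ by monotonicity of $\lambda^*_\epsilon$ in $\epsilon$, matching the paper's phrasing ``for any small enough $\epsilon\ge0$.''
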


The full proof is provided in Appendix~\ref{app:proof_LocalMagic_then_Threshold}. Here we sketch the main idea. Let $J_n$ ($\abs{J_n}=K$) denote the $K$ qubits on which $\rho_n$ has maximum RoM.
Since partial trace commutes with local noises, to ensure $\mc{E}_\lambda^{\otimes n}(\rho_n)\in\stab_n$, we must have $\mc{E}_\lambda^{\otimes K}\left(\Tr_{\overline{J_n}}\left(\rho_n\right)\right)\in\stab_K$. Since the $K$-qubit state $\Tr_{\overline{J_n}}\left(\rho_n\right)$ has RoM $\ge$ constant for all sufficiently large $n$, it can withstand constant rate of noises before falling into $\stab_K$.

However, the opposite direction of Proposition~\ref{prop:LocalMagic_then_Threshold} does not necessarily hold. In fact, we show that:
\begin{thm}\label{thm:3-complete-example}
There exist efficiently preparable families of many-body states with vanishing local magic yet a non-vanishing magic threshold.
\end{thm}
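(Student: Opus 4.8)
The plan is to prove the theorem by exhibiting an explicit family and checking the two required properties against the stated definitions; the natural candidate is the \emph{$3$-complete hypergraph state} family $\{\rho_n=\dm{\Psi_n}\}$, where $\Psi_n$ carries all $\binom{n}{3}$ degree-$3$ edges, i.e.\ has characteristic function $f_{\Psi_n}(x)=\sum_{i<j<k}x_ix_jx_k=e_3(x)$. Efficient preparability is immediate, since $\Psi_n$ is obtained from $\ket{+^n}$ by $\binom{n}{3}=\mc{O}(n^3)$ $\mathrm{CCZ}$ gates. It then remains to establish (i) a non-vanishing magic threshold and (ii) vanishing local magic; together with Proposition~\ref{prop:LocalMagic_then_Threshold} this exhibits the desired one-sided separation.

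For (i), I would lower bound $\mc{R}(\mc{E}_\lambda^{\otimes n}(\rho_n))$ using Eq.~\eqref{eq:LB}, i.e.\ through $\mc{D}(\mc{E}_\lambda^{\otimes n}(\rho_n))=\frac{1}{2^n}\sum_P(1-\lambda)^{\wt(P)}\abs{\Tr(P\rho_n)}$. The one ingredient needed is the Pauli spectrum $\Tr(P\rho_n)$, which for a hypergraph state collapses to a character sum $2^{-n}\sum_x(-1)^{f_{\Psi_n}(x)+f_{\Psi_n}(x+a)+b\cdot x}$ for $P\sim X^aZ^b$. Because the discrete derivative $f_{\Psi_n}(x+a)+f_{\Psi_n}(x)$ of the cubic $e_3$ is a \emph{quadratic} form in $x$, each such sum is a Gauss sum equal to $0$ or $\pm 2^{(n+r_a)/2}$, with $r_a$ controlled by the rank of that form. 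The calculation is then to tabulate these ranks over $a$, weight them by $(1-\lambda)^{\wt(P)}$, and sum; one should find a lower bound scaling as $2^{(n-3)/2}(1-3\lambda/4)^n$ up to polynomial factors. This diverges exactly when $\sqrt{2}\,(1-3\lambda/4)>1$, i.e.\ for every $\lambda$ below a positive constant ($\approx 0.39$), forcing $\mc{R}(\mc{E}_\lambda^{\otimes n}(\rho_n))>1+\epsilon$ for all large $n$ and hence $\liminf_n\lambda^*_\epsilon(\rho_n)>0$.

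For (ii), I would use permutation symmetry to reduce $\mc{M}_K(\rho_n)$ to $\mc{R}(\Tr_{\overline{[K]}}(\rho_n))$ and apply Eq.~\eqref{eq:rule_of_partialtrace}. The structural observation is that, since \emph{every} triple is an edge, fixing the traced-out qubits to a string $\mathbf{b}$ applies $\mathrm{CZ}$ to all pairs in $[K]$ iff $\abs{\mathbf{b}}$ is odd and $Z$ to all of $[K]$ iff $\binom{\abs{\mathbf{b}}}{2}$ is odd, so the conditioned pure state depends on $\mathbf{b}$ only through $\abs{\mathbf{b}}\bmod 4$. Hence $\Tr_{\overline{[K]}}(\rho_n)=\sum_{r=0}^{3}p_r^{(n)}\rho_r^{(K)}$ with four Clifford-equivalent components and weights $p_r^{(n)}=\tfrac14+\mc{O}(2^{-n/2})$, the latter from a roots-of-unity estimate on $\sum_{t\equiv r\,(4)}\binom{n-K}{t}$. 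Computing the matrix elements of the uniform mixture and using $e_k(x)=\binom{\abs{x}}{k}$, the average over the four components imposes $\abs{x}\equiv\abs{y}\pmod 4$ and trivializes the residual sign, giving $\langle x|\tfrac14\sum_r\rho_r^{(K)}|y\rangle=2^{-K}\,\mathbbm{1}\{\abs{x}\equiv\abs{y}\ (\mathrm{mod}\ 4)\}$, which I recognize as the \emph{stabilizer} mixture $\tfrac14\big(\dm{+}^{\otimes K}+\dm{-}^{\otimes K}+\dm{+i}^{\otimes K}+\dm{-i}^{\otimes K}\big)$. Writing $\Tr_{\overline{[K]}}(\rho_n)=\tfrac14\sum_r\rho_r^{(K)}+\sum_r(p_r^{(n)}-\tfrac14)\rho_r^{(K)}$ and invoking the convexity of $\mc{R}(\cdot)$ with real coefficients summing to $1$ then yields $\mc{M}_K(\rho_n)\le 1+\sum_r\abs{p_r^{(n)}-\tfrac14}\,\mc{R}(\rho_r^{(K)})=1+\mc{O}(2^{-n/2})\to 1$ for each fixed $K$, i.e.\ vanishing local magic.

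I expect part (i) to be the main obstacle. Part (ii) is clean once one spots that the limiting reduced state is the four-state stabilizer mixture above, after which only the elementary weight estimate remains. By contrast, the threshold lower bound rests on genuinely controlling the entire Gauss-sum spectrum of $e_3$: the delicate step is to show that a constant fraction of directions $a$ produce near-maximal sums $\sim 2^{n/2}$ rather than vanishing ones, and to carry the $(1-\lambda)^{\wt(P)}$ weighting through the rank enumeration so that $\mc{D}(\mc{E}_\lambda^{\otimes n}(\rho_n))$ provably diverges below the claimed constant. That rank-and-weight bookkeeping is where the real effort lies.
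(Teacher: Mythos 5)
Your route coincides with the paper's: the same family $\{\Gamma_n\}$ of $3$-complete hypergraph states, the same $\mc{O}(n^3)$ preparability count, the lower bound \eqref{eq:LB} via $\mc{D}$ for the threshold, and the mod-$4$ partial-trace decomposition plus convexity for vanishing local magic (Appendix~\ref{app:3-complete}). The one genuinely different ingredient is how you certify that the limiting reduced state is a stabilizer state. The paper's Lemma~\ref{lemma:3complete_1/4mixed_is_STAB} argues indirectly: $\tfrac14\sum_{l}\Phi_l$ is shown to be invariant under $\prod_{\abs{e}=3}\mathrm{CC}Z_e$, hence equals the mixture of the Clifford-conjugated states $\Phi_l'$, which are stabilizer states. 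You instead compute the matrix elements, $\bra{\x}\tfrac14\sum_l\Phi_l\ket{\y}=2^{-K}$ when $\abs{\x}\equiv\abs{\y}\pmod 4$ and $0$ otherwise, and recognize this as $\tfrac14\big(\dm{+}^{\otimes K}+\dm{-}^{\otimes K}+\dm{+i}^{\otimes K}+\dm{-i}^{\otimes K}\big)$; this identity is correct (with $t=\abs{\x}-\abs{\y}$, the sum $1+(-1)^t+i^t+(-i)^t$ vanishes unless $t\equiv0\pmod 4$, and the residual sign $(-1)^{\binom{\abs{\x}}{3}+\binom{\abs{\y}}{3}}$ is $+1$ on that support), and it is more explicit than the paper's argument, since it names the limit state rather than only certifying membership in $\stab$. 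The rest of your part (ii) --- the $\tfrac14+\mc{O}(2^{-n/2})$ weight estimate and the convexity splitting --- is exactly Proposition~\ref{prop:3complete_localmagic_UB}. For part (i), your plan is precisely what the paper executes, and the rank-and-weight bookkeeping you flag as the main obstacle degenerates to just two cases: for $\abs{\x}$ even and nonzero the spectrum value is $\tfrac12$, attained only at $\z\in\{0^n,\x,1^n,\x+1^n\}$, and for $\abs{\x}$ odd it is $2^{-(n-1)/2}$ on the hyperplane $\x\cdot\z=0$, zero otherwise. The odd-weight sector dominates the $(1-\lambda)^{\wt(P)}$-weighted sum and yields $\mc{D}\big(\mc{E}_\lambda^{\otimes n}(\Gamma_n)\big)\ge 2^{-\frac32-\frac32 n}(4-3\lambda)^n-\mc{O}(1)=2^{(n-3)/2}(1-3\lambda/4)^n-\mc{O}(1)$, diverging exactly for $\lambda<(4-2\sqrt2)/3\approx0.39$ --- your predicted scaling and constant. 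So nothing in your proposal would fail; the only unexecuted step is this spectrum tabulation, which the paper carries out (for odd $n$, the even case being similar) and which confirms your anticipated conclusion.
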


For example, consider the family of 3-complete hypergraph states $\{\Gamma_n\}$, where $\Gamma_n$ corresponds to the hypergraph with $n$ vertices and all possible edges of degree 3. Evidently, $\{\Gamma_n\}$ has gate complexity $\mc{O}(n^3)$.
For any constant $K$, the reduced density matrix of $\Gamma_n$ on $K$ qubits is a convex combination of four $(n-K)$-qubit hypergraph states, with all four coefficients approaching $1/4$ as $n \rightarrow \infty$, which can be proved to form a stabilizer state. Specifically, in Proposition~\ref{prop:3complete_localmagic_UB}, we show that $\mc{M}_K(\Gamma_n)\le1+2^{1+\frac{3}{2}K-\frac{n}{2}}$, which tends to 1 for all $K$ as $n \rightarrow \infty$, indicating that $\{\Gamma_n\}$ has vanishing local magic. Upper and lower bounds for $\mc{R}\left(\mc{E}_{\lambda}^{\otimes n}\left(\Gamma_n\right)\right)$  are derived in Appendix~\ref{app:3-complete} and summarized in Table~\ref{tab:threshold}, which leads to bounding the magic threshold for $\{\Gamma_n\}$ as $0.39 \leq \lambda^*_\epsilon \leq 0.78$, demonstrating a non-vanishing threshold.

Note that families of magic states generated by local interactions in finite spatial dimensions exhibit non-vanishing local magic and, by Proposition~\ref{prop:LocalMagic_then_Threshold}, non-vanishing magic thresholds. 
An example is the Union Jack lattice in 2D, which can support universal Pauli MBQC \cite{Miller2016hierarchy}.
On the other hand, local interactions with all-to-all connectivity can generate families of states with vanishing local magic yet non-vanishing magic thresholds.
$\{\Gamma_n\}$ represents such a family of many-body states that furthermore admits efficient preparation.
We stress that $\Gamma_n$ has a vanishing amount of magic even when restricted to $\Theta(n)$ qubits: dividing $\Gamma_n$ evenly among 4 parties results in a vanishing amount of magic for each party, as $\mc{M}_{n/4}(\Gamma_n)\le1+2^{1-n/8}\rightarrow1$.
However, the total magic remains substantial, as $\mc{R}(\Gamma_n)\ge\mc{D}(\Gamma_n)=\Theta(2^{n/2})$. That is, the ``non-local magic'' in these states is embedded in entanglement and can withstand a constant rate of noise. 
We anticipate that such states may find interesting applications through ``magic hiding'' or ``magic secret sharing''.

\section{Qudit cases: summary of results}\label{app:QuditManybodyStates}
We now briefly discuss the case of qudits, namely local dimension $d$, assuming $d$ is an odd prime number. 
The Wigner negativity \cite{Veitch2014resource} of an $n$-qudit state $\rho$ is defined as $\operatorname{sn}(\rho)=\sum_{\mathbf{u}:W_{\rho}(\mathbf{u})<0}\abs{W_{\rho}(\mathbf{u})}$ where $W_\rho(\mathbf{u})=\frac{1}{d^n} \operatorname{Tr}( A_{\mathbf{u}} \rho)$ is the discrete Wigner representation.
States with $\operatorname{sn}(\rho)=0$ are efficiently classically simulable \cite{Mari2012positiveWigner} and ineffective for magic state distillation. 
For all $\rho\in\mathrm{STAB}$ we have $\operatorname{sn}(\rho)=0$.
Additionally, we can set a lower bound for RoM in terms of Wigner negativity as $\mc{R}(\rho)\ge1+2\operatorname{sn}(\rho)$ \cite{Liu2022manybody}. 
More details for the definitions can be found in Appendix~\ref{app:DiscreteWignerfunction}.

Since $W_{\mc{E}_{\lambda}^{\otimes n}(\rho)}(\mathbf{u})=\frac{1}{d^n}\Tr\left(T_{\mathbf{u}}\mc{E}_{\lambda}^{\otimes n}\left(A_{\mathbf{0}}\right)T_{\mathbf{u}}^{\dagger}\rho\right)$, and $\mc{E}_{\lambda}^{\otimes n}\left(A_{\mathbf{0}}\right)$ becomes positive when $\lambda\ge\frac{d}{d+1}$, any $n$ qudit state can tolerate at most noise with intensity $\frac{d}{d+1}$ before becoming classical simulable. Furthermore, the single-qudit state $\ket{\psi}=\frac{1}{\sqrt{2}}(\ket{1}-\ket{d-1})$ maintains $\operatorname{sn}\left(\mc{E}_\lambda(\psi)\right)>0$ for all $\lambda<\frac{d}{d+1}$.
So for many-body states with local dimension $d$, the maximum magic threshold for Wigner negativity is $\frac{d}{d+1}$, which approaches $1$ as $d\rightarrow\infty$.
For the complete proof, see Appendix~\ref{app:maximumnoisethresholdforWignernegativity}.

An $n$-qudit hypergraph states \cite{Steinhoff2017QuditHypergraph,Xiong2018QuditHypergraph} (see Appendix~\ref{app:Qudithypergraphstates}) $\ket{\Psi}$ with characteristic function $f:\mbb{Z}_d^n\rightarrow\mbb{Z}_d$, can be written as $\ket{\Psi}=\frac{1}{\sqrt{d^n}}\sum_{\mathbf{s}\in\mbb{Z}_d^n}\omega^{f(\mathbf{s})}\ket{\mathbf{s}}$, where $\omega=e^{\frac{2\pi i}{d}}$.
Consider the $n$-qudit hypergraph state
\begin{equation}
\begin{aligned}
\ket{\mathrm{C}^{n-1}Z}:=\frac{1}{\sqrt{d^n}}\sum_{\s\in\mbb{Z}_d^n}\omega^{\prod_is_i}\ket{\s},
\end{aligned}
\end{equation}
and denote $\Phi_n=\ketbra{\mathrm{C}^{n-1}Z}{\mathrm{C}^{n-1}Z}$.
Unlike the qubit case, $\mc{R}(\Phi_n)$ is unbounded as $n$ increases, with its lower bound $1+2\operatorname{sn}(\Phi_n)$ growing rapidly with $n$ (Table~\ref{tab:qudit_CCCZ_sn}).
This indicates that a constant amount of noise is required to erase all its magic.
Analytically, we show in Appendix~\ref{app:quditCCCZLB} that for the qutrit $(d=3)$ case, for the family $\{\Phi_n\}_{n=3,7,11,\cdots}$ and $\mathbf{u}=(\mathbf{1},\mathbf{0})$, $W_{\mc{E}_\lambda^{\otimes n}(\Phi_n)}(\mathbf{u})$ remains negative for $\lambda\le0.42$ when $n$ is sufficiently large. 
This implies a magic threshold $\lambda^*_0 \ge 0.42$ for this family of states, in contrast to the threshold $\Theta(1/n)$ in the qubit case.
Numerical values of $1+2\operatorname{sn}\left(\mc{E}_\lambda^{\otimes n}(\Phi_n)\right)$ are shown in Fig.~\ref{fig:plot_N_qudit_sym_d3_d5}.
By applying Eq.~\eqref{eq:UB}, we derive in Appendix~\ref{app:quditCCCZUB} an upper bound for the noisy magic of $\Phi_n$ as
\begin{equation}
\mc{R}\left(\mc{E}_{\lambda}^{\otimes n}\left(\Phi_n\right)\right)\le1+4M_d(d-1)^n\big(1-\frac{d-1}{d}\lambda\big)^n,
\end{equation}
where $M_d := \sup_{\rho \in D(\mbb{C}^d)} \mc{R}(\rho)$. When $\lambda>\frac{d(d-2)}{(d-1)^2}$, this upper bound converges to $1$ exponentially with respect to $n$.

\begin{table}[htbp]
\centering
\resizebox{0.49\textwidth}{!}{
\begin{tabular}{|c|>{\centering\arraybackslash}m{0.13\linewidth} >{\centering\arraybackslash}m{0.13\linewidth} >{\centering\arraybackslash}m{0.13\linewidth} >{\centering\arraybackslash}m{0.13\linewidth} >{\centering\arraybackslash}m{0.13\linewidth}|}
\hline
$n$ & 3 & 4 & 5 & 6 & 7\\
\hline
$1+2\operatorname{sn}(\Phi_n)$ & 4.5555 & 5.7846 & 9.9368 & 12.7727 & 18.2254\\
\hline
\end{tabular}
}
\caption{The lower bound for $\mc{R}(\Phi_n)$, where $d=3$ and $\Phi_n=\ketbra{\mathrm{C}^{n-1}Z}{\mathrm{C}^{n-1}Z}$.}
\label{tab:qudit_CCCZ_sn}
\end{table}

\section{Discussion}

Motivated by the goal of understanding the impact of noise on quantum features,
we quantitatively studied the robustness of magic resource against standard noises in many-body quantum systems with versatile entanglement structures by leveraging the hypergraph state framework. Our findings, most notably the fragility of high-locality edges against noise and
efficiently constructible entangled states that provably possess ``purely global'' noise-robust magic, enrich the theory of many-body magic from both practical and mathematical perspectives.


We have some additional remarks.
Using similar techniques for proving Theorem~\ref{thm:HighDegreeEdgeFragile}, we can show that edges with degrees close to $n$ are particularly fragile under noise. As shown in Appendix~\ref{app:Fragility_of_bigbigEdges}, if $\Phi$ is obtained by adding arbitrary many edges with degrees $\ge n-c$ from $\Psi$, where $c$ is a constant, then we have
$\mc{R}\left(\mc{E}_{\lambda}^{\otimes n}\left(\Phi\right)\right)\le\mc{R}\left(\mc{E}_{\lambda}^{\otimes n}\left(\Psi\right)\right)+\operatorname{poly}(n)(1-\lambda/2)^n$.
Moreover, direct analysis of gates, particularly $\mathrm{C}^{n-1}Z$ gates which are commonly used in quantum algorithms, is compelling but more difficult.
As a preliminary result, we consider the dephasing noise $\mc{P}_\lambda$ with Kraus operators $E_0=\ketbra{0}{0}+\sqrt{1-\lambda}\ketbra{1}{1}$ and $E_1=\sqrt{\lambda}\ketbra{1}{1}$, and find by numerics that the vanishing point of the magic capacity \cite{Seddon2019quantifying} of $\mathrm{CC}Z$ composed with $\mc{P}^{\otimes 3}_\lambda$ occurs at $\lambda\approx0.645$.
We conjecture that the magic capacity threshold for $\mathrm{CC}Z$ under local depolarizing noise is $1/3$.

Various intriguing open questions remain. For instance, it would be compelling to identify the most robust hypergraph states. Moreover, the necessary and sufficient conditions for a family of states to have a non-vanishing magic threshold have yet to be established. It is also natural to expect connections between the noise robustness and MBQC power, which we leave for future study.
To conclude, our results enrich the field of many-body magic theory, and are anticipated to offer new perspectives and applications in the contexts of e.g.~quantum algorithm and circuit design, and quantum networks, which are worth further exploring as well. 

\begin{acknowledgments}
We thank Huihui Li, Yuguo Shao, Xin Wang, and Yuxuan Yan for insightful discussions. F.W. is supported by BMSTC and ACZSP (Grant No.~Z221100002722017). Z.-W.L.\ is supported in part by a startup funding from YMSC, Tsinghua University, and NSFC under Grant No.~12475023.
\end{acknowledgments}

\bibliography{MagicNoise}

\clearpage
\appendix
\onecolumngrid

\section{Rules of partial trace for hypergraph states}\label{app:proof_of_HGstate_PartialTrace}

\begin{lem}\label{lemma:HGstate_PartialTrace}
Let $\Psi$ be a $n$-qubit hypergraph state. For $I\subset [n]$ a subset of qubits, consider tracing out the qubits in $I$, we have
\begin{equation}
\Tr_{I}(\Psi)=\frac{1}{2^{\abs{I}}}\sum_{\mathbf{b}\in\mbb{Z}_2^{\abs{I}}}\Psi^{(I,\mathbf{b})},
\end{equation}
where $\Psi^{(I,\mathbf{b})}$ is the $(n-\abs{I})$-qubit hypergraph state obtained from $\Psi$ by \begin{enumerate}
    \item Remove the qubits labeled with ``0" along with the edges containing these qubits.
    \item Remove the qubits labeled with ``1".
    \item Keep the remaining edges with odd multiplicity, remove those with even multiplicity.
\end{enumerate}
\end{lem}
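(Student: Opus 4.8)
The plan is to compute the partial trace by inserting a resolution of the identity on the traced-out register $I$ in the computational basis, and then recognize each resulting term as a hypergraph state whose characteristic function is obtained from $f_\Psi$ by fixing the variables sitting on $I$. Concretely, writing $\bar I=[n]\setminus I$ and splitting each bit string as $\x=(\x_I,\x_{\bar I})$, one has
\[
\Tr_I(\Psi)=\sum_{\mathbf{b}\in\mbb{Z}_2^{\abs{I}}}\big(\bra{\mathbf{b}}_I\otimes\mbb{I}_{\bar I}\big)\,\Psi\,\big(\ket{\mathbf{b}}_I\otimes\mbb{I}_{\bar I}\big).
\]
Using $\ket{\Psi}=2^{-n/2}\sum_{\x}(-1)^{f_\Psi(\x)}\ket{\x}$, the projection gives $\bra{\mathbf{b}}_I\ket{\Psi}=2^{-\abs{I}/2}\ket{\phi_{\mathbf{b}}}$, where $\ket{\phi_{\mathbf{b}}}=2^{-(n-\abs{I})/2}\sum_{\x_{\bar I}}(-1)^{g_{\mathbf{b}}(\x_{\bar I})}\ket{\x_{\bar I}}$ is a normalized $(n-\abs{I})$-qubit hypergraph state with characteristic function $g_{\mathbf{b}}(\x_{\bar I}):=f_\Psi(\mathbf{b},\x_{\bar I})$. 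This already produces the prefactor $2^{-\abs{I}}$ and reduces the claim to showing that $\ketbra{\phi_{\mathbf{b}}}{\phi_{\mathbf{b}}}=\Psi^{(I,\mathbf{b})}$, i.e.\ that the restricted function $g_{\mathbf{b}}$ describes exactly the hypergraph produced by the three rules.

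The core of the argument is then a monomial-by-monomial analysis of $g_{\mathbf{b}}$. Substituting $x_i=b_i$ for $i\in I$ into $f_\Psi(\x)=\sum_{e\in E}\prod_{i\in e}x_i$ turns each edge monomial into $\big(\prod_{i\in e\cap I}b_i\big)\prod_{i\in e\cap\bar I}x_i$. The constant factor $\prod_{i\in e\cap I}b_i$ vanishes whenever some qubit of $e\cap I$ carries the label $0$, which is precisely the prescription to \emph{delete every edge meeting a ``$0$''-qubit} (rule 1); when all qubits of $e\cap I$ carry label $1$ the factor equals $1$ and the surviving monomial is $\prod_{i\in e\setminus I}x_i$, exactly the edge $e$ with its ``$1$''-labeled qubits removed (rule 2). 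Collecting contributions gives $g_{\mathbf{b}}(\x_{\bar I})=\sum_{e:\,e\cap I\subseteq\{i:b_i=1\}}\prod_{i\in e\setminus I}x_i$, an arithmetic sum over $\mbb{Z}_2$.

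Finally I would resolve the $\mbb{Z}_2$ bookkeeping. Since each variable appears to the first power, two reduced monomials coincide iff they index the same subset $S\subseteq\bar I$, so distinct original edges $e,e'$ with $e\cap\bar I=e'\cap\bar I$ collapse onto the same term and, being added mod $2$, cancel in pairs; hence $S$ survives in $g_{\mathbf{b}}$ iff the number of surviving edges reducing to it is odd, which is exactly rule 3 (keep odd multiplicity, drop even multiplicity). One subtlety deserves a remark: an edge $e\subseteq I$ all of whose qubits are labeled $1$ reduces to the empty monomial, i.e.\ a constant term in $g_{\mathbf{b}}$; such a term contributes only a global sign to $\ket{\phi_{\mathbf{b}}}$ and therefore leaves $\ketbra{\phi_{\mathbf{b}}}{\phi_{\mathbf{b}}}=\Psi^{(I,\mathbf{b})}$ unchanged, consistent with discarding degree-$0$ edges. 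I expect the main obstacle to be this last step, namely keeping the mod-$2$ multiplicity count and the empty-edge/global-phase cases straight, rather than any analytic difficulty—the remainder is a direct expansion.
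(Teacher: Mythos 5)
Your proposal is correct and takes essentially the same route as the paper: insert a computational-basis resolution of the identity on $I$, pull out the factor $2^{-\abs{I}}$, and identify the restricted function $f_\Psi(\mathbf{b},\cdot)$ with the characteristic function of $\Psi^{(I,\mathbf{b})}$. If anything, your write-up is more complete than the paper's, which justifies that key identification (its Eq.~(A3)) only via an illustrative example, whereas you prove it monomial-by-monomial in general and correctly handle the edge case $e\subseteq I$ producing a constant term, i.e.\ a harmless global phase.
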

An example is depicted in Fig.~\ref{fig:HGstate_PartialTrace}.
A similar result holds for qudit hypergraphs, as shown in Lemma~\ref{lemma:HGstateQudit_PartialTrace}.
\begin{proof}
By definition of partial trace, we have
\begin{equation}\label{eq:A1}
\begin{aligned}
\Tr_{I}(\Psi)&=\sum_{\mathbf{b}\in\mbb{Z}_2^{\abs{I}}}\bra{\mathbf{b}}_{I}\otimes\mbb{I}_{\overline{I}}\Psi\ket{\mathbf{b}}_{I}\otimes\mbb{I}_{\overline{I}},\\
\end{aligned}
\end{equation}
where $\overline{I}=[n]-I$, $\ket{\mathbf{b}}_{I}$ is the computational basis state on $I$ and $\mbb{I}_{\overline{I}}$ be the identity operator on $\overline{I}$.
Without loss of generality, we assume $I$ be the first $\abs{I}$ qubits,
note that
\begin{equation}\label{eq:A2}
\begin{aligned}
\bra{\mathbf{b}}_{I}\otimes\mbb{I}_{\overline{I}}\ket{\Psi}=&\bra{\mathbf{b}}_{I}\otimes\mbb{I}_{\overline{I}}\frac{1}{\sqrt{2^n}}\sum_{\x\in\mbb{Z}_2^{n}}(-1)^{f_{\Psi}(\x)}\ket{\x}\\
=&\frac{1}{\sqrt{2^{\abs{I}}}}\frac{1}{\sqrt{2^{n-\abs{I}}}}\sum_{\y\in\mbb{Z}_2^{n-\abs{I}}}(-1)^{f_{\Psi}(\mathbf{b},\y)}\ket{\y}\\
=&\frac{1}{\sqrt{2^{\abs{I}}}}\frac{1}{\sqrt{2^{n-\abs{I}}}}\sum_{\y\in\mbb{Z}_2^{n-\abs{I}}}(-1)^{f_{\Psi^{(I,\mathbf{b})}}(\y)}\ket{\y}\\
=&\frac{1}{\sqrt{2^{\abs{I}}}}\ket{\Psi^{(I,\mathbf{b})}},
\end{aligned}
\end{equation}
where the third equality is given by 
\begin{equation}\label{eq:A3}
f_{\Psi}(\mathbf{b},\y)=f_{\Psi^{(I,\mathbf{b})}}(\y).
\end{equation}
We use an illustrative example to show the correctness of Eq.~\eqref{eq:A3}. For $\Psi$ shown in Fig.~\ref{fig:HGstate_PartialTrace}(a), we have $f_\Psi(\x)=x_1x_2x_3x_4+x_1x_2x_3+x_2x_3x_4+x_3x_4$, thus
\begin{equation}
\begin{aligned}
f_{\Psi}(0,y_2,y_3,y_4)&=y_2y_3y_4+y_3y_4=f_{\Psi^{(\{1\},0)}}(\y),\\
f_{\Psi}(1,y_2,y_3,y_4)&=y_2y_3y_4+y_2y_3+y_2y_3y_4+y_3y_4=y_2y_3+y_3y_4=f_{\Psi^{(\{1\},1)}}(\y),\\
\end{aligned}
\end{equation}
where $\y=(y_2,y_3,y_4)$. Combine Eq.~\eqref{eq:A1} and \eqref{eq:A2} we have
\begin{equation}
\Tr_{I}(\Psi)=\frac{1}{2^{\abs{I}}}\sum_{\mathbf{b}\in\mbb{Z}_2^{\abs{I}}}\Psi^{(I,\mathbf{b})}.
\end{equation}
\end{proof}

\begin{figure}[t]
\centering
\includegraphics[width=0.49\textwidth]{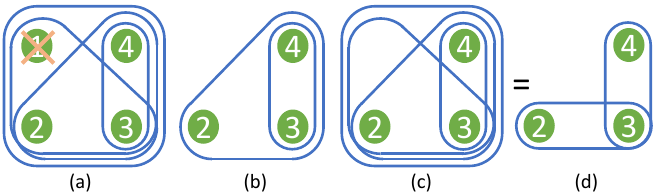}
\caption{Sketch of the rules for partial trace on hypergraph states. (a) We trace out qubit $1$ from the hypergraph state $\Psi$ with edges $\{1,2,3,4\},\{1,2,3\},\{2,3,4\},\{3,4\}$. (b) When qubit $1$ is labeled by $b=0$, remove qubit $1$ and the edges containing it. The resulting state $\Psi^{(\{1\},0)}$ has edges $\{2,3,4\},\{3,4\}$. (c) When qubit $1$ is labeled by $b=1$, remove the qubit $1$ while keeping the edges containing it. Remove the edges $\{2,3,4\}$ which appears with even multiplicity. (d) $\Psi^{(\{1\},1)}$ has edges $\{2,3\},\{3,4\}$. Finally, we can write $\Tr_{\{1\}}(\Psi)=\frac{1}{2}\Psi^{(\{1\},0)}+\frac{1}{2}\Psi^{(\{1\},1)}$.
}
\label{fig:HGstate_PartialTrace}
\end{figure}

\section{Example: $\ket{\mathrm{CC}Z}$}\label{app:CCZ}

For an $n$-qubit state $\rho$, its RoM can be computed by the following dual optimization problem:
\begin{equation}\label{eq:RoM_dual}
\begin{aligned}
\mc{R}(\rho)=\textbf{\text{max}}~ & \Tr(\rho A) ~~~~~~~~~~\text{ over Hermitian matrices }A,\\
\textbf{ s.t. } & \left|\Tr\left(\phi A\right)\right| \leq 1 ~~\text{ for all }\phi\in\stab_n.
\end{aligned}
\end{equation}
By solving this dual problem for $\rho=\mc{E}_{\lambda}^{\otimes 3}(\ketbra{\mathrm{CC}Z}{\mathrm{CC}Z})$ numerically, we can write the exact expression as $\mc{R}\left(\mc{E}_{\lambda}^{\otimes 3}(\ketbra{\mathrm{CC}Z}{\mathrm{CC}Z})\right)=\max_{1\le j\le9}\alpha_j(\lambda)$, as shown in Fig.~\ref{fig:CCZ_R_exact}. The Hermitian matrices $A_1\cdots,A_9$ are defined in Table~\ref{tab:Aj_definition}.

Particularly, we have $\alpha_1(0)=23/9\approx 2.55556$. Let $\alpha_8(\lambda^*)=1$, we get $\lambda^*=1/3$.

\begin{figure}[t]
\centering
\includegraphics[width=0.52\textwidth]{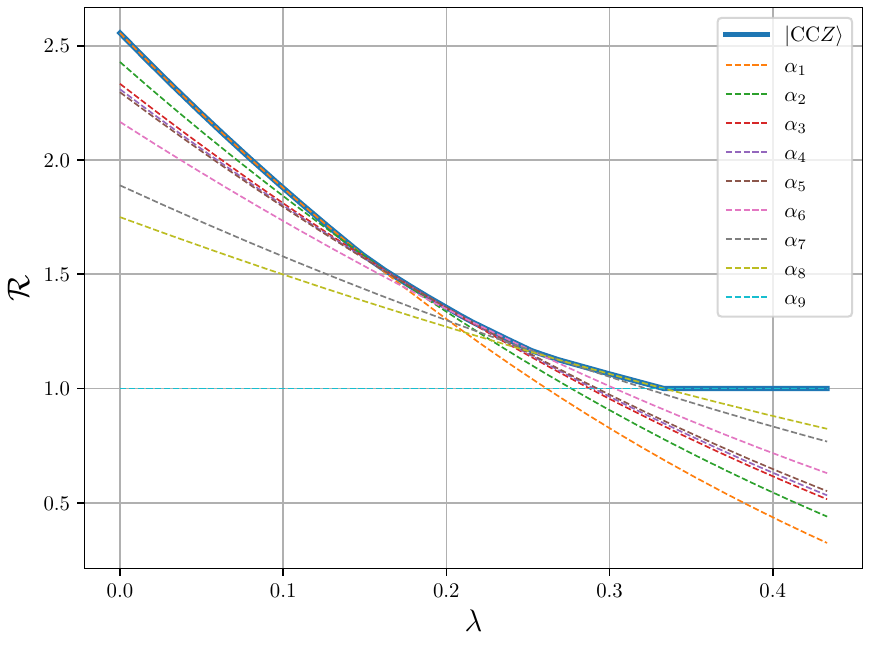}
\caption{The thick blue line represents $\mc{R}\left(\mc{E}_{\lambda}^{\otimes 3}(\ketbra{\mathrm{CC}Z}{\mathrm{CC}Z})\right)$.
The dashed lines represent $\alpha_j(\lambda)=\Tr\left(\mc{E}_{\lambda}^{\otimes 3}(\ketbra{\mathrm{CC}Z}{\mathrm{CC}Z})A_j\right)$ for $i=1,\cdots,9$. Each $A_j$ is a solution to the dual optimization problem \eqref{eq:RoM_dual} for the corresponding $\lambda$.}
\label{fig:CCZ_R_exact}
\end{figure}

\begin{table}[htbp]
\centering
\resizebox{0.78\textwidth}{!}{
\begin{tabular}{c|ccccccccc}
 & $A_1$ & $A_2$ & $A_3$ & $A_4$ & $A_5$ & $A_6$ & $A_7$ & $A_8$ & $A_9$ \\
\hline
000&$-0.55556$&$-0.42857$&$-0.33333$&$-0.30952$&$-0.2963$&$-0.16667$&$0.11111$&$0.25$&$1$\\
001&$0.22222$&$0.19048$&$0.14815$&$0.14286$&$0.22222$&$0.22222$&$0.2963$&$0.25$&$0$\\
002&$0$&$-0.09524$&$0.04345$&$0.05587$&$-0.01459$&$-0.03766$&$0.05978$&$0$&$0$\\
003&$0$&$0.04762$&$0.05556$&$0.04762$&$0.07407$&$0$&$0$&$0$&$0$\\
011&$0.22222$&$0.2381$&$0.25926$&$0.2619$&$0.22222$&$0.22222$&$0.07407$&$0$&$0$\\
012&$-0.03704$&$-0.07937$&$0.12453$&$0.00747$&$-0.03813$&$-0.07206$&$-0.07615$&$0$&$0$\\
013&$0.22222$&$0.28571$&$0.35185$&$0.35714$&$0.27778$&$0.27778$&$0.2037$&$0.25$&$0$\\
022&$0.22222$&$0.33333$&$0.37037$&$0.38095$&$0.40741$&$0.38889$&$0.2963$&$0.25$&$0$\\
023&$-0.07407$&$-0.01587$&$-0.01945$&$-0.1016$&$-0.01046$&$0.1797$&$-0.06416$&$-0.08333$&$0$\\
033&$0$&$0$&$0.03704$&$0.04762$&$-0.11111$&$-0.11111$&$-0.03704$&$0$&$0$\\
111&$0.22222$&$0.14286$&$0.11111$&$0.09524$&$-0.03704$&$-0.16667$&$-0.22222$&$0$&$0$\\
112&$0.07407$&$-0.04762$&$-0.02129$&$0.03337$&$-0.06508$&$0.08431$&$0.06854$&$0.08333$&$0$\\
113&$0.22222$&$0.09524$&$0.01852$&$0$&$0$&$-0.05556$&$-0.18519$&$-0.25$&$0$\\
122&$0.22222$&$0.14286$&$0.07407$&$0.07143$&$0.11111$&$0.11111$&$0.25926$&$0.25$&$0$\\
123&$-0.03704$&$0.04762$&$0.05658$&$-0.00796$&$0.07319$&$-0.04424$&$-0.01237$&$0$&$0$\\
133&$-0.22222$&$-0.09524$&$-0.03704$&$-0.02381$&$-0.03704$&$-0.05556$&$-0.07407$&$0$&$0$\\
222&$0$&$0.04762$&$0.09188$&$-0.00096$&$0.11785$&$0.11297$&$-0.17935$&$0$&$0$\\
223&$-0.22222$&$-0.19048$&$-0.12963$&$-0.11905$&$-0.18519$&$-0.11111$&$-0.03704$&$0$&$0$\\
233&$-0.07407$&$-0.09524$&$-0.02715$&$0.02347$&$-0.06316$&$-0.07244$&$0.16244$&$-0.08333$&$0$\\
333&$0$&$0.14286$&$0.16667$&$0.16667$&$0.07407$&$-0.16667$&$-0.55556$&$-0.75$&$0$\\
\end{tabular}
}
\caption{We give definition for $A_1,\cdots,A_9$ by providing their Pauli coefficients $\frac{1}{2^3}\Tr(A_jP)$, for all $P\in\mathrm{P}_3^+$. In the row labels, $0,1,2,3$ represents $\mbb{I},X,Y,Z$, respectively. Since $A_1,\cdots,A_9$ are taken to possess permutation symmetry, only a part of the coefficients are listed. For example, the element in row 013 and column $A_1$ is $0.22222$, which means $\frac{1}{2^3}\Tr(A_1P)=0.22222$ for $P=\mbb{I}\otimes X\otimes Z,\mbb{I}\otimes Z\otimes X,Z\otimes \mbb{I}\otimes X,Z\otimes X\otimes \mbb{I},X\otimes \mbb{I}\otimes Z$, and $X\otimes Z\otimes \mbb{I}$.}
\label{tab:Aj_definition}
\end{table}

\section{General upper bound for magic under noise}\label{app:generalUB}

For arbitrary $k$-qubit state $\sigma$ we have $\mc{R}(\sigma)\le\sqrt{2^{k}(2^{k}+1)}$ \cite{Liu2022manybody}, thus for arbitrary $n$-qubit state $\rho$, and arbitrary $0\le\lambda\le1$, we have
\begin{equation}
\begin{aligned}
\mc{R}\left(\mc{E}_\lambda^{\otimes n}(\rho)\right)\le&\sum_{J\subset[n]}(1-\lambda)^{\abs{J}}\lambda^{n-\abs{J}}\mc{R}\left(\Tr_{\overline{J}}(\rho)\right)\\
=&\sum_{k=0}^n\sum_{\abs{J}=k}(1-\lambda)^{\abs{J}}\lambda^{n-\abs{J}}\mc{R}\left(\Tr_{\overline{J}}(\rho)\right)\\
\le&\sum_{k=0}^n\binom{n}{k}(1-\lambda)^{k}\lambda^{n-k}\sqrt{2^{k}(2^{k}+1)}\\
\le&\sum_{k=0}^n\binom{n}{k}(1-\lambda)^{k}\lambda^{n-k}2^{k}(1+2^{-k-1})\\
=&(2-\lambda)^n+2^{-1}\left(\frac{1+\lambda}{2}\right)^n\\
=&\mc{O}\big((2-\lambda)^n\big).
\end{aligned}
\end{equation}

\section{Magic capacity of $\mathrm{C}^{n-1}Z$ is bounded}\label{app:MagicCapacity_CCCZ}

\begin{lem}\label{lem:CCCZ_magic_UB}
$\mathcal{C}\left(\mathrm{C}^{n-1}Z\right)<5$ holds for all $n$.
\end{lem}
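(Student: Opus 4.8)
The plan is to exploit the fact that $\mathrm{C}^{n-1}Z$ is a reflection that can only push a stabilizer state along a single new direction, so the output of the magic-capacity program is always a superposition of just \emph{two} stabilizer states. Concretely, I would write $\mathrm{C}^{n-1}Z\otimes\mbb{I} = \mbb{I} - 2P$ with $P = \ketbra{1^n}{1^n}\otimes\mbb{I}$ a stabilizer projector. For an arbitrary $2n$-qubit stabilizer input $\ket{s}$ this gives $\mathrm{C}^{n-1}Z\otimes\mbb{I}\,\ket{s} = \ket{s} - 2\sqrt{p}\,\ket{t}$, where $p = \norm{P\ket{s}}^2$ and $\ket{t}$ is the renormalized projection of $\ket{s}$ onto $\ket{1^n}$ on the control register. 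Since projecting a stabilizer state onto a computational-basis string of a subsystem again yields a stabilizer state, $\ket{t}\in\stab$ and $\bracket{s}{t}=\sqrt{p}$. Thus the whole problem collapses to bounding $\mc{R}$ of a two-stabilizer superposition $\ket{s}-2\sqrt{p}\,\ket{t}$.

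Next I would normalize this pair. Because $\mc{R}$ is invariant under Clifford unitaries and under tensoring with stabilizer factors (properties ii and iii of the preliminaries), and because any two stabilizer states of nonzero overlap can be mapped by a Clifford to the normal form $\ket{0^m}$ and $\ket{+^k}\ket{0^{m-k}}$ with overlap $2^{-k/2}$, I may assume $\sqrt{p}=2^{-k/2}$ and discard the common $\ket{0^{m-k}}$ tail. This reduces the task to showing $\mc{R}\bigl(\dm{\psi_k}\bigr)<5$ uniformly in $k$, where $\ket{\psi_k} = \ket{0^k} - 2^{1-k/2}\ket{+^k}$.

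For the final step I would produce an explicit affine decomposition of $\dm{\psi_k}$ into stabilizer states. Expanding $\dm{\psi_k}$, the only non-stabilizer pieces are the coherences $\ketbra{0^k}{+^k}+\ketbra{+^k}{0^k} = 2^{-k/2}\sum_{x}\bigl(\ketbra{0^k}{x}+\ketbra{x}{0^k}\bigr)$, and each such coherence factors as $\ketbra{0^k}{x}+\ketbra{x}{0^k} = \dm{+_x} - \dm{-_x}$ with $\ket{\pm_x} = \tfrac{1}{\sqrt2}(\ket{0^k}\pm\ket{x})\in\stab$. Collecting terms (separating the $x=0^k$ contribution) writes $\dm{\psi_k}$ as an affine combination of stabilizer states whose coefficients have absolute values summing to $5-2^{2-k}$. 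Since $\mc{R}$ is bounded by this $\ell_1$-weight, I get $\mc{R}(\dm{\psi_k})\le 5-2^{2-k}<5$ for $k\ge2$, while $k=0,1$ give $\ket{\psi_k}\in\stab$ outright (e.g. $\ket{\psi_1}=-\ket{1}$), hence $\mc{R}=1$. Taking the maximum over $\ket{s}$ yields $\mc{C}(\mathrm{C}^{n-1}Z)<5$ for all $n$.

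The main obstacle is bookkeeping rather than a deep difficulty. First, I must justify the two-stabilizer normal form cleanly, i.e. that a pair of overlapping stabilizer states is classified by the single integer $k$ and that any relative phases can be Clifford-removed. Second, the strictness of the bound hinges on carefully handling the $x=0^k$ term in the coherence sum and on checking that the absolute coefficient weights telescope to exactly $5-2^{2-k}$; the small-$k$ cases must be verified by hand to confirm they do not break the strict, uniform inequality.
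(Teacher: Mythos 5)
Your proposal is correct, and it reaches the bound by a genuinely different organization of the same underlying decomposition. The paper first removes the ancilla by citing Seddon--Campbell (their Theorem~A1, valid because $\mathrm{C}^{n-1}Z$ is diagonal), writes an arbitrary $n$-qubit stabilizer input in the affine form $\ket{\mc{K},q,\mathbf{b}}$, and directly expands the perturbation $\mathrm{C}^{n-1}Z\dm{\mc{K},q,\mathbf{b}}\mathrm{C}^{n-1}Z-\dm{\mc{K},q,\mathbf{b}}$ into $\abs{\mc{K}}-1$ coherences $\ketbra{1^n}{\y}+\mathrm{h.c.}$, each split into a difference of two GHZ-like stabilizer projectors, with an explicit case analysis because the phases $i^{\mathbf{b}\cdot 1^n-\mathbf{b}\cdot\y}=\pm1,\pm i$ force either $X$-type or $Y$-type combinations. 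You instead keep the ancilla, so you never need the Seddon--Campbell reduction: the reflection structure $\mathrm{C}^{n-1}Z\otimes\mbb{I}=\mbb{I}-2P$ survives tensoring, the output is a superposition of just two stabilizer states with real positive overlap $2^{-k/2}$, and Clifford transitivity on overlapping stabilizer pairs collapses everything to the one-parameter family $\ket{0^k}-2^{1-k/2}\ket{+^k}$; the phase bookkeeping the paper does by hand is absorbed into that transitivity lemma, and your final decomposition needs only $X$-type states. The two routes land on numerically identical bounds: your $5-2^{2-k}$ is the paper's $1+4(\abs{\mc{K}}-1)/\abs{\mc{K}}$ under $\abs{\mc{K}}=2^k$. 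What you buy is self-containedness with respect to the ancilla and a cleaner terminal computation; what you pay is the transitivity lemma, which the paper never needs --- though it is true and provable with exactly the affine-form machinery the paper already deploys (map $\ket{s}$ to $\ket{0^m}$; nonzero overlap forces $0^m\in\mc{K}$, so $\mc{K}$ is linear; CNOTs plus diagonal Cliffords $S^{\dagger}$, $Z$, $\mathrm{CZ}$ fix $\ket{0^m}$ and clean the second state to $\ket{+^k}\ket{0^{m-k}}$; the relative phase then comes for free, since Cliffords preserve the overlap $2^{-k/2}>0$). Two small points to make explicit when writing this up: the degenerate case $P\ket{s}=0$, where $\ket{t}$ is undefined but the output is $\ket{s}$ itself with $\mc{R}=1$; and the fact that for each fixed $n$ the maximum over the finitely many stabilizer inputs is attained at some finite $k$, so the supremum $5-2^{2-k}<5$ is genuinely strict, exactly as in the paper.
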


\begin{proof}
For quantum channels whose Kraus operators are diagonal in the computational basis, we do not need to append an extra system to obtain the maximum output robustness (\cite{Seddon2019quantifying}, Theorem A1). Specifically, for $\mathrm{C}^{n-1}Z$ we have 
\begin{equation}
\mathcal{C}\left(\mathrm{C}^{n-1}Z\right)=\underset{\ket{s} \in \mc{S}_{2n}}{\max}\mathcal{R}\left(\mathrm{C}^{n-1}Z\otimes \mbb{I}_{2^n}\ket{s}\right)=\underset{\ket{s} \in \mc{S}_{n}}{\max} \mathcal{R}\left(\mathrm{C}^{n-1}Z\ket{s}\right),
\end{equation}
where $\mc{S}_n$ denotes the set of $n$ qubit pure stabilizer states. Any pure $n$-qubit stabilizer state has the form
\begin{equation}
\ket{\mc{K},q,\mathbf{b}}:=\frac{1}{\sqrt{\abs{\mc{K}}}}\sum_{x\in\mc{K}}i^{\mathbf{b}\cdot \x}(-1)^{q(\x)}\ket{\x},
\end{equation}
where $\mc{K}\subset\mbb{F}_2^n$ is an affine subspace, $\mathbf{b}\in\mbb{F}_2^n$, $q$ is a quadratic form, and $i=\sqrt{-1}$. If $1^n\notin\mc{K}$, then 
\begin{equation}
\begin{aligned}
\mathrm{C}^{n-1}Z\ket{\mc{K},q,\mathbf{b}}=\frac{1}{\sqrt{\abs{\mc{K}}}}\sum_{\x\in\mc{K}}i^{\mathbf{b}\cdot \x}(-1)^{q(\x)+\prod_{i=1}^{n}x_i}\ket{\x}=\ket{\mc{K},q,\mathbf{b}},
\end{aligned}
\end{equation}
whose RoM $=1$.
Suppose $1^n\in\mc{K}$, we have
\begin{equation}\label{eq:CCCZ_on_STAB}
\begin{aligned}
&\mathrm{C}^{n-1}Z\ketbra{\mc{K},q,\mathbf{b}}{\mc{K},q,\mathbf{b}}\mathrm{C}^{n-1}Z-\ketbra{\mc{K},q,\mathbf{b}}{\mc{K},q,\mathbf{b}}\\
=&\frac{1}{\abs{\mc{K}}}\sum_{\x,\y\in\mc{K}}i^{\mathbf{b}\cdot \x-\mathbf{b}\cdot \y}(-1)^{q(\x)+q(\y)+\prod_{i=1}^{n}x_i+\prod_{i=1}^{n}y_i}\ketbra{\x}{\y}-\frac{1}{\abs{\mc{K}}}\sum_{\x,\y\in\mc{K}}i^{\mathbf{b}\cdot \x-\mathbf{b}\cdot \y}(-1)^{q(\x)+q(\y)}\ketbra{\x}{\y}\\
=&\frac{1}{\abs{\mc{K}}}\sum_{\x,\y\in\mc{K}}i^{\mathbf{b}\cdot \x-\mathbf{b}\cdot \y}(-1)^{q(\x)+q(\y)}\left[(-1)^{\prod_{i=1}^{n}x_i+\prod_{i=1}^{n}y_i}-1\right]\ketbra{\x}{\y}\\
=&\frac{1}{\abs{\mc{K}}}\sum_{\substack{\x,\y\in\mc{K},\\\x=1^n,\y\neq1^n}}i^{\mathbf{b}\cdot \x-\mathbf{b}\cdot \y}(-1)^{q(\x)+q(\y)}(-2)\ketbra{\x}{\y}+\frac{1}{\abs{\mc{K}}}\sum_{\substack{\x,\y\in\mc{K},\\\x\neq1^n,\y=1^n}}i^{\mathbf{b}\cdot \x-\mathbf{b}\cdot \y}(-1)^{q(\x)+q(\y)}(-2)\ketbra{\x}{\y}\\
=&-\frac{2}{\abs{\mc{K}}}\sum_{\substack{\y\in\mc{K},\y\neq1^n}}(-1)^{q(1^n)+q(\y)}\big(i^{\mathbf{b}\cdot 1^n-\mathbf{b}\cdot \y}\ketbra{1^n}{\y}+i^{\mathbf{b}\cdot \y-\mathbf{b}\cdot 1^n}\ketbra{\y}{1^n}\big).
\end{aligned}
\end{equation}

Note that $\mathbf{b}\cdot1^n$ and $\mathbf{b}\cdot \y$ takes value in $0,1$. When $i^{\mathbf{b}\cdot 1^n-\mathbf{b}\cdot \y}=1$, we can write
\begin{equation}
\ketbra{1^n}{\y}+\ketbra{\y}{1^n}=\ketbra{X^+_{\y1^n}}{X^+_{\y1^n}}-\ketbra{X^-_{\y1^n}}{X^-_{\y1^n}},
\end{equation}
a linear combination of pure stabilizer states with total absolute coefficient $2$, where $\ket{X^\pm_{\y1^n}}=\frac{1}{\sqrt{2}}(\ket{\y}\pm\ket{1^n})$. Note that $\ket{X^\pm_{\y1^n}}$ can be generated from $\ket{\operatorname{GHZ}_n}=\frac{1}{\sqrt{2}}(\ket{0^n}+\ket{1^n})$ by applying a Pauli $Z$ and some $\operatorname{CNOT}$'s, thus is a stabilizer state.

When $i^{\mathbf{b}\cdot 1^n-\mathbf{b}\cdot \y}=\pm i$, we can write
\begin{equation}
\pm\big(i\ketbra{1^n}{\y}-i\ketbra{\y}{1^n}\big)=\pm\big(\ketbra{Y^+_{\y1^n}}{Y^+_{\y1^n}}-\ketbra{Y^-_{\y1^n}}{Y^-_{\y1^n}}\big),
\end{equation}
a linear combination of pure stabilizer states with total absolute coefficient $2$, where $\ket{Y^\pm_{\y1^n}}=\frac{1}{\sqrt{2}}(\ket{\y}\pm i\ket{1^n})$. Note that $\ket{Y^\pm_{\y1^n}}$ can be generated from $\ket{\operatorname{GHZ}_n}=\frac{1}{\sqrt{2}}(\ket{0^n}+\ket{1^n})$ by applying a Pauli $Z$, a phase gate $S$, and some $\operatorname{CNOT}$'s, thus is a stabilizer state.

In conclusion, every term in the summand of the last equation of Eq.~\eqref{eq:CCCZ_on_STAB} can be written as a linear combination of pure stabilizer states with a total absolute coefficient $2$. By convexity of $\mc{R}(\cdot)$, we obtain
\begin{equation}
\begin{aligned}
&\mc{R}\left(\mathrm{C}^{n-1}Z\ket{\mc{K},q,\mathbf{b}}\right)\le\mc{R}(\ket{\mc{K},q,\mathbf{b}})+\frac{2}{\abs{\mc{K}}}\big(\abs{\mc{K}}-1\big)2<5.
\end{aligned}
\end{equation}
Taking maximum over all the pure stabilizer states, we obtain 
\begin{equation}
\mathcal{C}\left(\mathrm{C}^{n-1}Z\right)=\underset{\ket{s} \in \mc{S}_{n}}{\max} \mathcal{R}\left(\mathrm{C}^{n-1}Z\ket{s}\right)<5.
\end{equation}

\end{proof}

\section{Example: $\ket{\mathrm{C}^{n-1}Z}$}\label{app:Example_CCCZ}

\subsection{Lower bound}
In the following proofs, for $\mathbf{z}=(z_1,\cdots,z_n),\mathbf{x}=(x_1,\cdots,x_n)\in\mbb{Z}_2^n$, denote $P_{(\mathbf{z},\mathbf{x})}\in\mathrm{P}_n^+$ be the Pauli operator which equals to $\left(\otimes_{i=1}^nZ^{z_i}\right)\left(\otimes_{i=1}^nX^{x_i}\right)$ up to a global phase $e^{i\theta}$. We have
\begin{equation}
\begin{aligned}
P_{(\mathbf{z},\mathbf{x})}\ket{\Psi_f}=&e^{i\theta}\left(\otimes_{i=1}^nZ^{z_i}\right)\left(\otimes_{i=1}^nX^{x_i}\right)\frac{1}{\sqrt{2^n}}\sum_{\mathbf{s}\in\mbb{Z}_2^n}(-1)^{f(\mathbf{s})}\ket{\mathbf{s}}\\
=&e^{i\theta}\frac{1}{\sqrt{2^n}}\sum_{\mathbf{s}\in\mbb{Z}_2^n}(-1)^{f(\mathbf{s}+\x)+\z\cdot\s}\ket{\mathbf{s}},
\end{aligned}
\end{equation}
where $\cdot$ is the inner product in $\mbb{Z}_2^n$. Thus
\begin{equation}\label{eq:Pauli_spectrum_abs}
\abs{\bra{\Psi_f}P_{(\mathbf{z},\mathbf{x})}\ket{\Psi_f}}=\Big|\frac{1}{2^n} \sum_{\s\in\mbb{Z}_2^n}(-1)^{f(\s)+f(\s+\x)+\z \cdot \s}\Big|.
\end{equation}

\begin{lem}
For $\z,\x\in\mbb{Z}_2^n$, we have
\begin{equation}
\begin{aligned}
\abs{\bra{\mathrm{C}^{n-1}Z}P_{(\mathbf{z},\mathbf{x})}\ket{\mathrm{C}^{n-1}Z}}=
\begin{cases}
1, & \x=0^n\text{ and }\z=0^n,\\
0, & \x=0^n\text{ and }\z\neq0^n,\\
1-\frac{1}{2^{n-2}}, & \x\neq0^n\text{ and }\z=0^n,\\
\frac{1}{2^{n-2}}, & \x\neq0^n\text{ and }\z\neq0^n\text{ and }\z\cdot\x=0,\\
0, & \x\neq0^n\text{ and }\z\neq0^n\text{ and }\z\cdot\x=1.\\
\end{cases}
\end{aligned}
\end{equation}
\end{lem}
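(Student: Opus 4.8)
The plan is to evaluate the character sum appearing in Eq.~\eqref{eq:Pauli_spectrum_abs} directly, exploiting the fact that the characteristic function $f(\s)=\prod_{i=1}^n s_i$ is simply the indicator of the all-ones string: $f(\s)=1$ iff $\s=1^n$, and $f(\s)=0$ otherwise. First I would substitute this into
\begin{equation}
\abs{\bra{\mathrm{C}^{n-1}Z}P_{(\z,\x)}\ket{\mathrm{C}^{n-1}Z}}=\Big|\frac{1}{2^n}\sum_{\s\in\mbb{Z}_2^n}(-1)^{f(\s)+f(\s+\x)+\z\cdot\s}\Big|
\end{equation}
and analyze the exponent $h(\s):=f(\s)+f(\s+\x)$. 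Since $f$ is the indicator of $1^n$, the shifted term $f(\s+\x)$ is the indicator of $\s=1^n+\x$; hence when $\x=0^n$ we get $h\equiv 0$, whereas when $\x\neq0^n$ the two points $1^n$ and $1^n+\x$ are distinct and $h(\s)$ is exactly the indicator of the two-element set $\{1^n,\,1^n+\x\}$.

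Next I would split into the five stated cases. For $\x=0^n$, the sum reduces to $\sum_{\s}(-1)^{\z\cdot\s}$, which equals $2^n$ if $\z=0^n$ and $0$ otherwise by orthogonality of characters, giving the first two lines. For $\x\neq0^n,\ \z=0^n$, the exponent is just $h(\s)$, so the sum is $(2^n-2)\cdot 1+2\cdot(-1)=2^n-4$, yielding $1-2^{-(n-2)}$. For the remaining case $\x\neq0^n,\ \z\neq0^n$, the clean route is to write $(-1)^{h(\s)}=1-2h(\s)$, valid since $h\in\{0,1\}$, so that
\begin{equation}
\sum_{\s}(-1)^{h(\s)+\z\cdot\s}=\sum_{\s}(-1)^{\z\cdot\s}-2\sum_{\s}h(\s)(-1)^{\z\cdot\s}=-2(-1)^{\z\cdot 1^n}\big(1+(-1)^{\z\cdot\x}\big),
\end{equation}
using that the first character sum vanishes and that $h$ is supported on $\{1^n,\,1^n+\x\}$. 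This has magnitude $4$ when $\z\cdot\x=0$ (hence $2^{-(n-2)}$ after dividing by $2^n$) and magnitude $0$ when $\z\cdot\x=1$, matching the last two lines.

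I do not anticipate a serious obstacle; the computation is essentially an exercise in Walsh--Hadamard (character) sums. The one point requiring care is the case $\x\neq0^n,\ \z\neq0^n$, where one must correctly track the interference between the two support points $1^n$ and $1^n+\x$ and observe that the factor $1+(-1)^{\z\cdot\x}$ is precisely what separates the vanishing subcase $\z\cdot\x=1$ from the non-vanishing subcase $\z\cdot\x=0$. A minor bookkeeping check is to confirm that $1^n$ and $1^n+\x$ are genuinely distinct whenever $\x\neq0^n$, so that $h$ has exactly two support points and does not collapse, which is what fixes the factor of $4$ throughout.
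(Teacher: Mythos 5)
Your proposal is correct and follows essentially the same route as the paper's proof: both exploit that $f(\s)=\prod_i s_i$ is the indicator of $1^n$, reduce the $\x=0^n$ cases to character orthogonality, and for $\x\neq 0^n$ split off the two support points $1^n$ and $1^n+\x$ to obtain the sum $\sum_{\s}(-1)^{\z\cdot\s}-2(-1)^{\z\cdot 1^n}\bigl(1+(-1)^{\z\cdot\x}\bigr)$, from which all five cases follow. Your identity $(-1)^{h(\s)}=1-2h(\s)$ is just a slightly more formal phrasing of the paper's direct subtraction of the two flipped terms, so there is no substantive difference.
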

\begin{proof}
For $\ket{\mathrm{C}^{n-1}Z}$ we have $f(\s)=\prod_{i}s_i$.

When $\x=0^n$ and $\z\neq0^n$, $f(\s)+f(\s+\x)+\z\cdot\s=\z\cdot\s$. The number of $\s\in\mbb{Z}_2^n$ such that $\z\cdot\s=0$ or $\z\cdot\s=1$ all equals to $2^{n-1}$. Thus $\sum_{\s\in\mbb{Z}_2^n}(-1)^{\z\cdot\s}=0$.

When $\x\neq0^n$, $\sum_{\s\in\mbb{Z}_2^n}(-1)^{f(\s)+f(\s+\x)+\z\cdot\s}=\sum_{\s\in\mbb{Z}_2^n}(-1)^{\z\cdot\s}-2(-1)^{\z\cdot1}-2(-1)^{\z\cdot(\x+1^n)}$, which equals to $2^n-4$ when $\z=0^n$, and equals to $-2(-1)^{\z\cdot1}\left(1+(-1)^{\z\cdot\x}\right)$ when $\z\neq0^n$.
\end{proof}

Denote $\x\vee\z$ ($\x\land \z$) be the bit-wise ``or" (``and") of $\x$ and $\z$. Given the Pauli spectrum of $\ket{\mathrm{C}^{n-1}Z}$, we can use Eq.~\eqref{eq:LB} to provide a lower bound as
\begin{equation}\label{eq:CCCZ_noisy_LB}
\begin{aligned}
&\mc{D}\left(\mc{E}_\lambda^{\otimes n}\left(\ketbra{\mathrm{C}^{n-1}Z}{\mathrm{C}^{n-1}Z}\right)\right)\\
=&\frac{1}{2^n}\sum_{\x,\z}\abs{\Tr\left(P_{(\mathbf{z},\mathbf{x})}\mc{E}_\lambda^{\otimes n}\left(\ketbra{\mathrm{C}^{n-1}Z}{\mathrm{C}^{n-1}Z}\right)\right)}\\
=&\frac{1}{2^n}\sum_{\x,\z}(1-\lambda)^{\abs{\x\vee \z}}\abs{\Tr\left(P_{(\mathbf{z},\mathbf{x})}\ketbra{\mathrm{C}^{n-1}Z}{\mathrm{C}^{n-1}Z}\right)}\\
=&\frac{1}{2^n}\Bigg[1+\sum_{\substack{\x\neq0^n,\\\z=0^n}}(1-\lambda)^{\abs{\x}}\left(1-\frac{1}{2^{n-2}}\right)+\sum_{\substack{\x\neq0^n,\z\neq0^n,\\\abs{\x\land \z}\text{ even}}}(1-\lambda)^{\abs{\x\vee \z}}\frac{1}{2^{n-2}}\Bigg]\\
=&\frac{1}{2^n}\Bigg[1+\sum_{k=1}^n\binom{n}{k}(1-\lambda)^{k}\left(1-\frac{1}{2^{n-2}}\right)+\sum_{k=2}^n\binom{n}{k}\Big[(2^k-2)+\sum_{m=1}^{\lfloor k/2\rfloor}\binom{k}{2m}2^{k-2m}\Big](1-\lambda)^{k}\frac{1}{2^{n-2}}\Bigg]\\
=&\frac{1}{2^n}\Bigg[1+\left(1-\frac{1}{2^{n-2}}\right)\left((2-\lambda)^n-1\right)+\frac{1}{2^{n-2}}\sum_{k=2}^n\binom{n}{k}\Big[(2^k-2)+\Big(\frac{1}{2}(1+3^k)-2^k\Big)\Big](1-\lambda)^{k}\Bigg]\\
=&4^{-n} \left(8 + 2 (4 - 3 \lambda)^n + (4 - 2 \lambda)^n - 10 (2 - \lambda)^n\right).\\
\end{aligned}
\end{equation}
In the third term of the fifth line, $k$ represents $\abs{\x\vee \z}$, $2m$ represents $\abs{\x\land \z}$, $2^k-2$ represents all possible choices to assign for the $k-2m$ $1$'s to $\x$ and $\z$ when $2m=0$, $2^{k-2m}$ represents all possible choices to assign for the $k-2m$ $1$'s to $\x$ and $\z$ when $2m>0$.

By Eq.~\eqref{eq:CCCZ_noisy_LB}, we have
\begin{equation}
\begin{aligned}
\mc{D}\left(\mc{E}_\lambda^{\otimes n}\left(\ketbra{\mathrm{C}^{n-1}Z}{\mathrm{C}^{n-1}Z}\right)\right)\ge&2 \left(1 - \frac{3}{4} \lambda\right)^n - 10\cdot 2^{-n}(1 - \frac{\lambda}{2})^n\\
\ge&2 \left(1 - \frac{3}{4} \lambda\right)^n  - 0.5,
\end{aligned}
\end{equation}
where the last inequality holds when $n\ge5$.
For $\epsilon\ge0$, let $2 \left(1 - \frac{3}{4} \lambda_n\right)^n  - 0.5=1+\epsilon$, we have $\lambda_n=\frac{4}{3}\left(1-(0.75+\epsilon/2)^{\frac{1}{n}}\right)=\Omega(n^{-1})$. 
Combined with the analysis of upper bound in Section~\ref{section:HighDegreeEdgeFragile}, we have $\lambda^*_\epsilon=\Theta(n^{-1})$ for $\ket{\mathrm{C}^{n-1}Z}$.

\subsection{Local magic}

Note that
\begin{equation}
\begin{aligned}
\abs{\Tr\left(P_{(\mathbf{z},\mathbf{x})}\ketbra{+^n}{+^n}\right)}=
\begin{cases}
1, & \z=0^n,\\
0, & \z\neq0^n.\\
\end{cases}
\end{aligned}
\end{equation}
Denote $\rho_n=\Tr_{\overline{\{0,1,2\}}}(\ketbra{\mathrm{C}^{n-1}Z}{\mathrm{C}^{n-1}Z})=\left(1-\frac{1}{2^{n-3}}\right)\ketbra{+^3}{+^3}+\frac{1}{2^{n-3}}\ketbra{\mathrm{CC}Z}{\mathrm{CC}Z}$, then we have
\begin{equation}
\begin{aligned}
\abs{\Tr\left(P_{(\mathbf{z},\mathbf{x})}\rho_n\right)}=
\begin{cases}
1, & \x=0^3\text{ and }\z=0^3,\\
0, & \x=0^3\text{ and }\z\neq0^3,\\
\frac{1}{2^{n-3}}\left(1-\frac{1}{2}\right)+\left(1-\frac{1}{2^{n-3}}\right)=1-\frac{1}{2^{n-2}}, & \x\neq0^3\text{ and }\z=0^3,\\
\frac{1}{2^{n-3}}\frac{1}{2}=\frac{1}{2^{n-2}}, & \x\neq0^3\text{ and }\z\neq0^3\text{ and }\abs{\x\land \z}\text{ even,}\\
0, & \x\neq0^3\text{ and }\z\neq0^3\text{ and }\abs{\x\land \z}\text{ odd.}\\
\end{cases}
\end{aligned}
\end{equation}
Thus $\mc{D}(\rho_n)=1+\frac{7}{2^n}>1$. Note that $\mc{R}(\rho_n)\le1+1.55\frac{1}{2^{n-3}}$. 
Thus we have $\mc{R}(\rho_n)=1+\Theta(2^{-n})$.

\section{High-degree edges are fragile}\label{app:proof_of_HighDegreeEdgeFragile}

\begin{thm}\label{thm:HighDegreeEdgeFragile_app}
Let $\Psi$ be a $n$-qubit hypergraph state. 
Consider adding $K$ edges $e_1,\cdots,e_K$ to $\Psi$ to obtain $\Phi$, then we have
\begin{equation}\label{eq:HighDegreeEdgeFragile_app}
\begin{aligned}
\mc{R}\left(\mc{E}_{\lambda}^{\otimes n}\left(\Phi\right)\right)\le\mc{R}\left(\mc{E}_{\lambda}^{\otimes n}\left(\Psi\right)\right)+C_{\Psi}\sum_{\emptyset\neq J\subset[K]}(5^{\abs{J}}+1)\Big(1-\frac{\lambda}{2}\Big)^{\abs{\cup_{j\in J}e_j}},
\end{aligned}
\end{equation}
where $C_{\Psi}:=\underset{I\subset[n],\s\in\mbb{Z}_2^{\abs{I}}}{\max}~\mc{R}\big(\Psi^{(I,\mathbf{s})}\big)$ is a constant depending on $\Psi$.
\end{thm}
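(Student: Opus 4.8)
The plan is to realize the noisy state $\mc{E}_\lambda^{\otimes n}(\Phi)$ as an explicit quasi-probability mixture built from $\mc{E}_\lambda^{\otimes n}(\Psi)$ plus correction terms, and then to bound its RoM using the convexity property (iv) of $\mc{R}(\cdot)$. First I would expand the channel as $\mc{E}_\lambda^{\otimes n}(\Phi)=\sum_{I\subset[n]}(1-\lambda)^{n-\abs{I}}\lambda^{\abs{I}}\,\Tr_I(\Phi)\otimes\mbb{I}_I/2^{\abs{I}}$ and apply the partial-trace rule of Lemma~\ref{lemma:HGstate_PartialTrace} to write $\Tr_I(\Phi)=2^{-\abs{I}}\sum_{\s}\Phi^{(I,\s)}$. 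Setting $p(I,\s):=(1-\lambda)^{n-\abs{I}}\lambda^{\abs{I}}2^{-\abs{I}}$, which is a genuine probability distribution over pairs $(I,\s)$, and doing the same for $\Psi$, the difference becomes $\mc{E}_\lambda^{\otimes n}(\Phi)-\mc{E}_\lambda^{\otimes n}(\Psi)=\sum_{I,\s}p(I,\s)\big[\Phi^{(I,\s)}-\Psi^{(I,\s)}\big]\otimes\mbb{I}_I/2^{\abs{I}}$.

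The crux is to understand when and how $\Phi^{(I,\s)}$ differs from $\Psi^{(I,\s)}$. Since $\Phi$ and $\Psi$ differ only through the added monomials $\prod_{i\in e_j}x_i$ in the characteristic function, reducing by $(I,\s)$ shows that edge $e_j$ contributes a nonzero reduced monomial $\prod_{i\in e_j\setminus I}y_i$ exactly when every qubit of $e_j$ lying in $I$ carries the label $1$. Writing $S(I,\s)\subset[K]$ for this set of surviving edges, $\Phi^{(I,\s)}$ is obtained from $\Psi^{(I,\s)}$ by applying the $\abs{S(I,\s)}$ multi-controlled-$Z$ gates of the surviving reduced edges. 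Under $p(\cdot)$ each qubit is independently ``in $I$ and labeled $0$'' with probability $\lambda/2$, so the probability that every edge of a fixed $J\subset[K]$ survives equals $(1-\lambda/2)^{\abs{\cup_{j\in J}e_j}}$; since $\{S=J\}$ is contained in this event, $\Pr[S(I,\s)=J]\le(1-\lambda/2)^{\abs{\cup_{j\in J}e_j}}$.

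Next I would assemble the quasi-probability decomposition $\mc{E}_\lambda^{\otimes n}(\Phi)=\mc{E}_\lambda^{\otimes n}(\Psi)+\sum_{(I,\s):S\neq\emptyset}p(I,\s)\,\Phi^{(I,\s)}\otimes\tfrac{\mbb{I}_I}{2^{\abs{I}}}-\sum_{(I,\s):S\neq\emptyset}p(I,\s)\,\Psi^{(I,\s)}\otimes\tfrac{\mbb{I}_I}{2^{\abs{I}}}$, whose coefficients $1,+p,-p$ sum to $1$ so that convexity applies. Using property (iii) to drop the maximally mixed tensor factors, then the magic-capacity bound~\eqref{eq:MagicCapacity_CCCZ_UBis5} to get $\mc{R}(\Phi^{(I,\s)})\le5^{\abs{S(I,\s)}}\mc{R}(\Psi^{(I,\s)})\le5^{\abs{S(I,\s)}}C_\Psi$ together with $\mc{R}(\Psi^{(I,\s)})\le C_\Psi$, and finally grouping the sum by the value $J=S(I,\s)$ and inserting the survival-probability bound, I arrive at the claimed inequality carrying the factor $5^{\abs{J}}+1$.

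I expect the main obstacle to be the second step, namely handling the partial-trace reduction exactly. The odd/even multiplicity clause of Lemma~\ref{lemma:HGstate_PartialTrace} means a surviving reduced edge may cancel against an edge already present in $\Psi^{(I,\s)}$, so I must argue that ``applying the surviving gates'' is the correct description irrespective of such cancellations; this is safe because the magic-capacity bound $\mc{C}(\mathrm{C}^{m-1}Z)<5$ controls the RoM change whether a $\mathrm{C}^{m-1}Z$ adds or removes an edge, and holds uniformly in the reduced degree. A minor but essential check is that the quasi-probability coefficients genuinely sum to $1$, which is what legitimizes the use of convexity with absolute coefficients.
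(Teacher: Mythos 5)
Your proposal is correct and matches the paper's proof essentially step for step: the same expansion of $\mc{E}_\lambda^{\otimes n}(\Phi)-\mc{E}_\lambda^{\otimes n}(\Psi)$ via the partial-trace rule, the same grouping of terms by the set $J$ of surviving added edges, and the same combination of the magic-capacity bound $\mc{C}(\mathrm{C}^{m-1}Z)<5$ (applied iteratively, valid whether each gate adds or cancels an edge) with convexity of $\mc{R}(\cdot)$. The only difference is cosmetic: where the paper bounds the total weight of $\{(I,\s):S(I,\s)=J\}$ by an explicit binomial computation (splitting $I$ into $I_1\subset\cup_{j\in J}e_j$ and its complement, with $\abs{C_{I,J}}\le2^{\abs{I_2}}$), you obtain the same factor $\big(1-\frac{\lambda}{2}\big)^{\abs{\cup_{j\in J}e_j}}$ by the cleaner observation that each qubit is independently ``traced out with label $0$'' with probability $\lambda/2$ and that $\{S=J\}$ is contained in the event that every qubit of $\cup_{j\in J}e_j$ avoids this.
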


\begin{proof}
Note that
\begin{equation}
\begin{aligned}
\mc{E}_{\lambda}^{\otimes n}(\Psi)=&\left((1-\lambda)\mc{I}+\lambda\mc{G}\right)^{\otimes n}(\Psi)\\
=&\sum_{I\subset[n]}(1-\lambda)^{n-\abs{I}}\lambda^{\abs{I}}\Tr_I(\Psi)\otimes\frac{\mbb{I}_I}{2^{\abs{I}}}\\
=&\sum_{I\subset[n]}(1-\lambda)^{n-\abs{I}}\lambda^{\abs{I}}\frac{1}{2^{\abs{I}}}\sum_{\s\in\mbb{Z}_2^{\abs{I}}}\Psi^{(I,\s)}\otimes\frac{\mbb{I}_I}{2^{\abs{I}}}.
\end{aligned}
\end{equation}
Thus we have
\begin{equation}
\begin{aligned}
\mc{E}_{\lambda}^{\otimes n}(\Phi)-\mc{E}_{\lambda}^{\otimes n}(\Psi)=&\sum_{I\subset[n]}(1-\lambda)^{n-\abs{I}}\lambda^{\abs{I}}\left[\Tr_I(\Phi)-\Tr_I(\Psi)\right]\otimes\frac{\mbb{I}_I}{2^{\abs{I}}}\\
=&\sum_{I\subset[n]}(1-\lambda)^{n-\abs{I}}\lambda^{\abs{I}}\frac{1}{2^{\abs{I}}}\sum_{\s\in\mbb{Z}_2^{\abs{I}}}\left[\Phi^{(I,\s)}-\Psi^{(I,\s)}\right]\otimes\frac{\mbb{I}_I}{2^{\abs{I}}}.
\end{aligned}
\end{equation}
For $I\subset[n]$, $\s\in\mbb{Z}_2^{\abs{I}}$ and edge $e_j$, we denote $\s|_{e_j}\in\mbb{Z}_2^{\abs{I\cap e_j}}$ be the vector obtained by restricting $\s$ on $e_j$. Note that if $\s\in A_I$, where we define
\begin{equation}
A_I=\left\{\s\in\mbb{Z}_2^{\abs{I}}:\forall j\in[K],I\cap e_j\neq\emptyset\text{ and }\s|_{e_j}\neq\mathbf{1}\right\},
\end{equation}
then $\Phi^{(I,\s)}=\Psi^{(I,\s)}$. Thus denote $B_I=\mbb{Z}_2^{\abs{I}}-A_I$, we can write
\begin{equation}
\begin{aligned}
\mc{E}_{\lambda}^{\otimes n}(\Phi)-\mc{E}_{\lambda}^{\otimes n}(\Psi)=&\sum_{I\subset[n]}(1-\lambda)^{n-\abs{I}}\lambda^{\abs{I}}\frac{1}{2^{\abs{I}}}\sum_{\s\in B_I}\left[\Phi^{(I,\s)}-\Psi^{(I,\s)}\right]\otimes\frac{\mbb{I}_I}{2^{\abs{I}}}.
\end{aligned}
\end{equation}
By convexity of $\mc{R}(\cdot)$, we have
\begin{equation}
\begin{aligned}
\mc{R}\left(\mc{E}_{\lambda}^{\otimes n}(\Phi)\right)\le\mc{R}\left(\mc{E}_{\lambda}^{\otimes n}(\Psi)\right)+\sum_{I\subset[n]}(1-\lambda)^{n-\abs{I}}\lambda^{\abs{I}}\frac{1}{2^{\abs{I}}}\sum_{\s\in B_I}\left[\mc{R}\big(\Phi^{(I,\s)}\big)+\mc{R}\big(\Psi^{(I,\s)}\big)\right].
\end{aligned}
\end{equation}
We write
\begin{equation}
\begin{aligned}
B_I=&\left\{\s\in\mbb{Z}_2^{\abs{I}}:\exists j\in[K],I\cap e_j=\emptyset\text{ or }(I\cap e_j\neq\emptyset\text{ and }\s|_{e_j}=\mathbf{1})\right\}=\bigcup_{\emptyset\neq J\subset[K]}C_{I,J},
\end{aligned}
\end{equation}
where we define
\begin{equation}
\begin{aligned}
C_{I,J}=\Big\{\s\in\mbb{Z}_2^{\abs{I}}:\forall j\in J,I\cap e_j=\emptyset\text{ or }(I\cap e_j\neq\emptyset\text{ and }\s|_{e_j}=\mathbf{1});
\forall j\in[K]-J,I\cap e_j\neq\emptyset\text{ and }\s|_{e_j}\neq\mathbf{1}\Big\}.
\end{aligned}
\end{equation}
Now we have
\begin{equation}
\begin{aligned}
\mc{R}\left(\mc{E}_{\lambda}^{\otimes n}(\Phi)\right)\le\mc{R}\left(\mc{E}_{\lambda}^{\otimes n}(\Psi)\right)+\sum_{I\subset[n]}(1-\lambda)^{n-\abs{I}}\lambda^{\abs{I}}\frac{1}{2^{\abs{I}}}\sum_{\emptyset\neq J\subset[K]}\sum_{\s\in C_{I,J}}\left[\mc{R}\big(\Phi^{(I,\s)}\big)+\mc{R}\big(\Psi^{(I,\s)}\big)\right].
\end{aligned}
\end{equation}
For $\s\in C_{I,J}$, we can add at most $\abs{J}$ edges on $\Psi^{(I,\s)}$ to obtain $\Phi^{(I,\s)}$.
Using Lemma~\ref{lem:CCCZ_magic_UB} iteratively we obtain $\mc{R}\big(\Phi^{(I,\s)}\big)\le5^{\abs{J}}\mc{R}\big(\Psi^{(I,\s)}\big)$, thus
\begin{equation}
\begin{aligned}
\mc{R}\left(\mc{E}_{\lambda}^{\otimes n}(\Phi)\right)\le&\mc{R}\left(\mc{E}_{\lambda}^{\otimes n}(\Psi)\right)+\sum_{I\subset[n]}(1-\lambda)^{n-\abs{I}}\lambda^{\abs{I}}\frac{1}{2^{\abs{I}}}\sum_{\emptyset\neq J\subset[K]}\sum_{\s\in C_{I,J}}(5^{\abs{J}}+1)\mc{R}\big(\Psi^{(I,\s)}\big)\\
\le&\mc{R}\left(\mc{E}_{\lambda}^{\otimes n}(\Psi)\right)+C_{\Psi}\sum_{I\subset[n]}(1-\lambda)^{n-\abs{I}}\lambda^{\abs{I}}\frac{1}{2^{\abs{I}}}\sum_{\emptyset\neq J\subset[K]}\sum_{\s\in C_{I,J}}(5^{\abs{J}}+1)\\
=&\mc{R}\left(\mc{E}_{\lambda}^{\otimes n}(\Psi)\right)+C_{\Psi}\sum_{\emptyset\neq J\subset[K]}(5^{\abs{J}}+1)\sum_{I\subset[n]}(1-\lambda)^{n-\abs{I}}\lambda^{\abs{I}}\frac{1}{2^{\abs{I}}}\abs{C_{I,J}}\\
=&\mc{R}\left(\mc{E}_{\lambda}^{\otimes n}(\Psi)\right)+C_{\Psi}\sum_{\emptyset\neq J\subset[K]}(5^{\abs{J}}+1)\sum_{I_1\subset\cup_{j\in J}e_j}\sum_{I_2\subset[n]-\cup_{j\in J}e_j}(1-\lambda)^{n-\abs{I_1}-\abs{I_2}}\lambda^{\abs{I_1}+\abs{I_2}}\frac{1}{2^{\abs{I_1}+\abs{I_2}}}\abs{C_{I_1\cup I_2,J}},
\end{aligned}
\end{equation}
where we split $I$ into two disjoint sets $I_1$ and $I_2$ in the last line. Since $\s\in C_{I_1\cup I_2,J}$ can only take value $1$ in $I_1$, we know that $\abs{C_{I_1\cup I_2,J}}\le2^{\abs{I_2}}$, thus
\begin{equation}
\begin{aligned}
\mc{R}\left(\mc{E}_{\lambda}^{\otimes n}(\Phi)\right)\le&\mc{R}\left(\mc{E}_{\lambda}^{\otimes n}(\Psi)\right)+C_{\Psi}\sum_{\emptyset\neq J\subset[K]}(5^{\abs{J}}+1)\sum_{I_1\subset\cup_{j\in J}e_j}\sum_{I_2\subset[n]-\cup_{j\in J}e_j}(1-\lambda)^{n-\abs{I_1}-\abs{I_2}}\lambda^{\abs{I_1}+\abs{I_2}}\frac{1}{2^{\abs{I_1}}}.
\end{aligned}
\end{equation}
Denote $a=\abs{I_1}$ and $b=\abs{I_2}$, we have
\begin{equation}
\begin{aligned}
&\sum_{I_1\subset\cup_{j\in J}e_j}\sum_{I_2\subset[n]-\cup_{j\in J}e_j}(1-\lambda)^{n-\abs{I_1}-\abs{I_2}}\lambda^{\abs{I_1}+\abs{I_2}}\frac{1}{2^{\abs{I_1}}}\\
=&\sum_{a=0}^{\abs{\cup_{j\in J}e_j}}\binom{\abs{\cup_{j\in J}e_j}}{a}(1-\lambda)^{\abs{\cup_{j\in J}e_j}-a}\Big(\frac{\lambda}{2}\Big)^a\sum_{b=0}^{n-\abs{\cup_{j\in J}e_j}}\binom{n-\abs{\cup_{j\in J}e_j}}{b}(1-\lambda)^{n-\abs{\cup_{j\in J}e_j}-b}\lambda^b\\
=&\Big(1-\frac{\lambda}{2}\Big)^{\abs{\cup_{j\in J}e_j}}1^{n-\abs{\cup_{j\in J}e_j}}=\Big(1-\frac{\lambda}{2}\Big)^{\abs{\cup_{j\in J}e_j}},
\end{aligned}
\end{equation}
thus we obtain
\begin{equation}
\mc{R}\left(\mc{E}_{\lambda}^{\otimes n}\left(\Phi\right)\right)\le\mc{R}\left(\mc{E}_{\lambda}^{\otimes n}\left(\Psi\right)\right)+C_{\Psi}\sum_{\emptyset\neq J\subset[K]}(5^{\abs{J}}+1)\Big(1-\frac{\lambda}{2}\Big)^{\abs{\cup_{j\in J}e_j}}.
\end{equation}

\end{proof}

\section{Non-vanishing local magic implies non-vanishing threshold}\label{app:proof_LocalMagic_then_Threshold}

\begin{lem}\label{lem:Distance_LB_by_RoM}
Suppose $a$ and $b$ are two constant satisfying $b>a\ge1$,
then there exists a constant $C'>0$ such that for all $K$-qubit state $\rho$ satisfying $\mc{R}(\rho)\ge b$, and all $\sigma$ satisfying $\mc{R}(\sigma)\le a$, we have $\onenorm{\rho-\sigma}\ge C'$.
\end{lem}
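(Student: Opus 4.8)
The plan is to deduce the lemma from a Lipschitz-continuity property of the RoM with respect to the trace norm on the fixed, finite-dimensional space of $K$-qubit states. Concretely, I would show that there is a constant $L_K$, depending only on $K$, such that $\abs{\mc{R}(\rho)-\mc{R}(\sigma)}\le L_K\onenorm{\rho-\sigma}$ for all $K$-qubit states $\rho,\sigma$. Granting this, the lemma is immediate: since $\mc{R}(\rho)\ge b$ and $\mc{R}(\sigma)\le a$ give $\mc{R}(\rho)-\mc{R}(\sigma)\ge b-a>0$, the Lipschitz bound yields $\onenorm{\rho-\sigma}\ge(b-a)/L_K$, so one may take $C'=(b-a)/L_K>0$. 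The point that makes this work is that $K$ is a fixed constant, so $L_K$ (and hence $C'$) is a genuine constant rather than something that degrades with the total system size.

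To establish the Lipschitz bound I would work with the dual program~\eqref{eq:RoM_dual}, writing $\mc{R}(\rho)=\max_A\Tr(\rho A)$ over the compact polytope of Hermitian witnesses $A$ satisfying $\abs{\Tr(\phi A)}\le1$ for all $\phi\in\stab_K$ (compactness and attainment of the maximum follow once the witnesses are shown to be bounded below). Let $A_\rho$ and $A_\sigma$ be optimal witnesses for $\rho$ and $\sigma$. Feasibility of $A_\sigma$ for the $\rho$-problem and of $A_\rho$ for the $\sigma$-problem gives the two-sided estimate $\Tr\!\big((\rho-\sigma)A_\sigma\big)\le\mc{R}(\rho)-\mc{R}(\sigma)\le\Tr\!\big((\rho-\sigma)A_\rho\big)$, whence $\abs{\mc{R}(\rho)-\mc{R}(\sigma)}\le\max_{A\text{ feasible}}\abs{\Tr\!\big((\rho-\sigma)A\big)}\le\onenorm{\rho-\sigma}\cdot\max_{A\text{ feasible}}\norm{A}_\infty$ by H\"older's inequality. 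It therefore suffices to bound the operator norm of every feasible witness by a constant depending only on $K$.

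This uniform bound on feasible witnesses is the only substantive step, and I expect it to be the main obstacle. I would expand $A=\sum_P c_P P$ in the Pauli basis, so that $\norm{A}_\infty\le\sum_P\abs{c_P}$ with $c_P=2^{-K}\Tr(PA)$. For $P=\mbb{I}$, the maximally mixed state $\mbb{I}/2^K\in\stab_K$ gives $\abs{\Tr(A)}\le2^K$; for $P\neq\mbb{I}$, the operators $(\mbb{I}\pm P)/2^K$ are the maximally mixed states on the $\pm1$ eigenspaces of $P$, which are stabilizer subspaces and hence lie in $\stab_K$, so $\abs{\Tr(A)\pm\Tr(PA)}\le2^K$ and thus $\abs{\Tr(PA)}\le2^K$. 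Hence $\abs{c_P}\le1$ for every Pauli, giving $\norm{A}_\infty\le4^K=:L_K$ and completing the argument. The crux is exactly this boundedness of the dual feasible set: it hinges on the fact that stabilizer states (including the mixed ones above) affinely span the full operator space, so the constraints $\abs{\Tr(\phi A)}\le1$ pin down every Pauli coefficient of $A$. (Alternatively, one could argue more softly that $\{\sigma:\mc{R}(\sigma)\le a\}$ and $\{\rho:\mc{R}(\rho)\ge b\}$ are disjoint closed subsets of the compact set of $K$-qubit density matrices and therefore have positive trace-norm distance; this also yields $C'>0$ but without the explicit value $(b-a)/4^K$.)
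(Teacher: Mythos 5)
Your proof is correct, but it takes a genuinely different route from the paper's. The paper argues by contradiction on the primal side: assuming sequences $\rho_m,\sigma_m$ with $\mc{R}(\rho_m)\ge b$, $\mc{R}(\sigma_m)\le a$ and $\onenorm{\rho_m-\sigma_m}\le 1/m$, it writes the Jordan decomposition $\rho_m-\sigma_m=\tau_m^+-\tau_m^-$ with $\Tr(\tau_m^\pm)=\onenorm{\rho_m-\sigma_m}/2$, and uses convexity of $\mc{R}(\cdot)$ together with the universal bound $\mc{R}(\cdot)\le\sqrt{2^K(2^K+1)}$ for $K$-qubit states (imported from the literature) to conclude $\mc{R}(\rho_m)\le a+\frac{1}{m}\sqrt{2^K(2^K+1)}<b$ for large $m$, a contradiction. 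Unwound, that is exactly the Lipschitz estimate you set out to prove, $\abs{\mc{R}(\rho)-\mc{R}(\sigma)}\le\sqrt{2^K(2^K+1)}\,\onenorm{\rho-\sigma}$, obtained on the primal side with constant roughly $2^K$. You instead work on the dual side: the two-sided optimal-witness estimate plus H\"older reduces everything to a uniform bound on $\norm{A}_\infty$ over the dual feasible set, which you obtain by testing $A$ against the mixed stabilizer states $\mbb{I}/2^K$ and $(\mbb{I}\pm P)/2^K$ to pin every Pauli coefficient to $\abs{c_P}\le 1$; this step is sound, since $(\mbb{I}\pm P)/2^K$ is indeed the uniform mixture of the pure stabilizer basis of the $\pm1$ eigenspace of $P$, so it lies in $\stab_K$. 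Both arguments hinge on the same crucial point, namely that $K$ is fixed so dimension-dependent constants are harmless. What yours buys: it is self-contained (it does not need the cited max-RoM bound), it yields the explicit constant $C'=(b-a)/4^K$, and the witness-norm bound $\norm{A}_\infty\le 4^K$ is reusable elsewhere. What the paper's buys: it is shorter, avoids any discussion of dual attainment and compactness, and gives a slightly better constant $\sqrt{2^K(2^K+1)}$ versus your $4^K$. One caveat on your parenthetical ``softer'' alternative: closedness of the superlevel set $\{\rho:\mc{R}(\rho)\ge b\}$ requires upper semicontinuity of $\mc{R}$, which is not free a priori; since your main argument establishes Lipschitz continuity anyway, this aside is redundant rather than wrong.
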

\begin{proof}
If not, then there exist two sequences of states $\{\rho_m\}$ and $\{\sigma_m\}$ satisfying $\mc{R}(\rho_m)\ge b$, $\mc{R}(\sigma_m)\le a$ and $\onenorm{\rho_m-\sigma_m}\le\frac{1}{m}$. Denote the positive and negative part of $\rho_m-\sigma_m$ be $\tau_m^+$ and $\tau_m^-$. We have $\rho_m-\sigma_m=\tau_m^+-\tau_m^-$ and $\Tr(\tau_m^+)=\Tr(\tau_m^-)=\onenorm{\rho_m-\sigma_m}/2\le\frac{1}{2m}$. We can write $\rho_m=\sigma_m+\Tr(\tau_m^+)\frac{\tau_m^+}{\Tr(\tau_m^+)}-\Tr(\tau_m^+)\frac{\tau_m^-}{\Tr(\tau_m^-)}$. Note that the RoM of all $K$-qubit quantum states is upper bounded by $\sqrt{2^K(2^K+1)}$ \cite{Liu2022manybody}, thus by convexity of $\mc{R}(\cdot)$ we have
\begin{equation}
\begin{aligned}
\mc{R}(\rho_m)\le&\mc{R}(\sigma_m)+\Tr(\tau_m^+)\mc{R}\left(\frac{\tau_m^+}{\Tr(\tau_m^+)}\right)+\Tr(\tau_m^+)\mc{R}\left(\frac{\tau_m^-}{\Tr(\tau_m^-)}\right)\\
\le&a+\frac{1}{m}\sqrt{2^K(2^K+1)}<b,
\end{aligned}
\end{equation}
where the last inequality holds when $m$ is sufficiently large. This contradicts the previous assumption.
\end{proof}

\begin{prop}\label{prop:LocalMagic_then_Threshold_app}
Suppose $\{\rho_n\}$ is a family of states with non-vanishing local magic, then for any small enough $\epsilon\ge0$, it has a non-vanishing magic threshold $\lambda^*_\epsilon$.
\end{prop}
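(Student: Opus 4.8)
The plan is to reduce the threshold of the entire family $\{\rho_n\}$ to the threshold of a fixed-size reduced state, where the distance estimate of Lemma~\ref{lem:Distance_LB_by_RoM} applies directly. First I would unpack the hypothesis of non-vanishing local magic to fix a constant $K$ together with a constant $b>1$ such that, for all sufficiently large $n$, there is a subset $J_n\subset[n]$ with $\abs{J_n}=K$ and $\mc{R}\big(\Tr_{\overline{J_n}}(\rho_n)\big)=\mc{M}_K(\rho_n)\ge b$. Write $\tau_n:=\Tr_{\overline{J_n}}(\rho_n)$, a family of $K$-qubit states all of whose RoM values are bounded below by $b$.

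The central reduction rests on two structural facts about independent noise. Since the depolarizing channel acts qubit-by-qubit, the partial trace commutes with it, giving $\Tr_{\overline{J_n}}\big(\mc{E}_\lambda^{\otimes n}(\rho_n)\big)=\mc{E}_\lambda^{\otimes K}(\tau_n)$; and because partial trace is a trace-preserving stabilizer channel, RoM is monotone under it by property (ii), so $\mc{R}\big(\mc{E}_\lambda^{\otimes n}(\rho_n)\big)\ge\mc{R}\big(\mc{E}_\lambda^{\otimes K}(\tau_n)\big)$. It therefore suffices to lower-bound the threshold of the fixed-dimension family $\{\tau_n\}$.

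Next I would quantify how little the noise disturbs a $K$-qubit state for small $\lambda$. A telescoping (hybrid) argument over the $K$ tensor factors, using that channels are trace-norm contractive and that $\onenorm{\mc{E}_\lambda(\eta)-\eta}=\lambda\onenorm{\mbb{I}_2/2-\eta}\le 2\lambda$ for every single-qubit $\eta$, yields the uniform bound $\onenorm{\mc{E}_\lambda^{\otimes K}(\tau)-\tau}\le 2K\lambda$ for all $K$-qubit states $\tau$. Now fix any $\epsilon$ with $0\le\epsilon<b-1$ and apply Lemma~\ref{lem:Distance_LB_by_RoM} with $a=1+\epsilon$ and the above $b$: there is a constant $C'>0$ such that every $\sigma$ with $\mc{R}(\sigma)\le 1+\epsilon$ satisfies $\onenorm{\tau_n-\sigma}\ge C'$, using $\mc{R}(\tau_n)\ge b$. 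If one had $\mc{R}\big(\mc{E}_\lambda^{\otimes K}(\tau_n)\big)\le 1+\epsilon$, then choosing $\sigma=\mc{E}_\lambda^{\otimes K}(\tau_n)$ would force $C'\le\onenorm{\tau_n-\sigma}\le 2K\lambda$, i.e. $\lambda\ge C'/(2K)=:\lambda_0>0$.

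Consequently, for every $\lambda<\lambda_0$ and every sufficiently large $n$ we obtain $\mc{R}\big(\mc{E}_\lambda^{\otimes n}(\rho_n)\big)\ge\mc{R}\big(\mc{E}_\lambda^{\otimes K}(\tau_n)\big)>1+\epsilon$, so $\lambda^*_\epsilon(\rho_n)\ge\lambda_0$ and hence $\liminf_{n\to\infty}\lambda^*_\epsilon(\rho_n)\ge\lambda_0>0$, which is exactly a non-vanishing threshold. The only genuinely delicate point is the uniformity hidden in the phrase ``sufficiently large $n$'': the constant $C'$, and thus $\lambda_0$, must be extracted from the lemma \emph{independently of} $n$. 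This is precisely why reducing to a fixed qubit number $K$ is essential, since the dimension-dependent RoM ceiling $\sqrt{2^K(2^K+1)}$ invoked inside the lemma stays constant along the family; everything else is routine continuity and monotonicity.
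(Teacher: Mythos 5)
Your proposal is correct and follows essentially the same route as the paper's proof: reduce to the $K$-qubit reduced states $\tau_n=\Tr_{\overline{J_n}}(\rho_n)$ via commutation of partial trace with local noise and RoM monotonicity, invoke Lemma~\ref{lem:Distance_LB_by_RoM} with $a=1+\epsilon<b$, and conclude that a constant noise rate is needed to move $\tau_n$ into the low-RoM region. The only (immaterial) difference is your perturbation bound $\onenorm{\mc{E}_\lambda^{\otimes K}(\tau)-\tau}\le 2K\lambda$ obtained by telescoping, where the paper instead expands the product channel to get $3\left(1-(1-\lambda)^K\right)$; both yield an $n$-independent threshold lower bound.
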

\begin{proof}
By definition of non-vanishing local magic, there exists a constant $K$ such that
\begin{equation}
\underset{n\rightarrow\infty}{\liminf}~\max_{J\subset[n],\abs{J}=K}\mc{R}\left(\Tr_{\overline{J}}(\rho_n)\right)>1.
\end{equation}
Thus there exists a constant $C>1$, and a sequence ${J_n}$ with $J_n\subset[n]$ and $\abs{J_n}=K$, such that $\mc{R}\left(\Tr_{\overline{J_n}}(\rho_n)\right)\ge C$ for all sufficiently large $n$.
Note that we have
\begin{equation}
\mc{R}\left(\mc{E}_\lambda^{\otimes n}\left(\rho_n\right)\right)\ge\mc{R}\left(\Tr_{\overline{J_n}}\left(\mc{E}_\lambda^{\otimes n}\left(\rho_n\right)\right)\right)=\mc{R}\left(\mc{E}_\lambda^{\otimes K}\left(\Tr_{\overline{J_n}}\left(\rho_n\right)\right)\right),
\end{equation}
it suffices to show that for all sufficiently large $n$, the state $\Tr_{\overline{J_n}}\left(\rho_n\right)$ has a constant (independent of $n$) threshold.

Take $b=C>1+\epsilon=a$ in Lemma~\ref{lem:Distance_LB_by_RoM}, we know that there exists a constant $C'>0$ such that for all sufficiently large $n$ and all $\sigma$ satisfying $\mc{R}(\sigma)\le1+\epsilon$, we have $\onenorm{\Tr_{\overline{J_n}}\left(\rho_n\right)-\sigma}\ge C'$.

But the distance when moving in the space of $K$-qubit state under noise $\mc{E}_\lambda^{\otimes K}$ has an upper bound with regard to $\lambda$. Specifically, for arbitrary $K$-qubit state $\tau$, we have
\begin{equation}
\begin{aligned}
\onenorm{\mc{E}_\lambda^{\otimes K}(\tau)-\tau}\le&\left(1-(1-\lambda)^K\right)\onenorm{\tau}+\sum_{I\subset[n],\abs{I}\ge1}(1-\lambda)^{K-\abs{I}}\lambda^{\abs{I}}\onenorm{\Tr_{I}(\tau)\otimes\frac{\mbb{I}_I}{2^{\abs{I}}}-\tau}\\
\le&3\left(1-(1-\lambda)^K\right).
\end{aligned}
\end{equation}

To ensure $\mc{E}_\lambda^{\otimes K}\left(\Tr_{\overline{J_n}}\left(\rho_n\right)\right)$ falls into the region with RoM $\le1+\epsilon$, $\lambda$ must be big enough to guarantee $3\left(1-(1-\lambda)^K\right)\ge C'$, that is, $\lambda\ge1-\left(1-C'/3\right)^{\frac{1}{K}}$.

Thus the magic threshold of $\{\rho_n\}$ satisfies $\lambda^*_\epsilon\ge1-\left(1-C'/3\right)^{\frac{1}{K}}$, which is a constant with regard to $n$.
\end{proof}

\section{Example: $3$-complete hypergraph}\label{app:3-complete}

\subsection{Upper bound}

\begin{lem}\label{lemma:3complete_partialtrace}
Denote $\Gamma_n$ be the hypergraph state corresponding to the $3$-complete hypergraph with $n$ vertices. For $K\le n$ a constant, we have
\begin{equation}
\Tr_{\overline{[K]}}(\Gamma_n)=\frac{1}{2^{n-K}}\sum_{j=0}^{n-K}\binom{n-K}{j}\Phi_{j\bmod 4},
\end{equation}
where we define the four $K$-qubit hypergraph states as
\begin{equation}\label{eq:3complete_fourHGstate}
\begin{aligned}
\ket{\Phi_0}&=\ket{\Gamma_K},\\
\ket{\Phi_1}&=\prod_{e\subset[K],\abs{e}=2}\mathrm{C}Z_e\ket{\Gamma_K},\\
\ket{\Phi_2}&=\prod_{v\in[K]}Z_v\ket{\Gamma_K},\\
\ket{\Phi_3}&=\prod_{e\subset[K],\abs{e}=2}\mathrm{C}Z_e\prod_{v\in[K]}Z_v\ket{\Gamma_K}.
\end{aligned}
\end{equation}

\end{lem}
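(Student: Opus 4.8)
The plan is to bypass the graphical partial-trace rules and work directly with characteristic functions via Eq.~\eqref{eq:A3}. By Lemma~\ref{lemma:HGstate_PartialTrace}, tracing out the last $n-K$ qubits gives
\[
\Tr_{\overline{[K]}}(\Gamma_n)=\frac{1}{2^{n-K}}\sum_{\mathbf{b}\in\mbb{Z}_2^{n-K}}\Gamma_n^{(\overline{[K]},\mathbf{b})},
\]
and each reduced hypergraph state $\Gamma_n^{(\overline{[K]},\mathbf{b})}$ is pinned down by its characteristic function, which Eq.~\eqref{eq:A3} produces by freezing the traced-out coordinates to $\mathbf{b}$ in $f_{\Gamma_n}(\mathbf{x})=\sum_{\{p,q,r\}\subset[n]}x_px_qx_r$. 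The first step is therefore to evaluate $g_{\mathbf{b}}(\mathbf{y}):=f_{\Gamma_n}(\mathbf{y},\mathbf{b})$ as a polynomial in the surviving variables $\mathbf{y}=(y_1,\dots,y_K)$ and to check that it depends on $\mathbf{b}$ only through the weight $j:=\mathrm{wt}(\mathbf{b})$.

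Next I would split the degree-$3$ edges $\{p,q,r\}$ according to how many of their vertices lie in $[K]$. Writing $S=\{i>K:b_i=1\}$ so that $\abs{S}=j$, the four cases contribute $\sum_{\{p,q,r\}\subset[K]}y_py_qy_r$ (all three inside), $j\sum_{\{p,q\}\subset[K]}y_py_q$ (two inside, with $\sum_{r>K}b_r=j$ from the outside vertex), $\binom{j}{2}\sum_{v\in[K]}y_v$ (one inside, since both outside vertices must lie in $S$), and the constant $\binom{j}{3}$ (all three outside). Reducing modulo $2$, the constant is only a global phase on $\ket{\Gamma_n^{(\overline{[K]},\mathbf{b})}}$ and cancels in the density matrix, so up to that phase
\[
g_{\mathbf{b}}(\mathbf{y})\equiv f_{\Gamma_K}(\mathbf{y})+(j\bmod 2)\!\!\sum_{\{p,q\}\subset[K]}\!\!y_py_q+\Big(\tbinom{j}{2}\bmod 2\Big)\sum_{v\in[K]}y_v.
\]
The two correction terms are exactly the characteristic functions added by the dressings $\prod_{\abs{e}=2}\mathrm{C}Z_e$ and $\prod_{v}Z_v$ that appear in Eq.~\eqref{eq:3complete_fourHGstate}, so $\Gamma_n^{(\overline{[K]},\mathbf{b})}$ must be one of $\Phi_0,\dots,\Phi_3$, selected by the parity pair $\big(j\bmod 2,\ \tbinom{j}{2}\bmod 2\big)$.

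It then remains to read off this pair as $j$ runs over the residues modulo $4$: a short computation gives $(0,0),(1,0),(0,1),(1,1)$ for $j\equiv0,1,2,3\pmod 4$, which selects $\Phi_0,\Phi_1,\Phi_2,\Phi_3$ respectively. Hence $\Gamma_n^{(\overline{[K]},\mathbf{b})}=\Phi_{j\bmod 4}$ whenever $\mathrm{wt}(\mathbf{b})=j$, and collecting the $\binom{n-K}{j}$ bit-strings of each weight $j$ in the sum above yields the claimed identity. The main obstacle is just the combinatorial bookkeeping of the integer coefficients $j$ and $\binom{j}{2}$ before their mod-$2$ reduction; the clean observation behind the result is that the parity of $\binom{j}{2}$ has period $4$ in $j$, which is what forces the four-fold structure indexed by $j\bmod 4$, while the rest is a routine expansion of the degree-$3$ elementary symmetric polynomial together with the linearity of the decomposition in Lemma~\ref{lemma:HGstate_PartialTrace}.
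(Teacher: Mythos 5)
Your proof is correct and follows essentially the same route as the paper: both reduce via Lemma~\ref{lemma:HGstate_PartialTrace}, observe that the reduced state depends on $\mathbf{b}$ only through $j=\mathrm{wt}(\mathbf{b})$, and identify it as $\Phi_{j\bmod 4}$ by counting the induced $2$-edge and $1$-edge multiplicities $j$ and $\binom{j}{2}$ modulo $2$ (the paper phrases this counting graphically via its edge-removal rules, you phrase it algebraically via the characteristic function, but it is the same computation, and your explicit disposal of the constant $\binom{j}{3}$ term as a global phase is a point the paper leaves implicit).
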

\begin{proof}
By Lemma~\ref{lemma:HGstate_PartialTrace}, we have
\begin{equation}
\Tr_{\overline{[K]}}(\Gamma_n)=\frac{1}{2^{n-K}}\sum_{\mathbf{b}\in\mbb{Z}_2^{n-K}}\Gamma_n^{(\overline{[K]},\mathbf{b})}.
\end{equation}
If a qubit is labeled $0$ when tracing out, then all the edges around this qubit are removed, thus $\Gamma_n^{(\overline{[K]},\mathbf{b})}=\Gamma_{K+\abs{\mathbf{b}}}^{(\overline{[K]},1^{\abs{\mathbf{b}}})}$, which only depends on $\abs{\mathbf{b}}$. Denote $\Sigma_j=\Gamma_{K+j}^{(\overline{[K]},1^j)}$ for $j=0,1,2,\cdots$ for clarity. Now we prove $\Sigma_j=\Phi_{j\bmod 4}$.

We have $\Sigma_0=\Phi_0$. 

For $\Sigma_1$, the qubit $K+1$ is traced out, thus the edges $\{k_1,k_2,K+1\}$ becomes $\{k_1,k_2\}$, for all $\{k_1,k_2\}\subset[K]$, adding all the $2$-edges to the remaining $K$ qubits. Thus $\Sigma_1=\Phi_1$. 

For $\Sigma_2$, the edges $\{k_1,k_2,K+1\}$, $\{k_1,k_2,K+2\}$ becomes $\{k_1,k_2\}$, for all $\{k_1,k_2\}\subset[K]$, which cancels themselves. The edges $\{k,K+1,K+2\}$ becomes $\{k\}$, for all $k\in [K]$, adding all the $1$-edges to the remaining $K$ qubits. Thus $\Sigma_2=\Phi_2$.

For $\Sigma_3$, the edges $\{k_1,k_2,K+1\}$, $\{k_1,k_2,K+2\}$, $\{k_1,k_2,K+3\}$ becomes $\{k_1,k_2\}$, for all $\{k_1,k_2\}\subset[K]$, which adds all the $2$-edges to the remaining $K$ qubits. The edges $\{k,K+1,K+2\}$, $\{k,K+1,K+3\}$, $\{k,K+2,K+3\}$ becomes $\{k\}$, for all $k\in [K]$, adding all the $1$-edges. Thus $\Sigma_3=\Phi_3$.

More generally, $\Sigma_j$ has the $2$-edges $\iff$ $j$ odd. $\Sigma_j$ has the $1$-edges $\iff$ $\binom{j}{2}$ odd $\iff$ when $j\overset{\bmod 4}{=}2\text{ or }3$. Thus we have $\Sigma_j=\Phi_{j\bmod 4}$.

Now we can compute $\Tr_{\overline{[K]}}(\Gamma_n)$ as
\begin{equation}
\begin{aligned}
\Tr_{\overline{[K]}}(\Gamma_n)
=\frac{1}{2^{n-K}}\sum_{\mathbf{b}\in\mbb{Z}_2^{n-K}}\Gamma_{K+\abs{\mathbf{b}}}^{(\overline{[K]},1^{\abs{\mathbf{b}}})}=\frac{1}{2^{n-K}}\sum_{j=0}^{n-K}\binom{n-K}{j}\Phi_{j\bmod 4}.
\end{aligned}
\end{equation}
\end{proof}

\begin{lem}\label{lemma:3complete_1/4mixed_is_STAB}
For hypergraph states $\Phi_l$ defined in Eq.~\eqref{eq:3complete_fourHGstate}, we have $\frac{1}{4}\sum_{l=0}^3\Phi_l\in\mathrm{SATB}$.
\end{lem}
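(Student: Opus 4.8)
The plan is to show that the mixed state $M:=\tfrac14\sum_{l=0}^3\Phi_l$, despite each $\Phi_l$ being individually magical, is a \emph{uniform} convex combination of four single-qubit-product stabilizer states. First I would pass to characteristic functions. Writing $\ket{\Phi_l}=2^{-K/2}\sum_{\x\in\mbb{Z}_2^K}(-1)^{f_l(\x)}\ket{\x}$ and using that on $\{0,1\}$-inputs the elementary symmetric polynomial $\sum_{i_1<\dots<i_j}x_{i_1}\cdots x_{i_j}$ equals $\binom{\abs{\x}}{j}$ (with $\abs{\x}$ the Hamming weight), the four operations in Eq.~\eqref{eq:3complete_fourHGstate} give $g_\epsilon(\x):=\binom{\abs{\x}}{3}+\epsilon_2\binom{\abs{\x}}{2}+\epsilon_1\binom{\abs{\x}}{1}\pmod 2$, where $(\epsilon_2,\epsilon_1)$ runs over all of $\mbb{Z}_2^2$ as $l=0,1,2,3$. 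Hence
\begin{equation}
M_{\x\y}=\frac{1}{4\cdot2^K}\sum_{(\epsilon_2,\epsilon_1)\in\mbb{Z}_2^2}(-1)^{g_\epsilon(\x)+g_\epsilon(\y)}.
\end{equation}
The cubic contribution factors out of the $\mbb{Z}_2^2$-sum, which then evaluates to $4$ exactly when $\binom{\abs{\x}}{1}\equiv\binom{\abs{\y}}{1}$ and $\binom{\abs{\x}}{2}\equiv\binom{\abs{\y}}{2}\pmod 2$, and to $0$ otherwise.

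Next I would simplify using the mod-$2$ reductions of binomial coefficients: $\binom{w}{1}\bmod2$ and $\binom{w}{2}\bmod2$ are precisely the two lowest binary digits of $w$, so the surviving condition is exactly $\abs{\x}\equiv\abs{\y}\pmod 4$; moreover $\binom{w}{3}\bmod2$ depends only on $w\bmod4$ (it is $1$ iff $w\equiv3$), so the factored-out sign $(-1)^{\binom{\abs{\x}}{3}+\binom{\abs{\y}}{3}}$ equals $1$ on every surviving pair. Therefore $M_{\x\y}=2^{-K}\,\mathbbm{1}[\abs{\x}\equiv\abs{\y}\ (\mathrm{mod}\ 4)]$; in particular the seemingly ``magical'' phases $(-1)^{\binom{\abs{\x}}{3}}$ cancel completely, and $M=2^{-K}\sum_{c=0}^{3}\ketbra{D_c}{D_c}$ with $\ket{D_c}=\sum_{\abs{\x}\equiv c\,(4)}\ket{\x}$.

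The final and key step is to diagonalize the weight-mod-$4$ constraint by a discrete Fourier transform over $\mbb{Z}_4$. Since $\mathbbm{1}[\abs{\x}\equiv\abs{\y}\ (\mathrm{mod}\ 4)]=\tfrac14\sum_{k=0}^3 i^{\,k(\abs{\x}-\abs{\y})}$ and $i^{\,k\abs{\x}}=\prod_{v=1}^{K}i^{\,kx_v}$ factorizes across qubits, I would obtain
\begin{equation}
M=\frac14\sum_{k=0}^3\ketbra{\theta_k}{\theta_k},\qquad \ket{\theta_k}=\Big(\tfrac{1}{\sqrt2}\big(\ket{0}+i^{\,k}\ket{1}\big)\Big)^{\otimes K},
\end{equation}
which I would verify directly by checking that the right-hand side reproduces $M_{\x\y}=2^{-K}\mathbbm{1}[\abs{\x}\equiv\abs{\y}\ (\mathrm{mod}\ 4)]$. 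The states $\ket{\theta_k}=(S^{\otimes K})^k\ket{+^K}$ with $S=\operatorname{diag}(1,i)$ are the products $\ket{+^K},\ket{+i}^{\otimes K},\ket{-^K},\ket{-i}^{\otimes K}$, hence pure stabilizer states (equivalently, $M$ is the twirl of $\ket{+^K}$ over the Clifford cyclic group $\langle S^{\otimes K}\rangle$), so $M\in\stab$.

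I expect the main obstacle to be resisting the ``obvious'' route: grouping basis strings by weight mod $4$ does block-diagonalize $M$, but the resulting rank-one eigenvectors are Dicke/W-type states (for $K=3$ the relevant block is literally the $W$ state), which are \emph{not} stabilizer, so no conclusion follows from the eigen-decomposition. The crucial realization is that the correct decomposition cuts across the weight sectors—the $\mbb{Z}_4$ Fourier transform replaces the weight-projectors by the four uniform product states $\ket{\theta_k}$—and that the exact cancellation of the cubic phase (via the Lucas-type reductions) is what makes this mixture land inside $\stab$.
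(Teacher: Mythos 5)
Your proof is correct, and it takes a genuinely different route from the paper's. Both arguments begin with the same computation: summing the four characteristic functions $\binom{\abs{\x}}{3}+\epsilon_2\binom{\abs{\x}}{2}+\epsilon_1\binom{\abs{\x}}{1}$ over $(\epsilon_1,\epsilon_2)\in\mbb{Z}_2^2$ factors the matrix element into $(-1)^{\binom{\abs{\x}}{3}+\binom{\abs{\y}}{3}}\bigl(1+(-1)^{\binom{\abs{\x}}{2}+\binom{\abs{\y}}{2}}\bigr)\bigl(1+(-1)^{\binom{\abs{\x}}{1}+\binom{\abs{\y}}{1}}\bigr)$. From here the paper only extracts a support condition — entries with $\binom{\abs{\x}}{3}+\binom{\abs{\y}}{3}$ odd vanish — and uses it to conclude that the mixture is invariant under conjugation by $\prod_{\abs{e}=3}\mathrm{CC}Z_e$; conjugating then replaces each $\Phi_l$ by the graph state $\Phi_l'=\prod_e\mathrm{CC}Z_e\Phi_l\prod_e\mathrm{CC}Z_e$, so the same density matrix is a mixture of four pure stabilizer (graph) states. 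You instead pin down the matrix exactly: the Lucas-type reductions identify the surviving condition as $\abs{\x}\equiv\abs{\y}\pmod 4$ and show the cubic phase cancels identically, giving $M_{\x\y}=2^{-K}\,\mathbbm{1}[\abs{\x}\equiv\abs{\y}\ (\mathrm{mod}\ 4)]$, after which the $\mbb{Z}_4$ Fourier identity yields the explicit decomposition $M=\frac14\sum_{k=0}^3\dm{\theta_k}$ with $\ket{\theta_k}=(S^k\ket{+})^{\otimes K}$. What each buys: the paper's invariance trick is lighter (no exact evaluation of $M$ needed) and illustrates a reusable principle — a state supported where a diagonal non-Clifford acts trivially can be transported by that non-Clifford into the stabilizer polytope; your route gives a strictly more informative conclusion, since your four components are \emph{product} stabilizer states, so you prove in passing that $M$ is separable (the paper's components $\Phi_1',\Phi_3'$ are entangled graph states), and you identify $M$ as the twirl of $\ket{+^K}$ over $\langle S^{\otimes K}\rangle$. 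Your closing remark about the weight-sector eigendecomposition producing Dicke-type (non-stabilizer) eigenvectors is also a fair warning: the eigenbasis of $M$ is the wrong decomposition to look at, and both proofs succeed precisely by avoiding it.
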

\begin{proof}
We first show that $\frac{1}{4}\sum_{l=0}^3\Phi_l$ is invariant under $\prod_{e\subset[K],\abs{e}=3}\mathrm{CC}Z_e$.
For $\mathbf{a},\mathbf{b}\in\mbb{Z}_2^K$, we have 
\begin{equation}
	\begin{aligned}
		2^{K}\sum_{l=0}^3\bra{\mathbf{a}}\Phi_l\ket{\mathbf{b}}=&(-1)^{\binom{\abs{\mathbf{a}}}{3}+\binom{\abs{\mathbf{b}}}{3}}+(-1)^{\binom{\abs{\mathbf{a}}}{3}+\binom{\abs{\mathbf{b}}}{3}+\binom{\abs{\mathbf{a}}}{2}+\binom{\abs{\mathbf{b}}}{2}}\\
		&+(-1)^{\binom{\abs{\mathbf{a}}}{3}+\binom{\abs{\mathbf{b}}}{3}+\binom{\abs{\mathbf{a}}}{1}+\binom{\abs{\mathbf{b}}}{1}}+(-1)^{\binom{\abs{\mathbf{a}}}{3}+\binom{\abs{\mathbf{b}}}{3}+\binom{\abs{\mathbf{a}}}{2}+\binom{\abs{\mathbf{b}}}{2}+\binom{\abs{\mathbf{a}}}{1}+\binom{\abs{\mathbf{b}}}{1}}\\
		=&(-1)^{\binom{\abs{\mathbf{a}}}{3}+\binom{\abs{\mathbf{b}}}{3}}\left(1+(-1)^{\binom{\abs{\mathbf{a}}}{2}+\binom{\abs{\mathbf{b}}}{2}}\right)\left(1+(-1)^{\binom{\abs{\mathbf{a}}}{1}+\binom{\abs{\mathbf{b}}}{1}}\right).
	\end{aligned}
\end{equation}
Note that: $\binom{\abs{\mathbf{a}}}{3}+\binom{\abs{\mathbf{b}}}{3}$ odd $\Rightarrow$ one of $\abs{\mathbf{a}}$ and $\abs{\mathbf{b}}$ $\overset{\bmod4}{=}3$ $\Rightarrow$ one of $\binom{\abs{\mathbf{a}}}{2}+\binom{\abs{\mathbf{b}}}{2}$ and $\binom{\abs{\mathbf{a}}}{1}+\binom{\abs{\mathbf{b}}}{1}$ odd $\Rightarrow$ $\sum_{l=0}^3\bra{\mathbf{a}}\Phi_l\ket{\mathbf{b}}=0$. 
Thus
\begin{equation}
	\begin{aligned}
		\frac{1}{4}\sum_{l=0}^3\Phi_l=\sum_{\binom{\abs{\mathbf{a}}}{3}+\binom{\abs{\mathbf{b}}}{3}\text{ even}}\bra{\mathbf{a}}\left(\frac{1}{4}\sum_{l=0}^3\Phi_l\right)\ket{\mathbf{b}}\ketbra{\mathbf{a}}{\mathbf{b}},
	\end{aligned}
\end{equation}
which is invariant under $\prod_{e\subset[K],\abs{e}=3}\mathrm{CC}Z_e$, because $\prod_{e\subset[K],\abs{e}=3}\mathrm{CC}Z_e\ket{\mathbf{a}}=(-1)^{\binom{\abs{\mathbf{a}}}{3}}\ket{\mathbf{a}}$.
Denote $\ket{\Phi_l'}=\prod_{e\subset[K],\abs{e}=3}\mathrm{CC}Z_e\ket{\Phi_l}$, $l=0,1,2,3$, which are four pure stabilizer states. We can see that
\begin{equation}
	\begin{aligned}
		\frac{1}{4}\sum_{l=0}^3\Phi_l=\Bigg(\prod_{e\subset[K],\abs{e}=3}\mathrm{CC}Z_e\Bigg)\left(\frac{1}{4}\sum_{l=0}^3\Phi_l\right)\Bigg(\prod_{e\subset[K],\abs{e}=3}\mathrm{CC}Z_e\Bigg)=\frac{1}{4}\sum_{l=0}^3\Phi_l'
	\end{aligned}
\end{equation}
is a convex combination of these four pure stabilizer states, thus a mixed stabilizer state.
\end{proof}

\begin{prop}\label{prop:3complete_localmagic_UB}
Denote $\Gamma_n$ be the hypergraph state corresponding to the $3$-complete hypergraph with $n$ vertices. For $K$ a constant, we have
\begin{equation}
\mc{R}\left(\Tr_{\overline{[K]}}(\Gamma_n)\right)\le1+2^{1+\frac{3}{2}K-\frac{n}{2}}.
\end{equation}
\end{prop}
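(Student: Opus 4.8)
The plan is to combine the two lemmas just established. By Lemma~\ref{lemma:3complete_partialtrace}, the reduced state is the convex combination
\begin{equation}
\Tr_{\overline{[K]}}(\Gamma_n)=\sum_{l=0}^3 p_l\,\Phi_l,\qquad p_l:=\frac{1}{2^{n-K}}\sum_{\substack{0\le j\le n-K\\ j\equiv l\!\!\pmod 4}}\binom{n-K}{j},
\end{equation}
and by Lemma~\ref{lemma:3complete_1/4mixed_is_STAB} the ``balanced'' mixture $\frac14\sum_{l=0}^3\Phi_l$ is a stabilizer state, hence has RoM equal to $1$ by faithfulness (property i). The key idea is therefore to write $\Tr_{\overline{[K]}}(\Gamma_n)$ as this stabilizer state plus a small signed correction, and to control the correction using the fact that the four coefficients $p_l$ are each exponentially close to $1/4$.

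First I would estimate the deviations $p_l-\tfrac14$ by a roots-of-unity filter. Writing $N:=n-K$ and using $\sum_{j\equiv l(4)}\binom{N}{j}=\frac14\sum_{t=0}^3 i^{-tl}(1+i^t)^N$, the $t=0$ term gives exactly $\tfrac14$, the $t=2$ term vanishes for $N\ge1$, and the $t=1,3$ terms have modulus $2^{N/2}$ since $|1\pm i|=\sqrt2$. This yields $p_l-\tfrac14=2^{-(1+N/2)}\cos\!\big(\tfrac{N\pi}{4}-\tfrac{l\pi}{2}\big)$, so that, summing the absolute values $|\cos\theta|,|\sin\theta|,|\cos\theta|,|\sin\theta|$ over $l=0,1,2,3$ and using $|\cos\theta|+|\sin\theta|\le\sqrt2$,
\begin{equation}
\sum_{l=0}^3\Big|p_l-\tfrac14\Big|\le 2^{-(1+N/2)}\cdot 2\sqrt2=2^{\frac12-\frac{N}{2}}.
\end{equation}

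Next I would exploit the centering. Since $\sum_l p_l=1$, the expansion $\Tr_{\overline{[K]}}(\Gamma_n)=\frac14\sum_{l}\Phi_l+\sum_{l}(p_l-\tfrac14)\Phi_l$ has coefficients $1$ and $(p_l-\tfrac14)$ summing to $1$, so the signed convexity (property iv) applies:
\begin{equation}
\mc{R}\!\left(\Tr_{\overline{[K]}}(\Gamma_n)\right)\le \mc{R}\!\Big(\tfrac14\textstyle\sum_l\Phi_l\Big)+\sum_{l=0}^3\Big|p_l-\tfrac14\Big|\,\mc{R}(\Phi_l)=1+\sum_{l=0}^3\Big|p_l-\tfrac14\Big|\,\mc{R}(\Phi_l).
\end{equation}
Bounding each $K$-qubit state by $\mc{R}(\Phi_l)\le\sqrt{2^K(2^K+1)}\le 2^{K+\frac12}$ (the general RoM bound of \cite{Liu2022manybody}, using $2^K\le 2^{2K}$) and inserting the estimate above gives $\mc{R}(\Tr_{\overline{[K]}}(\Gamma_n))\le 1+2^{\frac12-\frac{N}{2}}\cdot 2^{K+\frac12}=1+2^{1+K-\frac{N}{2}}$, which upon substituting $N=n-K$ is exactly $1+2^{1+\frac32K-\frac n2}$, as claimed.

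I do not expect any single step to be a genuine obstacle here, since both structural inputs are already supplied by the preceding lemmas. The only place requiring care is the second paragraph: one must recognize that centering the decomposition at the stabilizer point $\frac14\sum_l\Phi_l$ (rather than expanding directly) is what converts the problem into estimating $\sum_l|p_l-\tfrac14|$, and that signed convexity—not ordinary convexity—is needed because the correction coefficients can be negative. The binomial-sum estimate is then a routine roots-of-unity computation, and the final exponent bookkeeping must be checked to confirm the factors $2^{1/2}$ from $\sum_l|p_l-\tfrac14|$ and from $\mc{R}(\Phi_l)$ combine to reproduce the stated constant.
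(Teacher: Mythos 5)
Your proposal is correct and follows essentially the same route as the paper's own proof: decompose $\Tr_{\overline{[K]}}(\Gamma_n)$ via Lemma~\ref{lemma:3complete_partialtrace}, recenter at the stabilizer mixture $\frac14\sum_l\Phi_l$ of Lemma~\ref{lemma:3complete_1/4mixed_is_STAB}, apply signed convexity, bound $\sum_l|p_l-\tfrac14|\le 2^{\frac12-\frac{n-K}{2}}$, and use the universal $K$-qubit RoM bound $\sqrt{2^K(2^K+1)}\le 2^{K+\frac12}$. The only cosmetic differences are that you express the coefficient deviations trigonometrically via a roots-of-unity filter where the paper keeps them as $\left(\frac{1\pm i}{2}\right)^{n-K}$, and you bound each $\mc{R}(\Phi_l)$ directly by the universal bound rather than first noting $\mc{R}(\Phi_l)=\mc{R}(\Gamma_K)$ by Clifford equivalence.
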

\begin{proof}
By Lemma~\ref{lemma:3complete_partialtrace}, we have
\begin{equation}
\Tr_{\overline{[K]}}(\Gamma_n)=\sum_{l=0}^3\left[\frac{1}{2^{n-K}}\sum_{j=0}^{n-K}\binom{n-K}{j}\delta_{j=l \bmod  4}\right]\Phi_{l}.
\end{equation}
Denote $m=n-K$, note that
\begin{equation}
\begin{aligned}
\frac{1}{2^m}\sum_{j=0}^m\binom{m}{j}\delta_{j=0 \bmod  4}&=\frac{1}{4}\left[\left(\frac{1+i}{2}\right)^m+\left(\frac{1-i}{2}\right)^m+1\right],\\
\frac{1}{2^m}\sum_{j=0}^m\binom{m}{j}\delta_{j=1 \bmod  4}&=\frac{1}{4}\left[-i \left(\frac{1+i}{2}\right)^m+i \left(\frac{1-i}{2}\right)^m+1\right],\\
\frac{1}{2^m}\sum_{j=0}^m\binom{m}{j}\delta_{j=2 \bmod  4}&=\frac{1}{4}\left[-\left(\frac{1+i}{2}\right)^m-\left(\frac{1-i}{2}\right)^m+1\right],\\
\frac{1}{2^m}\sum_{j=0}^m\binom{m}{j}\delta_{j=3 \bmod  4}&=\frac{1}{4}\left[i\left(\frac{1+i}{2}\right)^m-i\left(\frac{1-i}{2}\right)^m+1\right],
\end{aligned}
\end{equation}
where $i:=\sqrt{-1}$. Since they all tend to $1/4$ as $m\rightarrow\infty$, we have $\Tr_{\overline{[K]}}(\Gamma_n)\rightarrow\frac{1}{4}\sum_{l=0}^3\Phi_l \text{ when } n\rightarrow\infty$.
We can write
\begin{equation}
\begin{aligned}
\Tr_{\overline{[K]}}(\Gamma_n)=\frac{1}{4}\sum_{l=0}^3\Phi_l+\sum_{l=0}^3\left[\frac{1}{2^{m}}\sum_{j=0}^{m}\binom{m}{j}\delta_{j=l \bmod  4}-\frac{1}{4}\right]\Phi_{l},
\end{aligned}
\end{equation}
also note that $\mc{R}(\Phi_{l})=\mc{R}(\Gamma_K)$ for all $l$, thus
\begin{equation}
	\begin{aligned}
		\mc{R}\left(\Tr_{\overline{[K]}}(\Gamma_n)\right)&\le1+\mc{R}(\Gamma_K)\sum_{l=0}^3\abs{\frac{1}{2^m}\sum_{j=0}^m\binom{m}{j}\delta_{j=l \bmod  4}-\frac{1}{4}}\le1+\mc{R}(\Gamma_K)2^{\frac{1}{2}-\frac{m}{2}}\\
		&\le1+\sqrt{2^K(2^K+1)}2^{\frac{1}{2}-\frac{n-K}{2}}\le1+2^{1+\frac{3}{2}K-\frac{n}{2}}.
	\end{aligned}
\end{equation}
\end{proof}

Now we can provide an upper bound for the noisy magic of $3$-complete hypergraph state as
\begin{equation}
\begin{aligned}
\mc{R}\left(\mc{E}_{\lambda}^{\otimes n}(\Gamma_n)\right)\le&\sum_{J\subset[n]}(1-\lambda)^{\abs{J}}\lambda^{n-\abs{J}}\mc{R}\left(\Tr_{\overline{J}}(\Gamma_n)\right)\\
=&\sum_{k=0}^n(1-\lambda)^{k}\lambda^{n-k}\mc{R}\left(\Tr_{\overline{[k]}}(\Gamma_n)\right)\\
\le&\sum_{k=0}^n\binom{n}{k}(1-\lambda)^{k}\lambda^{n-k}\left(1+2^{1+\frac{3}{2}k-\frac{n}{2}}\right)\\
=&1+2\sum_{k=0}^n\binom{n}{k}(2-2\lambda)^{k}(2^{-\frac{1}{2}}\lambda)^{n-k}\\
=&1+2\left(2-(2-2^{-\frac{1}{2}})\lambda\right)^n.\\
\end{aligned}
\end{equation}
When $\lambda\ge(2-2^{-\frac{1}{2}})^{-1}\lesssim0.78$, the upper bound converge to $1$ exponentially with regrad to $n$. Thus for arbitrary $\epsilon>0$, the magic threshold $\lambda^*_\epsilon\le0.78$ when $n$ is sufficiently large.

\subsection{Lower bound}

Denote $\Gamma_n$ be the hypergraph states corresponding to the $3$-complete hypergraph with $n$ vertices. We consider the case that $n\in\{3,5,7,\cdots\}$ is an odd number. The case that $n$ is an even number is similar. 
\begin{lem}
For $n\in\{3,5,7,\cdots\}$, we have
\begin{equation}
\begin{aligned}
\abs{\Tr\left(P_{(\mathbf{z},\mathbf{x})}\Gamma_n\right)}=
\begin{cases}
1,&\x=\z=0^n,\\ 
\frac{1}{2}, &\abs{\x}=2,4,\cdots,n-1\text{ and }(\z=0^n\text{ or }\z=\x\text{ or }\z=1^n\text{ or }\z=\x+1^n),\\
2^{-\frac{n-1}{2}},&\abs{\x}=1,3,\cdots,n \text{ and }\x\cdot\z=0\\
0,&\text{otherwise}.
\end{cases}
\end{aligned}
\end{equation}
\end{lem}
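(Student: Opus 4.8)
The plan is to evaluate the Pauli spectrum directly through \eqref{eq:Pauli_spectrum_abs}, using that the $3$-complete hypergraph has characteristic function $f(\s)=\sum_{i<j<k}s_is_js_k=\binom{\abs{\s}}{3}\bmod 2$, where $\abs{\s}$ is the Hamming weight. First I would expand $f(\s)+f(\s+\x)$ over $\mbb{Z}_2$: the cubic terms cancel and one is left with a polynomial in $\s$ of degree at most $2$, parametrized by $\x$. Collecting terms, the coefficient of $s_as_b$ is $\abs{\x}-x_a-x_b\bmod 2$ (so it depends only on whether $a,b$ lie inside or outside $X:=\mathrm{supp}(\x)$), the coefficient of $s_a$ is $\binom{\abs{\x}-x_a}{2}+z_a$, and the constant is $\binom{\abs{\x}}{3}$. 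Since $\Gamma_n$ is permutation invariant, the spectrum depends only on the joint type of $(\x,\z)$, so I would take $X=[w]$ with $w=\abs{\x}$ and split the exponent into the block $X$ and its complement. The whole problem then reduces to evaluating a $\mbb{Z}_2$ Gauss sum, and the parity of $w$ controls its structure.

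For $w$ even I would observe that every same-block quadratic coefficient vanishes while every cross-block coefficient equals $1$, so the quadratic part collapses to $u\cdot v$ with $u=\sum_{a\in X}s_a$ and $v=\sum_{b\in X^c}s_b$ taken mod $2$. Summing over each block at fixed parity $(u,v)$, the inner character sum $\sum_{\s_X:\,\mathbf{1}_X\cdot\s_X=u}(-1)^{\z_X\cdot\s_X}$ is nonzero only when $\z_X\in\{0,\mathbf{1}_X\}$, and likewise for $X^c$; translating back, this is exactly $\z\in\{0^n,\x,\x+1^n,1^n\}$. In that case each block contributes $\pm2^{\abs{\text{block}}-1}$, and the residual $2\times 2$ sum over $(u,v)$ of $(-1)^{uv+\alpha u+\beta v}$ is always $\pm2$, so the total magnitude is $2^{n-1}$ and the normalized value is $\tfrac12$. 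I expect this case to be routine bookkeeping.

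For $w$ odd the cross-block coefficients vanish and the quadratic part becomes $\binom{\abs{\s_X}}{2}+\binom{\abs{\s_{X^c}}}{2}$, while the linear coefficients inside and outside $X$ coincide, so the entire linear part is a uniform shift $\z'=\z+r\mathbf{1}$ with $r=\tfrac{w-1}{2}\bmod 2$. The sum then factorizes into independent Gauss sums $T(m,\mathbf{c}):=\sum_{\s}(-1)^{\binom{\abs{\s}}{2}+\mathbf{c}\cdot\s}$ on $X$ (dimension $w$, odd) and on $X^c$ (dimension $n-w$, even since $n$ is odd). I would analyze $T$ via the bilinear form of $\binom{\abs{\s}}{2}$, whose radical is $\{0\}$ when the dimension is even and $\{0,\mathbf{1}\}$ when it is odd. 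Hence the even block always contributes magnitude $2^{(n-w)/2}$, while on the odd block the pairing $\s_X\leftrightarrow\s_X+\mathbf{1}_X$ shows $T$ vanishes unless $\abs{\z'_X}\equiv\binom{w}{2}\pmod 2$, and when nonzero it has magnitude $2^{(w+1)/2}$; multiplying gives the normalized value $2^{-(n-1)/2}$.

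The main obstacle is the odd-$w$ analysis: I must track the shift $r$ carefully and verify that the vanishing condition $\abs{\z'_X}\equiv\binom{w}{2}$ collapses to the clean statement $\x\cdot\z=0$ uniformly, independent of $w\bmod 4$, since the shift $r$ and the value $\binom{w}{2}\bmod 2$ coincide and their effects cancel. Confirming the exact radical of $\binom{\abs{\s}}{2}$ in each parity, and hence that the even block never forces an extra constraint while the odd block contributes only $\x\cdot\z=0$, is the crux. The remaining ``otherwise'' entries of the table follow immediately: $\x=0^n$ with $\z\neq0^n$ gives a pure character sum that vanishes, and the two nonzero cases are precisely the complement of the vanishing conditions derived above.
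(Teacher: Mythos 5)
Your proposal is correct, but it takes a genuinely different route from the paper's proof. You expand $f(\s)+f(\s+\x)$ over $\mbb{F}_2$ into an explicit quadratic polynomial in $\s$ (quadratic coefficient $\abs{\x}-x_a-x_b$, linear coefficient $\binom{\abs{\x}-x_a}{2}+z_a$, constant $\binom{\abs{\x}}{3}$ --- all of which check out), and then evaluate the resulting $\mbb{F}_2$ Gauss sums through the radical of the bilinear form of $\binom{\abs{\s}}{2}$: the parity of $w=\abs{\x}$ decides whether the quadratic part collapses to the single cross term $uv$ (even $w$, yielding the four admissible $\z$'s as a dual-code condition and magnitude $\tfrac12$) or factorizes into two intra-block forms (odd $w$, whose radicals $\{0\}$ and $\{0,\mathbf{1}\}$ yield the constraint $\x\cdot\z=0$ and magnitude $2^{-(n-1)/2}$). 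The paper instead fixes the Pauli letter at each site, tracks $\binom{\abs{\s}}{3}+\binom{\abs{\s+\x}}{3}$ through the mod-4 behavior of $\abs{\s}$ and $\abs{\x\land\s}$, and counts solutions by quadruple binomial sums with indicators $I(a,b,c,d)$, $J(a,b,c,d)$ evaluated in closed form; that computation gives the exact complex values of the character sums, whereas your argument gives only magnitudes --- which is all the lemma asserts. Your crux point is handled correctly: for odd $w$ the uniform shift $r=\binom{w}{2}\bmod 2$ cancels against the vanishing threshold $\abs{\z'_X}\equiv\binom{w}{2}$, since $\abs{\z_X+r\mathbf{1}_X}\equiv\abs{\z_X}+rw\equiv\abs{\z_X}+r\pmod 2$ for $w$ odd, leaving exactly $\x\cdot\z=0$; and the even block indeed imposes no constraint because its form is nondegenerate. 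The only bookkeeping to add in a full write-up is that in the even-$w$ case the per-block constant shifts $\binom{w-1}{2}\mathbf{1}_X$ and $\binom{w}{2}\mathbf{1}_{X^c}$ differ (so the inner character sums should be taken against $\z'$ rather than $\z$), but since $\{0,\mathbf{1}\}$ is invariant under adding $\mathbf{1}$ this changes only signs, which your $\alpha,\beta$ absorb. On balance, your route is more structural and arguably generalizes more cleanly; the paper's yields sharper (phase-resolved) information at the cost of heavier combinatorial evaluation.
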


\begin{proof}
We have $\ket{\Gamma_n}=\frac{1}{\sqrt{2^n}}\sum_{\mathbf{s}\in\mbb{Z}_2^n}(-1)^{f(\mathbf{s})}\ket{\mathbf{s}}$, where $f(\mathbf{s})=\sum_{e\subset[n],\abs{e}=3}\prod_{i\in e}s_i=\binom{\abs{\mathbf{s}}}{3}$.
By Eq.~\eqref{eq:Pauli_spectrum_abs}, we need to calculate
\begin{equation}
\frac{1}{2^n}\sum_{\s\in\mbb{Z}_2^n}(-1)^{f(\s)+f(\s+\x)+\z\cdot\s}=\frac{1}{2^n}\sum_{\s\in\mbb{Z}_2^n}(-1)^{\binom{\abs{\s}}{3}+\binom{\abs{\s+\x}}{3}+\z\cdot\s}
\end{equation}
for all $\x,\z$.

(\textbf{Case 1.} $\abs{\x}=0$.) We have $\frac{1}{2^n}\sum_{\s\in\mbb{Z}_2^n}(-1)^{f(\s)+f(\s+\x)+\z\cdot\s}=\frac{1}{2^n}\sum_{\s\in\mbb{Z}_2^n}(-1)^{\z\cdot\s}$, which is $1$ for $\abs{\z}=0$, and $0$ for $\abs{\z}>0$.

(\textbf{Case 2.} $\abs{\x}=1,3,\cdots,n$.) For integer $m\ge0$, we have
\begin{equation}\label{eq:choose3_evenodd}
\binom{m}{3}\overset{\bmod2}{=}
\begin{cases}
0,&\text{when }m\overset{\bmod4}{=}0,1,2,\\
1,&\text{when }m\overset{\bmod4}{=}3.
\end{cases}
\end{equation}
Note that
\begin{equation}
\abs{\s+\x}\overset{\bmod4}{=}
\begin{cases}
\abs{\s}+1,&\text{when }\abs{\x\land\s}-\frac{\abs{\x}-1}{2}\text{ even},\\
\abs{\s}-1,&\text{when }\abs{\x\land\s}-\frac{\abs{\x}-1}{2}\text{ odd},
\end{cases}
\end{equation}
thus by Eq.~\eqref{eq:choose3_evenodd},
\begin{equation}
\binom{\abs{\s}}{3}+\binom{\abs{\s+\x}}{3}\overset{\bmod2}{=}
\begin{cases}
0,&\text{when }\abs{\s}\overset{\bmod4}{=}0,\text{ and }\abs{\x\land\s}-\frac{\abs{\x}-1}{2}\text{ even},\\
0,&\text{when }\abs{\s}\overset{\bmod4}{=}1,\\
0,&\text{when }\abs{\s}\overset{\bmod4}{=}2,\text{ and }\abs{\x\land\s}-\frac{\abs{\x}-1}{2}\text{ odd},\\
1,&\text{ otherwise}.
\end{cases}
\end{equation}
We also have $\z\cdot\s\overset{\bmod2}{=}\abs{\z\land\s}$.
Denote $W_I,W_X,W_Y,W_Z$ be the number of Pauli $\mbb{I},X,Y,Z$ in the Pauli word $P_{(\z,\x)}$, satisfying $W_I+W_X+W_Y+W_Z=n$. Denote $a,b,c,d$ be the number of $1$'s in $\s$ such that the corresponding position of $P_{(\z,\x)}$ is $Y,X,Z,I$, satisfying $a+b+c+d=\abs{\s}$. For such $\s$, the indicator of $\binom{\abs{\s}}{3}+\binom{\abs{\s+\x}}{3}\overset{\bmod2}{=}0$ can be written as
\begin{equation}
I(a,b,c,d):=\delta_{a+b+c+d\overset{\bmod4}{=}0}\delta_{a+b-\frac{\abs{\x}-1}{2}\text{ even}}+\delta_{a+b+c+d\overset{\bmod4}{=}1}+\delta_{a+b+c+d\overset{\bmod4}{=}2}\delta_{a+b-\frac{\abs{\x}-1}{2}\text{ odd}}.
\end{equation}
We count the number of $\s$ such that  $\binom{\abs{\s}}{3}+\binom{\abs{\s+\x}}{3}+\z\cdot\s\overset{\bmod2}{=}0$ as
\begin{equation}\label{eq:3complete_0count_case2}
\begin{aligned}
\sum_{a=0}^{W_Y}\sum_{b=0}^{W_X}\sum_{c=0}^{W_Z}\sum_{d=0}^{W_I}\binom{W_Y}{a}\binom{W_X}{b}\binom{W_Z}{c}\binom{W_I}{d}\Big[\delta_{a+c\text{ even}}I(a,b,c,d)+\delta_{a+c\text{ odd}}\left(1-I(a,b,c,d)\right)\Big],
\end{aligned}
\end{equation}
thus
\begin{equation}\label{eq:3complete_0count_case2_2}
\begin{aligned}
&\frac{1}{2^n}\sum_{\s\in\mbb{Z}_2^n}(-1)^{\binom{\abs{\s}}{3}+\binom{\abs{\s+\x}}{3}+\z\cdot\s}=2^{1-n}\eqref{eq:3complete_0count_case2}-1\\
=&i^{1-W_X}2^{-\frac{n}{2}}\left[\frac{1-(-1)^{W_X}}{2}e^{-i\frac{\pi}{4}(n-2W_I)}+\frac{1+(-1)^{W_Y}}{2}e^{i\frac{\pi}{4}(n-2W_I)}\right].
\end{aligned}
\end{equation}
In this case, we assume $\abs{\x}=W_X+W_Y$ odd, and we have $W_Y\text{ even}\iff\x\cdot\z=0$. Also, $n-2W_I$ is odd. Taking the absolute value we obtain
\begin{equation}
\begin{aligned}
\abs{\eqref{eq:3complete_0count_case2_2}}=
\begin{cases}
2^{\frac{1}{2}-\frac{n}{2}},&\text{ when }\x\cdot\z=0,\\
0,&\text{ when }\x\cdot\z=1.
\end{cases}
\end{aligned}
\end{equation}

(\textbf{Case 3.} $\abs{\x}=2,4,\cdots,n-1$.)
Note that
\begin{equation}
\abs{\s+\x}\overset{\bmod4}{=}
\begin{cases}
\abs{\s},&\text{when }\abs{\x\land\s}-\frac{\abs{\x}}{2}\text{ even},\\
\abs{\s}+2,&\text{when }\abs{\x\land\s}-\frac{\abs{\x}}{2}\text{ odd},
\end{cases}
\end{equation}
thus by Eq.~\eqref{eq:choose3_evenodd},
\begin{equation}
\binom{\abs{\s}}{3}+\binom{\abs{\s+\x}}{3}\overset{\bmod2}{=}
\begin{cases}
0,&\text{when }\abs{\x\land\s}-\frac{\abs{\x}}{2}\text{ even},\\
0,&\text{when }\abs{\x\land\s}-\frac{\abs{\x}}{2}\text{ odd},\text{ and }\abs{\s}\text{ even},\\
1,&\text{ otherwise}.
\end{cases}
\end{equation}
Again, denote $W_I,W_X,W_Y,W_Z$ be the number of Pauli $\mbb{I},X,Y,Z$ in the Pauli word $P_{(\z,\x)}$. Denote $a,b,c,d$ be the number of $1$'s in $\s$ such that the corresponding position of $P_{(\z,\x)}$ is $Y,X,Z,I$. For such $\s$, the indicator of $\binom{\abs{\s}}{3}+\binom{\abs{\s+\x}}{3}\overset{\bmod2}{=}0$ can be written as
\begin{equation}
J(a,b,c,d):=\delta_{a+b-\frac{\abs{\x}}{2}\text{ even}}+\delta_{a+b-\frac{\abs{\x}}{2}\text{ odd}}\delta_{a+b+c+d\text{ even}}.
\end{equation}
 We count the number of $\s$ such that  $\binom{\abs{\s}}{3}+\binom{\abs{\s+\x}}{3}+\z\cdot\s\overset{\bmod2}{=}0$ as
\begin{equation}\label{eq:3complete_0count_case3}
\begin{aligned}
&\sum_{a=0}^{W_Y}\sum_{b=0}^{W_X}\sum_{c=0}^{W_Z}\sum_{d=0}^{W_I}\binom{W_Y}{a}\binom{W_X}{b}\binom{W_Z}{c}\binom{W_I}{d}\Big[\delta_{a+c\text{ even}}J(a,b,c,d)+\delta_{a+c\text{ odd}}\left(1-J(a,b,c,d)\right)\Big]\\
&=
\begin{cases}
3\cdot2^{-2+n},&\text{when }W_Y=W_Z=0\text{ or }W_I=W_X=0,\\
(2+i^{-\abs{\x}})2^{-2+n},&\text{when }W_X=W_Z=0,\\
(2-i^{-\abs{\x}})2^{-2+n},&\text{when }W_I=W_Y=0,\\
2^{-1+n},&\text{otherwise}.
\end{cases}
\end{aligned}
\end{equation}
Thus
\begin{equation}\label{eq:3complete_0count_case3_2}
\begin{aligned}
&\Big|\frac{1}{2^n}\sum_{\s\in\mbb{Z}_2^n}(-1)^{\binom{\abs{\s}}{3}+\binom{\abs{\s+\x}}{3}+\z\cdot\s}\Big|=\abs{2^{1-n}\eqref{eq:3complete_0count_case3}-1}\\
=&
\begin{cases}
\frac{1}{2},&\text{when }\z=0^n\text{ or }\z=\x\text{ or }\z=1^n\text{ or }\z=\x+1^n,\\
0,&\text{otherwise}.
\end{cases}
\end{aligned}
\end{equation}

\end{proof}

Now for $n$ odd, we can use Eq.~\eqref{eq:LB} to provide a lower bound as
\begin{equation}
\begin{aligned}
\mc{D}\left(\mc{E}_\lambda^{\otimes n}\left(\Gamma_n\right)\right)=&\frac{1}{2^n}\sum_{\x,\z}\abs{\Tr\left(P_{(\mathbf{z},\mathbf{x})}\mc{E}_\lambda^{\otimes n}\left(\Gamma_n\right)\right)}\\
=&\frac{1}{2^n}\sum_{\x,\z}(1-\lambda)^{\abs{\x\vee \z}}\abs{\Tr\left(P_{(\mathbf{z},\mathbf{x})}\Gamma_n\right)}\\
=&\frac{1}{2^n}\Bigg[1+\sum_{\substack{\abs{\x}=2,4,\cdots,n-1\\\z=0^n \text{ or }\z=\x}}(1-\lambda)^{\abs{\x}}\frac{1}{2}+\sum_{\substack{\abs{\x}=2,4,\cdots,n-1\\\z=1^n\text{ or }\z=\x+1^n}}(1-\lambda)^{n}\frac{1}{2}+\sum_{\substack{\abs{\x}\text{ odd}\\\x\cdot\z=0}}(1-\lambda)^{\abs{\x\vee \z}}2^{-\frac{n-1}{2}}\Bigg]\\
=&\frac{1}{2^n}\Bigg[1+\sum_{k=2}^{n-1}\binom{n}{k}\frac{1+(-1)^k}{2}(1-\lambda)^{k}+\sum_{k=2}^{n-1}\binom{n}{k}\frac{1+(-1)^k}{2}(1-\lambda)^{n}\\
&+2^{-\frac{n-1}{2}}\sum_{k=1}^n\sum_{m=0}^k\sum_{l=m}^{n-k+m}\binom{n}{k}\binom{k}{m}\binom{n-k}{l-m}\frac{1-(-1)^k}{2}\frac{1+(-1)^m}{2}(1-\lambda)^{k+l-m}\Bigg]\\
=&\left(- 2^{-n} + 2^{-1}\right) (1 - \lambda)^n +2^{-n-1} (2 - \lambda)^n +\left(2^{-n-1}- 2^{-\frac{3}{2} - \frac{3}{2}n}\right)\lambda^n + 2^{-\frac{3}{2} - \frac{3}{2}n}\left(4-3\lambda\right)^n\\
\ge&2^{-\frac{3}{2} - \frac{3}{2}n}\left(4-3\lambda\right)^n-2^{-n}- 2^{-\frac{3}{2} - \frac{3}{2}n}.
\end{aligned}
\end{equation}
In the fourth and fifth lines, 
$k$ represents $\abs{\x}$, $m$ represents $\abs{\x\land \z}$, $l$ represents $\abs{\z}$.

For $\epsilon\ge0$, let $2^{-\frac{3}{2} - \frac{3}{2}n}\left(4-3\lambda_n\right)^n-2^{-n}- 2^{-\frac{3}{2} - \frac{3}{2}n}=1+\epsilon$, we have $\lambda_n=\frac{4-\left(2^{n/2+3/2}+2^{3n/2+3/2}+1+\epsilon\right)^{1/n}}{3}\rightarrow\frac{4-2^{3/2}}{3}\gtrsim 0.39$.

\subsection{Local magic}
Fix arbitrary constant $K$ in Proposition~\ref{prop:3complete_localmagic_UB}, we obtain $\mc{R}\left(\Tr_{\overline{[K]}}(\Gamma_n)\right)\le1+\mc{O}\left(2^{-\frac{n}{2}}\right)$.

\section{Qudit cases}

\subsection{Discrete Wigner function}\label{app:DiscreteWignerfunction}

For an odd prime dimension $d$, define operators $X,Z\in\mc{L}(\mbb{C}^d)$ by
\begin{equation}
\begin{aligned}
& X\ket{j}=\ket{j+1~\operatorname{mod}~d}, \\
& Z\ket{j}=\omega^j\ket{j},
\end{aligned}
\end{equation}
where $\omega=e^{2\pi i/d}$, $i=\sqrt{-1}$.
We have $ZX=\omega XZ$. For $(z,x)\in\mathbb{Z}_d \times \mathbb{Z}_d$, define the Heisenberg–Weyl (generalized Pauli) operator $T_{(z,x)}\in\mc{L}(\mbb{C}^d)$ by
\begin{equation}
T_{\left(z, x\right)}=\tau^{-z x} Z^{z} X^{x},
\end{equation}
where $\tau=e^{(d+1)\pi i/d}$. For $(\mathbf{z},\mathbf{x})\in\mbb{Z}_d^n\times\mbb{Z}_d^n$, where $\mathbf{z}=(z_1,\cdots,z_n)$ and $\mathbf{x}=(x_1,\cdots,x_n)$, define
\begin{equation}
T_{(\mathbf{z},\mathbf{x})}=T_{(z_1,x_1)}\otimes T_{(z_2,x_2)}\otimes\cdots\otimes T_{(z_n,x_n)}.
\end{equation}
Define $A_{(0,0)}\in\mc{L}(\mbb{C}^d)$ by
\begin{equation}
A_{(0,0)}=\frac{1}{d} \sum_{z,x\in\mbb{Z}_d} T_{(z,x)}=\sum_{j\in\mbb{Z}_d}\ketbra{-j}{j}.
\end{equation}
For $(z,x)\in\mbb{Z}_d\times\mbb{Z}_d$, let
\begin{equation}
A_{(z,x)}=T_{(z,x)} A_{(0,0)} T_{(z,x)}^{\dagger},
\end{equation}
whose eigenvalues are $\pm1$ and has trace $1$ \cite{Feng2024discrete}. For $(\mathbf{z},\mathbf{x})\in\mbb{Z}_d^n\times\mbb{Z}_d^n$, where $\mathbf{z}=(z_1,\cdots,z_n)$ and $\mathbf{x}=(x_1,\cdots,x_n)$, define
\begin{equation}
A_{(\mathbf{z},\mathbf{x})}=A_{(z_1,x_1)}\otimes A_{(z_2,x_2)}\otimes\cdots\otimes A_{(z_n,x_n)}.
\end{equation}
For an $n$-qudit state $\rho\in D\left((\mbb{C}^d)^{\otimes n}\right)$, and $\mathbf{u}\in\mbb{Z}_d^n\times\mbb{Z}_d^n$, define the corresponding discrete Wigner quasiprobability representation
\begin{equation}
W_\rho(\mathbf{u})=\frac{1}{d^n} \operatorname{Tr}( A_{\mathbf{u}} \rho).
\end{equation}
The Wigner negativity \cite{Veitch2014resource} of $\rho$ is defined by
\begin{equation}
\begin{aligned}
\operatorname{sn}(\rho):=\sum_{\mathbf{u}: W_\rho(\mathbf{u})<0}\left|W_\rho(\mathbf{u})\right| =\frac{1}{2}\left(\sum_{\mathbf{u}}\abs{W_\rho(\mathbf{u})}-1\right),
\end{aligned}
\end{equation}
which can provide a lower bound for the RoM of $\rho$ \cite{Liu2022manybody}:
\begin{equation}
2\operatorname{sn}(\rho)+1\le\mc{R}(\rho).
\end{equation}

\subsection{The maximum noise threshold for Wigner negativity}\label{app:maximumnoisethresholdforWignernegativity}

\begin{thm}
For an odd prime dimension $d$, and $\lambda_*=\frac{d}{d+1}$, we have:
\begin{enumerate}
    \item $\operatorname{sn}\left(\mc{E}_{\lambda}^{\otimes n}(\rho)\right)=0$, for arbitrary $n$-qudit state $\rho$ and $\lambda\ge\lambda_*$;
    \item There exists a $1$-qudit state $\rho$, such that $\operatorname{sn}\left(\mc{E}_\lambda(\rho)\right)>0$ for all $\lambda<\lambda_*$.
\end{enumerate}
\end{thm}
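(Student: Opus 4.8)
The plan is to reduce both statements to a single positivity question about the phase-point operator $A_{\mathbf{0}}=A_{(0,0)}^{\otimes n}$ pushed through the channel. The starting point is the identity already recorded in the main text, $W_{\mc{E}_\lambda^{\otimes n}(\rho)}(\mathbf{u})=\frac{1}{d^n}\Tr\big(T_\mathbf{u}\mc{E}_\lambda^{\otimes n}(A_\mathbf{0})T_\mathbf{u}^\dagger\rho\big)$, which holds because $\mc{E}_\lambda$ is self-adjoint in the Hilbert--Schmidt inner product and covariant under conjugation by Heisenberg--Weyl operators, so that $\mc{E}_\lambda^{\otimes n}(A_\mathbf{u})=T_\mathbf{u}\mc{E}_\lambda^{\otimes n}(A_\mathbf{0})T_\mathbf{u}^\dagger$. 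Once this is in place, everything hinges on whether $\mc{E}_\lambda^{\otimes n}(A_\mathbf{0})$ is positive semidefinite.

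For statement~1, I would first analyze the single-qudit operator $A_{(0,0)}=\sum_{j}\ketbra{-j}{j}$, the parity/reflection operator. Because $d$ is odd, the involution $j\mapsto -j$ fixes only $j=0$ and pairs the remaining indices, so $A_{(0,0)}$ has $(d+1)/2$ eigenvalues equal to $+1$ and $(d-1)/2$ equal to $-1$; in particular $\Tr(A_{(0,0)})=1$. Applying $\mc{E}_\lambda(\cdot)=(1-\lambda)(\cdot)+\lambda\Tr(\cdot)\tfrac{\mbb{I}}{d}$, I would compute that the image has eigenvalues $1-\lambda\tfrac{d-1}{d}$ and $-1+\lambda\tfrac{d+1}{d}$, the second being nonnegative exactly when $\lambda\ge\tfrac{d}{d+1}=\lambda_*$. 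Since $\mc{E}_\lambda^{\otimes n}(A_\mathbf{0})=\big(\mc{E}_\lambda(A_{(0,0)})\big)^{\otimes n}$ and a tensor power of a positive operator is positive, $\lambda\ge\lambda_*$ forces $\mc{E}_\lambda^{\otimes n}(A_\mathbf{0})\succeq 0$; conjugating by $T_\mathbf{u}$ preserves this, so $W_{\mc{E}_\lambda^{\otimes n}(\rho)}(\mathbf{u})=\frac{1}{d^n}\Tr(\text{PSD}\cdot\rho)\ge 0$ for every $\mathbf{u}$ and every state $\rho$, giving $\operatorname{sn}(\mc{E}_\lambda^{\otimes n}(\rho))=0$.

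For statement~2, I would exhibit the witness $\ket{\psi}=\tfrac{1}{\sqrt{2}}(\ket{1}-\ket{d-1})$ and show it saturates the universal bound $W_\psi(\mathbf{u})\ge-1/d$ (which follows from $A_\mathbf{u}$ having eigenvalues $\pm1$). Since $A_{(0,0)}$ swaps $\ket{1}\leftrightarrow\ket{d-1}$, because $-1\equiv d-1$ and $-(d-1)\equiv 1 \pmod d$, one gets $A_{(0,0)}\ket{\psi}=-\ket{\psi}$, hence $W_\psi((0,0))=\tfrac{1}{d}\bra{\psi}A_{(0,0)}\ket{\psi}=-\tfrac{1}{d}$. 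Linearity of the channel then yields $W_{\mc{E}_\lambda(\psi)}(\mathbf{u})=(1-\lambda)W_\psi(\mathbf{u})+\lambda/d^2$, the maximally mixed state having uniform Wigner value $1/d^2$, so at $\mathbf{u}=(0,0)$ one finds $W_{\mc{E}_\lambda(\psi)}((0,0))=\big(\lambda(d+1)-d\big)/d^2$, which is strictly negative precisely when $\lambda<\lambda_*$. A single negative value certifies $\operatorname{sn}(\mc{E}_\lambda(\psi))>0$ for all $\lambda<\lambda_*$.

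The only genuinely delicate step will be the spectral count for $A_{(0,0)}$, where oddness of $d$ is essential: it guarantees the unique fixed point of the reflection and hence the exact multiplicities that pin the threshold to $\tfrac{d}{d+1}$. Everything else is a short computation resting on the already-established covariance identity. I would also remark that the eigenvalue $-1+\lambda\tfrac{d+1}{d}$ governing statement~1 and the quantity $\lambda(d+1)-d$ governing statement~2 are the same obstruction seen from the two sides, which is exactly why the two thresholds coincide.
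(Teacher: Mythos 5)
Your proposal is correct and follows essentially the same route as the paper's proof: the same Heisenberg--Weyl covariance identity reducing everything to positivity of $\mc{E}_\lambda\left(A_{(0,0)}\right)^{\otimes n}$, the same eigenvalue computation pinning the threshold at $\frac{d+1}{d}\lambda-1\ge0$, and the same witness state $\frac{1}{\sqrt{2}}(\ket{1}-\ket{d-1})$ with the same value $\frac{(d+1)\lambda-d}{d^2}$ at the origin. The only differences are cosmetic: you derive the $\pm1$ spectral multiplicities of $A_{(0,0)}$ explicitly where the paper cites them, and you use the eigenvector relation $A_{(0,0)}\ket{\psi}=-\ket{\psi}$ rather than computing the trace directly.
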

\begin{proof}
(Part 1)

For $\mathbf{u}\in\mbb{Z}_d^n\times\mbb{Z}_d^n$, we have
\begin{equation}
\begin{aligned}
W_{\mc{E}_{\lambda}^{\otimes n}(\rho)}(\mathbf{u})=\frac{1}{d^n}\Tr\left(A_{\mathbf{u}} \mc{E}_{\lambda}^{\otimes n}(\rho)\right)=\frac{1}{d^n}\Tr\left(\mc{E}_{\lambda}^{\otimes n}\left(A_{\mathbf{u}}\right)\rho\right)=\frac{1}{d^n}\Tr\left(T_{\mathbf{u}}\mc{E}_{\lambda}^{\otimes n}\left(A_{\mathbf{0}}\right)T_{\mathbf{u}}^{\dagger}\rho\right),
\end{aligned}
\end{equation}
where $\mathbf{0}$ denotes the all-zero element in $\mbb{Z}_d^n\times\mbb{Z}_d^n$, and the third equality holds because $\mc{E}_\lambda^{\otimes n}$ commutes with the tensor product of single qubit unitaries.
Note that $\mc{E}_{\lambda}^{\otimes n}\left(A_{\mathbf{0}}\right)=\mc{E}_\lambda\left(A_{(0,0)}\right)^{\otimes n}$, and
\begin{equation}
\begin{aligned}
\mc{E}_\lambda\left(A_{(0,0)}\right)=\lambda\frac{\mbb{I}_d}{d}+(1-\lambda)A_{(0,0)},
\end{aligned}
\end{equation}
whose eigenvalues are $1-\frac{d-1}{d}\lambda$ and $\frac{d+1}{d}\lambda-1$. 
When $\lambda\ge\lambda_*$, we find that $\mc{E}_\lambda\left(A_{(0,0)}\right)$ becomes a positive matrix, thus $W_{\mc{E}_{\lambda}^{\otimes n}(\rho)}(\mathbf{u})\ge0$ for all $\mathbf{u}$.

(Part 2)

Consider the $1$-qudit state $\rho=\ketbra{\psi}{\psi}$, where $\ket{\psi}=\frac{1}{\sqrt{2}}(\ket{1}-\ket{d-1})$. We have
\begin{equation}
\begin{aligned}
W_{\mc{E}_{\lambda}^{\otimes n}(\rho)}\big((0,0)\big)=\frac{1}{d}\Tr\left(\mc{E}_\lambda\left(A_{(0,0)}\right)\rho\right)=\frac{\lambda}{d^2}-\frac{1-\lambda}{d}=\frac{(d+1)\lambda-d}{d^2},
\end{aligned}
\end{equation}
which remains negative when $\lambda<\frac{d}{d+1}=\lambda_*$.
\end{proof}

\subsection{Qudit hypergraph states}\label{app:Qudithypergraphstates}

For $d$ is a prime, define the $n$-qudit multi-controlled-$Z$ gate $\mathrm{C}^{n-1}Z$ as
\begin{equation}
\mathrm{C}^{n-1}Z\ket{\mathbf{b}}=\omega^{\prod_{j=1}^nb_j}\ket{\mathbf{b}},
\end{equation}
where $\omega=e^{\frac{2\pi i}{d}}$ and $\mathbf{b}=(b_1,\cdots,b_n)\in\mbb{Z}_d^n$.
For a hypergraph $G=\left\{[n],E\right\}$, where $E\subset[n]$, and a set $\{\alpha_e\}_{e\in E}$, where $\alpha_e\in\mbb{Z}_d$ is the multiplicity of hyperedge $e$, we define the corresponding qudit hypergraph state $\ket{\Psi}$ as
\begin{equation}
\ket{\Psi}=\prod_{e\in E}\left(\mathrm{C}^{\abs{e}-1}Z_e\right)^{\alpha_e}\ket{+^n},
\end{equation}
where $\mathrm{C}^{\abs{e}-1}Z_e$ is the qudit multi-controlled-$Z$ gate applied on qudits in the set $e$, and $\ket{+}=\frac{1}{\sqrt{d}}\sum_{a\in\mbb{Z}_d}\ket{a}$.

\subsection{Wigner negativity of qudit hypergraph states}

Let $\ket{\Psi}=\frac{1}{\sqrt{d^n}}\sum_{\mathbf{s}\in\mbb{Z}_d^n}\omega^{f(\mathbf{s})}\ket{\mathbf{s}}$ be an $n$-qudit hypergraph state, where $\omega=e^{\frac{2\pi i}{d}}$, $i=\sqrt{-1}$, and $f:\mbb{Z}_d^n\rightarrow\mbb{Z}_d$ is the characteristic function of $\ket{\Psi}$.
For $\mathbf{z}=(z_1,\cdots,z_n)$ and $\mathbf{x}=(x_1,\cdots,x_n)$, we have
\begin{equation}
\begin{aligned}
W_{\Psi}(\mathbf{z},\mathbf{x})=&\frac{1}{d^n}\Tr\left(A_{(\mathbf{z},\mathbf{x})}\ketbra{\Psi}{\Psi}\right)\\
=&\frac{1}{d^n}\bra{\Psi}T_{(\mathbf{z},\mathbf{x})}A_{(0,0)}^{\otimes n}T_{(\mathbf{z},\mathbf{x})}^\dagger\ket{\Psi}.
\end{aligned}
\end{equation}
We can calculate $T^\dagger_{(\z,\x)}\ket{\Psi}$ as
\begin{equation}
\begin{aligned}
T^\dagger_{(\mathbf{z},\mathbf{x})}\ket{\Psi}=&e^{i\theta}\left(\otimes_{j=1}^nZ^{-z_j}\right)\left(\otimes_{j=1}^nX^{-x_j}\right)\ket{\Psi}\\
=&e^{i\theta}\left(\otimes_{j=1}^nZ^{-z_j}\right)\left(\otimes_{j=1}^nX^{-x_j}\right)\frac{1}{\sqrt{d^n}}\sum_{\mathbf{s}\in\mbb{Z}_d^n}\omega^{f(\mathbf{s})}\ket{\mathbf{s}}\\
=&e^{i\theta}\left(\otimes_{j=1}^nZ^{-z_j}\right)\frac{1}{\sqrt{d^n}}\sum_{\mathbf{s}\in\mbb{Z}_d^n}\omega^{f(\mathbf{s})}\ket{\mathbf{s}-\mathbf{x}}\\
=&e^{i\theta}\left(\otimes_{j=1}^nZ^{-z_j}\right)\frac{1}{\sqrt{d^n}}\sum_{\mathbf{s}\in\mbb{Z}_d^n}\omega^{f(\mathbf{s}+\mathbf{x})}\ket{\mathbf{s}}\\
=&e^{i\theta}\frac{1}{\sqrt{d^n}}\sum_{\mathbf{s}\in\mbb{Z}_d^n}\omega^{f(\mathbf{s}+\mathbf{x})-\mathbf{z}\cdot\mathbf{s}}\ket{\mathbf{s}},
\end{aligned}
\end{equation}
where the arithmetic is done in $\mbb{Z}_d$, $e^{i\theta}$ is a global phase factor depending on $(\mathbf{z},\mathbf{x})$, and $\mathbf{z}\cdot\mathbf{s}=\sum_{j=1}^nz_jb_j$ is the inner product in $\mbb{Z}_d^n$.
Thus we have
\begin{equation}
\begin{aligned}
A_{(0,0)}^{\otimes n}T_{(\mathbf{z},\mathbf{x})}^\dagger\ket{\Psi}&=e^{i\theta}\frac{1}{\sqrt{d^n}}\sum_{\mathbf{s}\in\mbb{Z}_d^n}\omega^{f(\mathbf{s}+\mathbf{x})-\mathbf{z}\cdot\mathbf{s}}\ket{-\mathbf{s}}\\
&=e^{i\theta}\frac{1}{\sqrt{d^n}}\sum_{\mathbf{s}\in\mbb{Z}_d^n}\omega^{f(-\mathbf{s}+\mathbf{x})+\mathbf{z}\cdot\mathbf{s}}\ket{\mathbf{s}}.
\end{aligned}
\end{equation}
By computing the overlap of $T_{(\mathbf{z},\mathbf{x})}^\dagger\ket{\Psi}$ and $A_{(0,0)}^{\otimes n}T_{(\mathbf{z},\mathbf{x})}^\dagger\ket{\Psi}$, we get
\begin{equation}
\begin{aligned}
W_{\Psi}(\mathbf{z},\mathbf{x})=\frac{1}{d^{2n}}\sum_{\mathbf{s}\in\mbb{Z}_d^n}\omega^{f(-\mathbf{s}+\mathbf{x})-f(\mathbf{s}+\mathbf{x})+2\mathbf{z}\cdot\mathbf{s}}.
\end{aligned}
\end{equation}
Now we can compute the Wigner negativity as
\begin{equation}
\begin{aligned}
\operatorname{sn}(\Psi)=&\frac{1}{2}\left(\sum_{\mathbf{z},\mathbf{x}\in\mbb{Z}_d^n}\abs{W_{\Psi}(\mathbf{z},\mathbf{x})}-1\right)\\
=&\frac{1}{2}\left(\frac{1}{d^{2n}}\sum_{\mathbf{z},\mathbf{x}\in\mbb{Z}_d^n}\abs{\sum_{\mathbf{s}\in\mbb{Z}_d^n}\omega^{f(-\mathbf{s}+\mathbf{x})-f(\mathbf{s}+\mathbf{x})+2\mathbf{z}\cdot\mathbf{s}}}-1\right).
\end{aligned}
\end{equation}

If $\Psi$ suffers from noise $\mc{E}_\lambda^{\otimes n}$, then
\begin{equation}\label{eq:noisy_qudit_HG_Wigner}
\begin{aligned}
W_{\mc{E}_\lambda^{\otimes n}(\Psi)}(\mathbf{z},\mathbf{x})=&\frac{1}{d^n}\Tr\left(\mc{E}_\lambda^{\otimes n}(A_{(\mathbf{z},\mathbf{x})})\ketbra{\Psi}{\Psi}\right)\\
=&\frac{1}{d^n}\bra{\Psi}T_{(\mathbf{z},\mathbf{x})}\left(\mc{E}_\lambda^{\otimes n}(A_{(0,0)})\right)^{\otimes n}T_{(\mathbf{z},\mathbf{x})}^\dagger\ket{\Psi}\\
=&\frac{1}{d^n}\sum_{I\subset[n]}(1-\lambda)^{n-\abs{I}}\lambda^{\abs{I}}\bra{\Psi}T_{(\mathbf{z},\mathbf{x})}\left(A_{(0,0)}^{\otimes n-\abs{I}}\otimes\frac{\mbb{I}_I}{d^{\abs{I}}}\right)T_{(\mathbf{z},\mathbf{x})}^\dagger\ket{\Psi}.
\end{aligned}
\end{equation}

\section{Example: Union Jack lattice}\label{app:UnionJack}

\begin{figure}[t]
\centering
\includegraphics[width=0.6\textwidth]{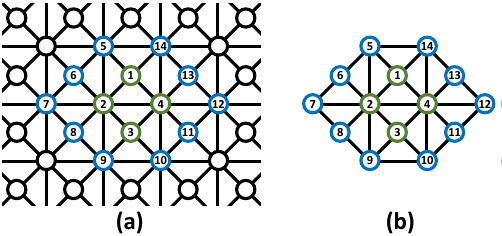}
\caption{Open cycles represent qubits. To obtain the hypergraph state, initialize each qubit in $\ket{+}$, and for every triangle in the lattice, apply a $\mathrm{CC}Z$ to its three vertices. (a) The hypergraph state $\Psi$ corresponds to the Union Jack lattice. (b) The 14-qubit hypergraph state $\Psi'$.}
\label{fig:UnionJack}
\end{figure}

Suppose $\Psi$ is the hypergraph state corresponding to the Union Jack lattice shown in Fig.~\ref{fig:UnionJack}(a). We calculate the RoM of the reduced density matrix of $\Psi$ on four qubits $1,2,3,4$. Since $1,2,3,4$ do not interact with other qubits other than $5,6,\cdots,14$, we have
\begin{equation}
\begin{aligned}
\Tr_{\overline{\{1,2,3,4\}}}(\Psi)=\Tr_{\overline{\{1,2,3,4\}}}(\Psi'),
\end{aligned}
\end{equation}
where $\Psi'$ is the hypergraph state shown in Fig.~\ref{fig:UnionJack}(b).
Numerical calculation shows that $\mc{R}\big(\Tr_{\overline{\{1,2,3,4\}}}(\Psi')\big)=1.0078125$.

\section{Example: $4$-complete hypergraph}\label{app:4-complete}

\begin{figure}[t]
\centering
\includegraphics[width=0.5\textwidth]{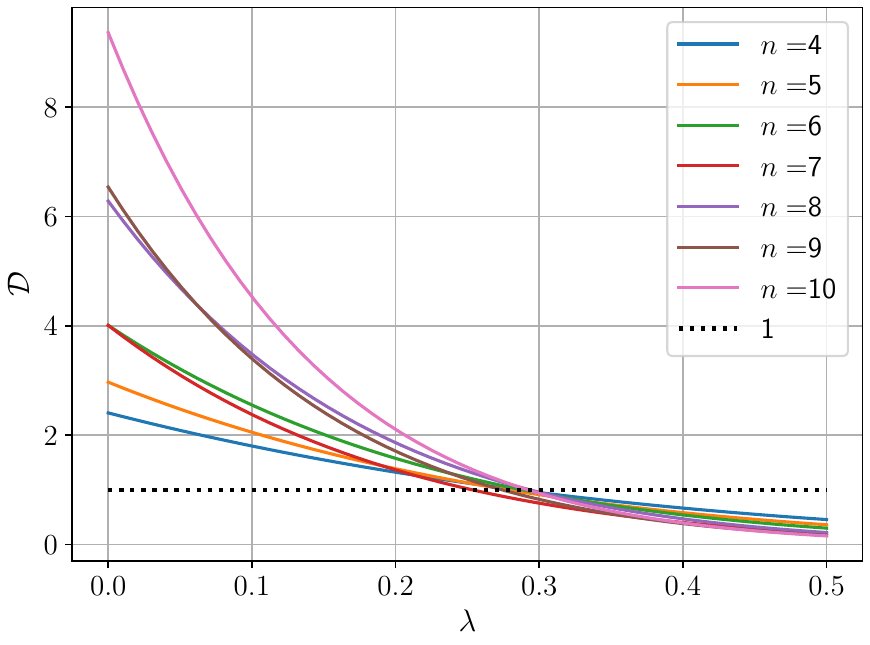}
\caption{Numerical results for $\mc{D}\left(\mc{E}_{\lambda}^{\otimes n}(\Psi_n)\right)$, where $\Psi_n$ is the $n$-qubit 4-complete hypergraph state.}
\label{fig:plot_D_4complete}
\end{figure}

Let $\Psi_n$ be the $n$-qubit 4-complete hypergraph state, which has all possible edges with degree 4 and no other edges.
The numerical results for $\mc{D}\left(\mc{E}_{\lambda}^{\otimes n}(\Psi_n)\right)$ are shown in Fig.~\ref{fig:plot_D_4complete}.

Using similar techniques as in Appendix~\ref{app:3-complete}, we can show that the reduced density matrix of $\Psi_n$ on three qubits tends to $\frac{1}{4}\sum_{l=0}^3\Phi_l$ as $n\rightarrow\infty$, where
\begin{equation}
\begin{aligned}
\ket{\Phi_0}&=\ket{+^3},\\
\ket{\Phi_1}&=\mathrm{CC}Z\ket{+^3},\\
\ket{\Phi_2}&=\mathrm{C}Z_{\{1,2\}}\mathrm{C}Z_{\{1,3\}}\mathrm{C}Z_{\{2,3\}}\ket{+^3},\\
\ket{\Phi_3}&=Z_{\{1\}}Z_{\{2\}}Z_{\{3\}}\mathrm{C}Z_{\{1,2\}}\mathrm{C}Z_{\{1,3\}}\mathrm{C}Z_{\{2,3\}}\mathrm{CC}Z\ket{+^3}.
\end{aligned}
\end{equation}
$\frac{1}{4}\sum_{l=0}^3\Phi_l$ has RoM $1.25$, thus $\{\Psi_n\}$ has $3$-local magic $1.25+o(1)$.

\section{Example: Qudit $\ket{\mathrm{C}^{n-1}Z}$}\label{app:qudit_CCCZ}

In this section we focus on the noise robustness and threshold of the $n$-qudit state $\ket{\mathrm{C}^{n-1}Z}=\frac{1}{\sqrt{d^n}}\sum_{\mathbf{s}\in\mbb{Z}_d^n}\omega^{\prod_jb_j}\ket{\mathbf{s}}$.

\subsection{Upper bound}\label{app:quditCCCZUB}

Similar to Lemma~\ref{lemma:HGstate_PartialTrace}, we have the following result:
\begin{lem}\label{lemma:HGstateQudit_PartialTrace}
For $d$ a prime number, let $\Psi$ be a $n$-qudit hypergraph state on $(\mbb{C}^d)^{\otimes n}$. For $I\subset [n]$ a subset of qudits, consider tracing out the qudits in $I$, we have
\begin{equation}
\Tr_{I}(\Psi)=\frac{1}{d^{\abs{I}}}\sum_{\mathbf{b}\in\mbb{Z}_d^{\abs{I}}}\Psi^{(I,\mathbf{b})},
\end{equation}
where $\Psi^{(I,\mathbf{b})}:=d^{\abs{I}}\bra{\mathbf{b}}_I\Psi\ket{\mathbf{b}}_I$ are $(n-\abs{I})$-qudit hypergraph states.
\end{lem}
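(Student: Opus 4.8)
The plan is to follow the proof of Lemma~\ref{lemma:HGstate_PartialTrace} verbatim, replacing the phase group $\mathbb{Z}_2$ by $\mathbb{Z}_d$, with one genuinely new ingredient: verifying that the object $\Psi^{(I,\mathbf{b})}$ obtained by freezing the traced-out qudits is itself a qudit \emph{hypergraph} state, and not merely an arbitrary diagonal-phase state. The partial-trace formula itself is immediate once the definition is unpacked. Since $\{\ket{\mathbf{b}}_I\}_{\mathbf{b}\in\mbb{Z}_d^{\abs{I}}}$ is an orthonormal basis of the subsystem $I$,
\begin{equation}
\Tr_I(\Psi)=\sum_{\mathbf{b}\in\mbb{Z}_d^{\abs{I}}}\big(\bra{\mathbf{b}}_I\otimes\mbb{I}_{\overline{I}}\big)\Psi\big(\ket{\mathbf{b}}_I\otimes\mbb{I}_{\overline{I}}\big)=\frac{1}{d^{\abs{I}}}\sum_{\mathbf{b}\in\mbb{Z}_d^{\abs{I}}}\Psi^{(I,\mathbf{b})},
\end{equation}
where the last step is just the stated definition $\Psi^{(I,\mathbf{b})}=d^{\abs{I}}\bra{\mathbf{b}}_I\Psi\ket{\mathbf{b}}_I$. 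Thus the entire content lies in the second assertion, that each $\Psi^{(I,\mathbf{b})}$ is a hypergraph state.

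Next I would identify $\Psi^{(I,\mathbf{b})}$ explicitly. Taking without loss of generality $I$ to be the first $\abs{I}$ qudits and writing $\ket{\Psi}=d^{-n/2}\sum_{\mathbf{s}\in\mbb{Z}_d^n}\omega^{f(\mathbf{s})}\ket{\mathbf{s}}$, the partial overlap mirrors Eq.~\eqref{eq:A2}:
\begin{equation}
\big(\bra{\mathbf{b}}_I\otimes\mbb{I}_{\overline{I}}\big)\ket{\Psi}=\frac{1}{\sqrt{d^{\abs{I}}}}\frac{1}{\sqrt{d^{n-\abs{I}}}}\sum_{\y\in\mbb{Z}_d^{n-\abs{I}}}\omega^{f(\mathbf{b},\y)}\ket{\y}=\frac{1}{\sqrt{d^{\abs{I}}}}\ket{\Psi^{(I,\mathbf{b})}},
\end{equation}
so $\Psi^{(I,\mathbf{b})}=\ketbra{\Psi^{(I,\mathbf{b})}}{\Psi^{(I,\mathbf{b})}}$ is the pure state on $\overline{I}$ whose characteristic function is the restriction $g(\y):=f(\mathbf{b},\y)$; taking the outer product and multiplying by $d^{\abs{I}}$ recovers the defining relation and confirms the normalization.

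The crux is to show $g$ is a legitimate qudit hypergraph characteristic function, i.e.\ that it has the multilinear form $\sum_{e'}\alpha'_{e'}\prod_{i\in e'}y_i$ with multiplicities in $\mbb{Z}_d$ demanded by the definition in Appendix~\ref{app:Qudithypergraphstates}. Starting from $f(\mathbf{s})=\sum_{e\in E}\alpha_e\prod_{i\in e}s_i$ and substituting $s_i=b_i$ for $i\in I$, each monomial factorizes as $\big(\prod_{i\in e\cap I}b_i\big)\prod_{i\in e\cap\overline{I}}y_i$, a constant in $\mbb{Z}_d$ times a product of \emph{distinct} surviving variables. Collecting terms by their support $e'=e\cap\overline{I}\subseteq\overline{I}$ gives
\begin{equation}
g(\y)=\sum_{e'\subseteq\overline{I}}\alpha'_{e'}\prod_{i\in e'}y_i,\qquad \alpha'_{e'}=\sum_{\substack{e\in E\\ e\cap\overline{I}=e'}}\alpha_e\prod_{i\in e\cap I}b_i\in\mbb{Z}_d.
\end{equation}
Hence $g$ is again multilinear, so $\ket{\Psi^{(I,\mathbf{b})}}$ is a genuine $(n-\abs{I})$-qudit hypergraph state; the term $e'=\emptyset$ (arising from edges $e\subseteq I$) contributes only a global phase $\omega^{\alpha'_{\emptyset}}$ and may be discarded without affecting the density matrix $\Psi^{(I,\mathbf{b})}$. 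I expect the only obstacle to be bookkeeping: unlike the qubit case there is no ``odd/even multiplicity'' collapse, and one must instead track the coefficients $b_i$ and multiplicities $\alpha_e$ as genuine elements of $\mbb{Z}_d$, summed modulo $d$. No cancellation rule beyond ordinary arithmetic in $\mbb{Z}_d$ is needed, which completes the verification and hence the lemma.
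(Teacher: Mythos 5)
Your proof is correct and takes essentially the same route as the paper: the paper proves the qubit case (Lemma~\ref{lemma:HGstate_PartialTrace}) by expanding the partial trace in the computational basis on $I$ and substituting the frozen labels into the characteristic function, and then treats the qudit lemma as a direct analogue, with your explicit coefficient formula $\alpha'_{e'}=\sum_{e\,:\,e\cap\overline{I}=e'}\alpha_e\prod_{i\in e\cap I}b_i$ being precisely the paper's stated rule (multiply each edge's multiplicity by the product of its traced-out labels in $\mbb{Z}_d$, then merge edges with equal surviving support). Your explicit treatment of the empty-edge term as a discardable global phase is a minor point of rigor that the paper leaves implicit.
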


We can obtain $\Psi^{(I,\mathbf{b})}$ from $\Psi$ by: For each qudit $i\in I$ which is labeled by $b_i$, remove this qudit and set the multiplicity of the edges containing this qudit to $b_i$ times (do arithmetic in $\mbb{Z}_d$) the original multiplicity.

\begin{lem}\label{lemma:qudit_CCCZ_R_UB}
For the $n$-qudit hypergraph sate $\rho_{n,\alpha}:=(\mathrm{C}^{n-1}Z)^\alpha\ketbra{+^n}{+^n}(\mathrm{C}^{n-1}Z^{\dagger})^\alpha$ which has no edges with degree $\le n-1$ and has an $n$-degree edge with multiplicity $\alpha\in\{1,\cdots,d-1\}$, we have
\begin{equation}
\mc{R}(\rho_{n,\alpha})\le1+4M_d(d-1)^n,
\end{equation}
where $M_d:=\sup_{\rho\in D(\mbb{C}^d)}\mc{R}(\rho)$.
\end{lem}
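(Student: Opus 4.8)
The plan is to write $\rho_{n,\alpha}=\ketbra{+^n}{+^n}+D$ and to decompose the Hermitian correction $D$, which is purely off-diagonal in the computational basis, into stabilizer dyads whose total $\ell_1$-weight is controlled by $M_d$ times the number of nonzero entries of $D$. Writing $p_{\mathbf x}=\prod_j x_j\bmod d$, one has $\bra{\mathbf x}\rho_{n,\alpha}\ket{\mathbf y}=d^{-n}\omega^{\alpha(p_{\mathbf x}-p_{\mathbf y})}$ and $\bracket{\mathbf x}{+^n}\bracket{+^n}{\mathbf y}=d^{-n}$, so
\[
D=\frac{1}{d^n}\sum_{\mathbf x,\mathbf y}\big(\omega^{\alpha(p_{\mathbf x}-p_{\mathbf y})}-1\big)\ketbra{\mathbf x}{\mathbf y}.
\]
Because $d$ is prime and $\alpha\neq0$, the coefficient vanishes unless $p_{\mathbf x}\neq p_{\mathbf y}$; in particular $D$ has no diagonal part and $\Tr D=0$.

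Next I would group the two conjugate terms of each unordered pair $\{\mathbf x,\mathbf y\}$ with $p_{\mathbf x}\neq p_{\mathbf y}$ into a Hermitian rank-$\le2$ operator $H_{\{\mathbf x,\mathbf y\}}$ on $\mathrm{span}\{\ket{\mathbf x},\ket{\mathbf y}\}$. Writing $\mu=\omega^{\alpha(p_{\mathbf x}-p_{\mathbf y})}-1=\abs{\mu}e^{i\phi}$, the identity $H_{\{\mathbf x,\mathbf y\}}=\tfrac{\abs{\mu}}{d^n}\big(\ketbra{w_+}{w_+}-\ketbra{w_-}{w_-}\big)$ with $\ket{w_\pm}=\tfrac{1}{\sqrt2}\big(\ket{\mathbf x}\pm e^{-i\phi}\ket{\mathbf y}\big)$ expresses it through two pure states. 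The crucial point is that each $\ket{w_\pm}$ is Clifford-equivalent to a single-qudit state: a Pauli $X$-shift sends $\mathbf x\mapsto\mathbf 0$, and a $GL(n,\mbb{Z}_d)$ (CNOT/multiplier) map sends $\mathbf y-\mathbf x$ to a single coordinate, whence $\ket{w_\pm}\cong\tfrac{1}{\sqrt2}\big(\ket{0}\pm e^{-i\phi}\ket{1}\big)\otimes\ket{0^{n-1}}$. Clifford-invariance of $\mc{R}(\cdot)$ together with property (iii) then gives $\mc{R}(\ketbra{w_\pm}{w_\pm})\le M_d$, which is exactly where the factor $M_d$ enters.

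To assemble the bound, note that the resulting expansion $\rho_{n,\alpha}=\ketbra{+^n}{+^n}+\sum_{\{\mathbf x,\mathbf y\}}\tfrac{\abs{\mu}}{d^n}(\ketbra{w_+}{w_+}-\ketbra{w_-}{w_-})$ has coefficients summing to $\Tr\rho_{n,\alpha}=1$, so the convexity property (iv) applies verbatim and yields $\mc{R}(\rho_{n,\alpha})\le1+\sum_{\{\mathbf x,\mathbf y\}}\tfrac{2\abs{\mu}}{d^n}M_d\le1+\tfrac{4M_d}{d^n}\cdot\#\{\text{pairs with }p_{\mathbf x}\neq p_{\mathbf y}\}$, using $\abs{\mu}\le2$. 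It then remains to count: with $N_p=\#\{\mathbf x:p_{\mathbf x}=p\}$ equal to $d^n-(d-1)^n$ for $p=0$ and $(d-1)^{n-1}$ for $p\neq0$, the number of such unordered pairs is $\tfrac12\big(d^{2n}-\sum_p N_p^2\big)\le d^n(d-1)^n$, giving precisely $\mc{R}(\rho_{n,\alpha})\le1+4M_d(d-1)^n$.

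The main obstacle, and the key departure from the qubit proof of Lemma~\ref{lem:CCCZ_magic_UB}, is the single-qudit reduction in the second step: for $d>2$ a two-term superposition $\tfrac{1}{\sqrt2}(\ket{\mathbf x}\pm e^{-i\phi}\ket{\mathbf y})$ is not itself a stabilizer state, and the entire argument hinges on recognizing that it is Clifford-equivalent to a genuine single-qudit magic state, so that its robustness is capped by $M_d$ rather than by a dimension factor. One must also verify that the number of contributing pairs scales as $d^n(d-1)^n$, so that the $d^{-n}$ prefactor from the normalization of $\ket{+^n}$ leaves exactly the advertised $(d-1)^n$.
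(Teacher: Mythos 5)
Your proof is correct and takes essentially the same approach as the paper's: both decompose $\rho_{n,\alpha}-\ketbra{+^n}{+^n}$ into Hermitian off-diagonal pair terms, write each as a signed combination of two projectors onto two-term superpositions $\frac{1}{\sqrt2}(\ket{\mathbf x}+\text{phase}\cdot\ket{\mathbf y})$ with total absolute coefficient at most $4/d^n$, bound each projector's robustness by $M_d$ via Clifford ($X$/SUM) reduction to a single qudit, and finish by counting pairs with $p_{\mathbf x}\neq p_{\mathbf y}$ and invoking convexity. Your polar-form splitting $|\mu|(\ketbra{w_+}{w_+}-\ketbra{w_-}{w_-})$ is a cosmetic variant of the paper's $2\ketbra{B^{(p)}_{\mathbf x\mathbf y}}{B^{(p)}_{\mathbf x\mathbf y}}-2\ketbra{B^{(0)}_{\mathbf x\mathbf y}}{B^{(0)}_{\mathbf x\mathbf y}}$, and your pair count agrees exactly with the paper's.
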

\begin{proof}
Denote $f(\x)=\prod_{j=1}^nx_j$ for $\x=(x_1,\cdots,x_n)\in\mbb{Z}_d^n$. Notice that 
\begin{equation}
\rho_{n,\alpha}=\frac{1}{d^n}\sum_{\x,\y\in\mbb{Z}_d^n}(\omega^\alpha)^{f(\x)-f(\y)}\ketbra{\x}{\y},
\end{equation}
most of whose entries are $d^{-n}$ for big $n$, thus $\rho_{n,a}$ is ``close" to the stabilizer state $\ketbra{+^n}{+^n}$.
Similar to the proof of Lemma~\ref{lem:CCCZ_magic_UB}, we can write
\begin{equation}\label{eq:qudit_CCCZ_decoposition}
\begin{aligned}
\rho_{n,\alpha}-\ketbra{+^n}{+^n}=&\frac{1}{d^n}\sum_{l=0}^{d-1}\sum_{f(\x)=l,f(\y)\neq l}\left((\omega^\alpha)^{l-f(\y)}\ketbra{\x}{\y}-\ketbra{\x}{\y}\right)\\
=&\frac{1}{d^n}\sum_{l=1}^{d-1}\sum_{f(\x)=l,f(\y)< l}\left((\omega^\alpha)^{l-f(\y)}\ketbra{\x}{\y}+(\omega^\alpha)^{f(\y)-l}\ketbra{\y}{\x}-\ketbra{\x}{\y}-\ketbra{\y}{\x}\right)\\
=&\frac{1}{d^n}\sum_{l=1}^{d-1}\sum_{f(\x)=l,f(\y)< l}\left(2\ketbra{B^{(\alpha l-\alpha f(\y))}_{\x\y}}{B^{(\alpha l-\alpha f(\y))}_{\x\y}}-2\ketbra{B^{(0)}_{\x\y}}{B^{(0)}_{\x\y}}\right),
\end{aligned}
\end{equation}
where $\ket{B^{(p)}_{\x\y}}=\frac{1}{\sqrt{2}}\left(\ket{\x}+\omega^{-p}\ket{\y}\right)$ for $p\in\mbb{Z}_d$ and $\x\neq\y$. 

We can use Clifford unitaries $X$ and $\operatorname{SUM}$ defined by
\begin{equation}
\begin{aligned}
X\ket{a}&=\ket{a+1~\operatorname{mod}~d},\\
\operatorname{SUM}\ket{a}\ket{b}&=\ket{a}\ket{a+b~\operatorname{mod}~d},
\end{aligned}
\end{equation}
to turn all but one qudit of $\ket{B^{(p)}_{\x\y}}$ into $\ket{0}$. (To achieve this, we first choose the qudit $j$ such that $x_j\neq y_j$, then apply $\operatorname{SUM}$ repeatedly to qudits $(j,j')$ for all $j'\neq j$, and finally apply $X$ repeatedly to qudit $j'$ for all $j'\neq j$. Now all except the $j$-th qudit becomes $\ket{0}$.) Thus we have $\mc{R}\big(B^{(p)}_{\x\y}\big)\le M_d$, where $M_d=\sup_{\rho\in D(\mbb{C}^d)}\mc{R}(\rho)$.

We have $\abs{f^{-1}(0)}=d^n-(d-1)^n$ and $\abs{f^{-1}(j)}=(d-1)^{n-1}$ for $j\neq0$ (which can be shown by induction on $n$). Thus we can count the terms in the summation of Eq.~\eqref{eq:qudit_CCCZ_decoposition} as 
\begin{equation}
\begin{aligned}
\sum_{l=1}^{d-1}\sum_{f(\x)=l,f(\y)< l}1=&\sum_{l=1}^{d-1}(d-1)^{n-1}\big(d^n-(d-1)^n+(l-1)(d-1)^{n-1}\big)\\
=&(d-1)^{n}\big(d^n-(d-1)^n+(d-1)^{n-1}(d-2)/2\big)<(d-1)^nd^n.
\end{aligned}
\end{equation}
Thus by Eq.~\eqref{eq:qudit_CCCZ_decoposition} and convexity of $\mc{R}$, we get
\begin{equation}
\begin{aligned}
\mc{R}(\rho_{n,\alpha})\le1+d^{-n}(d-1)^nd^n4M_d=1+4M_d(d-1)^n,
\end{aligned}
\end{equation}
which completes the proof.
\end{proof}

For $\rho_{n,1}=\ketbra{\mathrm{C}^{n-1}Z}{\mathrm{C}^{n-1}Z}$, we have
\begin{equation}
\begin{aligned}
\mc{E}_\lambda^{\otimes n}(\rho_{n,1})=\sum_{I\subset[n]}(1-\lambda)^{n-\abs{I}}\lambda^{\abs{I}}\Tr_{I}(\rho_{n,1})\otimes\mbb{I}_I.
\end{aligned}
\end{equation}
By permutation symmetry of $\rho_{n,1}$, we have
\begin{equation}
\begin{aligned}
\mc{R}\left(\mc{E}_\lambda^{\otimes n}(\rho_{n,1})\right)\le\sum_{k=0}^n\binom{n}{k}(1-\lambda)^{n-k}\lambda^k\mc{R}\left(\Tr_{[k]}(\rho_{n,1})\right).
\end{aligned}
\end{equation}
By Lemma~\ref{lemma:HGstateQudit_PartialTrace}, we have
\begin{equation}
\begin{aligned}
\Tr_{[k]}(\rho_{n,1})=\frac{1}{d^k}\sum_{\mathbf{b}\in\mbb{Z}_d^k}\rho_{n-k,\prod_{j=1}^kb_j},
\end{aligned}
\end{equation}
where $\mc{R}\left(\rho_{n-k,0}\right)=1$, and $\mc{R}\left(\rho_{n-k,\alpha}\right)$ with $\alpha\neq0$ can be upper bounded by Lemma~\ref{lemma:qudit_CCCZ_R_UB}. 
There are $d^{k}-(d-1)^{k}$ values of $\mathbf{b}\in\mbb{Z}_d^{k}$ such that $\prod_{j=1}^{k}b_j=0$, and $(d-1)^{k-1}$ values of $\mathbf{b}\in\mbb{Z}_d^{k}$ such that $\prod_{j=1}^{k}b_j=\alpha$ for each $\alpha\in\mbb{Z}_d-\{0\}$.
Thus
\begin{equation}\label{eq:qudit_CCCZ_UB_app}
\begin{aligned}
\mc{R}\left(\mc{E}_\lambda^{\otimes n}(\rho_{n,1})\right)\le&\sum_{k=0}^n\binom{n}{k}(1-\lambda)^{n-k}\lambda^k\frac{1}{d^k}\sum_{\mathbf{b}\in\mbb{Z}_d^k}\mc{R}\left(\rho_{n-k,\prod_{j=1}^kb_j}\right)\\
\le&\sum_{k=0}^n\binom{n}{k}(1-\lambda)^{n-k}\lambda^k\frac{1}{d^k}\Big((d^k-(d-1)^k)\cdot1+(d-1)^k(1+4M_d(d-1)^{n-k})\Big)\\
=&1+4M_d(d-1)^n\Big(1-\frac{d-1}{d}\lambda\Big)^n.
\end{aligned}
\end{equation}

When $\lambda>\frac{d(d-2)}{(d-1)^2}$, then $(d-1)\left(1-\frac{d-1}{d}\lambda\right)<1$, this upper bound converge to $1$ exponentially with regard to $n$.
For example, for qutrit $\rho_{n,1}$, the upper bound is $1+4M_32^n\left(1-2\lambda/3\right)^n$, and $\frac{d(d-2)}{(d-1)^2}=0.75$.

\subsection{Lower bound}\label{app:quditCCCZLB}

Consider the qutrit case ($d=3$) in this subsection. Let $S=\left\{4m-1\mid m\ge1\right\}=\{3,7,11,\cdots\}$. Consider the family of qutrit states $\big\{\ket{\mathrm{C}^{n-1}Z}\big\}_{n\in S}$. Note that all $n\in S$ are odd.

Denote $\Psi_n=\ketbra{\mathrm{C}^{n-1}Z}{\mathrm{C}^{n-1}Z}$. Take $\mathbf{1}=(1,1,\cdots,1)\in\mbb{Z}_3^n$, $\mathbf{0}=(0,0,\cdots,0)\in\mbb{Z}_3^n$, by Eq.~\eqref{eq:noisy_qudit_HG_Wigner} and the permutation symmetry of $\mathbf{1}$, $\mathbf{0}$, and $\ket{\mathrm{C}^{n-1}Z}$, we have
\begin{equation}
\begin{aligned}
W_{\mc{E}_\lambda^{\otimes n}(\Psi_n)}(\mathbf{1},\mathbf{0})=&\frac{1}{3^n}\Tr\left(\mc{E}_\lambda^{\otimes n}(A_{(\mathbf{1},\mathbf{0})})\ketbra{\mathrm{C}^{n-1}Z}{\mathrm{C}^{n-1}Z}\right)\\
=&\frac{1}{3^n}\sum_{k=0}^n\binom{n}{k}(1-\lambda)^{n-k}\lambda^{k}3^{-k}\bra{\mathrm{C}^{n-1}Z}T_{(\mathbf{1},\mathbf{0})}\left(A_{(0,0)}^{\otimes n-k}\otimes\mbb{I}_{3}^{\otimes k}\right)T_{(\mathbf{1},\mathbf{0})}^\dagger\ket{\mathrm{C}^{n-1}Z}.
\end{aligned}
\end{equation}
By Lemma~\ref{lem:qutrit_lemma} below, we know that 
\begin{equation}
\begin{aligned}
3^{2n}W_{\mc{E}_\lambda^{\otimes n}(\Psi_n)}(\mathbf{1},\mathbf{0})=&(1-\lambda)^n\frac{1}{2}\left(3-3^{\frac{n+1}{2}}\right)+\lambda^n3^{-n}3^n+\sum_{k=2}^{n-1}\binom{n}{k}\frac{1+(-1)^k}{2}(1-\lambda)^{n-k}\lambda^{k}3^{-k}\cdot3\cdot2^{k-1}\\
=&-\frac{1}{2}\cdot3^{\frac{n+1}{2}}(1-\lambda)^n+\lambda^n+\frac{3}{4}\left(1-\frac{\lambda}{3}\right)^n+\frac{3}{4}\left(1-\frac{5\lambda}{3}\right)^n\\
\le&-\frac{1}{2}\cdot3^{\frac{n+1}{2}}(1-\lambda)^n+1+\frac{3}{4}+\frac{3}{4}=-\frac{1}{2}\cdot3^{\frac{n+1}{2}}(1-\lambda)^n+\frac{5}{2}.
\end{aligned}
\end{equation}
Let $-\frac{1}{2}\cdot3^{\frac{n+1}{2}}(1-\lambda_n)^n+\frac{5}{2}=0$, then $W_{\mc{E}_\lambda^{\otimes n}(\Psi_n)}(\mathbf{1},\mathbf{0})<0$ for $\lambda<\lambda_n$. We have $\lambda_n=1-5^{\frac{1}{n}}3^{-\frac{n+1}{2n}}\rightarrow1-3^{-1/2}\gtrsim0.42$ as $n\rightarrow\infty$. 

\begin{lem}\label{lem:qutrit_lemma}
We have
\begin{equation}
\begin{aligned}
3^n\bra{\mathrm{C}^{n-1}Z}T_{(\mathbf{1},\mathbf{0})}\left(A_{(0,0)}^{\otimes n-k}\otimes\mbb{I}_{3}^{\otimes k}\right)T_{(\mathbf{1},\mathbf{0})}^\dagger\ket{\mathrm{C}^{n-1}Z}=
\begin{cases}
\frac{1}{2}\left(3-3^{\frac{n+1}{2}}\right), &\text{ when }k=0,\\
0, &\text{ when }k=1,3,\cdots,n-2,\\
3\cdot2^{k-1}, &\text{ when }k=2,4,\cdots,n-1,\\
3^n, &\text{ when }k=n.\\
\end{cases}
\end{aligned}
\end{equation}
\end{lem}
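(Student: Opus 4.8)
The plan is to turn the matrix element into a single exponential sum over $\mbb{Z}_3^{n}$ and evaluate it case by case in $k$. First I would note that $T_{(1,0)}=\tau^{0}Z^{1}X^{0}=Z$, so $T_{(\mathbf{1},\mathbf{0})}=Z^{\otimes n}$; conjugation gives $Z A_{(0,0)} Z^{\dagger}=A_{(1,0)}$ with $A_{(1,0)}\ket{j}=\omega^{j}\ket{-j}$, while $Z\,\mbb{I}\,Z^{\dagger}=\mbb{I}$. Hence the operator sandwiched in the lemma is $A_{(1,0)}^{\otimes(n-k)}\otimes \mbb{I}^{\otimes k}$. Expanding $\ket{\mathrm{C}^{n-1}Z}=3^{-n/2}\sum_{\mathbf{s}}\omega^{\prod_j s_j}\ket{\mathbf{s}}$, the action of $A_{(1,0)}$ (sending $\ket{s_i}\mapsto\omega^{s_i}\ket{-s_i}$ on the first $n-k$ qudits) together with $\mbb{I}$ (on the last $k$) collapses the double sum over $\mathbf{s},\mathbf{s}'$ to a single sum with $\mathbf{s}'$ the reflected string; the prefactor $3^{n}$ exactly cancels the two normalizations. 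Using $\prod_j s_j'=(-1)^{n-k}\prod_j s_j$ one is left with $\sum_{\mathbf{s}\in\mbb{Z}_3^{n}}\omega^{(1-(-1)^{n-k})\prod_j s_j+\sum_{i\le n-k}s_i}$. Since $n$ is odd, the exponent is $\sum_{i\le n-k}s_i$ for $k$ odd and $2\prod_j s_j+\sum_{i\le n-k}s_i$ for $k$ even (arithmetic in $\mbb{Z}_3$).

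The extreme and odd cases are immediate. For $k$ odd with $1\le k\le n-2$ the exponent factorizes and the sum vanishes because $1+\omega+\omega^{2}=0$ and at least one factor $\sum_{s\in\mbb{Z}_3}\omega^{s}$ survives (as $n-k\ge 2$). For $k=n$ the operator is the identity and the value is $3^{n}\bracket{\mathrm{C}^{n-1}Z}{\mathrm{C}^{n-1}Z}=3^{n}$.

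The content lies in the even $k$. Here I would introduce the auxiliary sum $g_m(c):=\sum_{\mathbf{s}\in\mbb{Z}_3^{m}}\omega^{c\prod_i s_i+\sum_i s_i}$. For $0<k<n$ I split off the last $k$ qudits: blocks with a zero among them kill the product and leave a factorized group-$A$ sum equal to $0$ (since $n-k\ge 1$), so only all-nonzero blocks contribute. Of the $2^{k}$ blocks in $\{1,2\}^{k}$, exactly $2^{k-1}$ have product $1$ and $2^{k-1}$ have product $2\equiv-1$, reducing the total to $2^{k-1}\big(g_{n-k}(2)+g_{n-k}(1)\big)$. To evaluate $g_m(c)$ for odd $m$ I would restrict to all-nonzero $\mathbf{s}$, parametrize by the number $b$ of $2$'s so that $\sum_i s_i\equiv m+b$ and $\prod_i s_i\equiv(-1)^{b}$, and resum the binomials via $\sum_b\binom{m}{b}\omega^{b}=(1+\omega)^{m}$ and its sign-variants. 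The identities $\omega(1+\omega)=\omega+\omega^{2}=-1$ and $\omega(1-\omega)=\omega-\omega^{2}=i\sqrt{3}$ then give $g_m(1)+g_m(2)=3$ for odd $m$, yielding the claimed $3\cdot 2^{k-1}$.

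The delicate point, which I expect to be the main obstacle, is $k=0$: here the value is exactly $g_n(2)$, a genuine cubic Gauss-type sum carrying the negative sign that drives the Wigner negativity. Evaluating it by the same resummation produces a factor $(i\sqrt{3})^{n}=3^{n/2}i^{n}$, and the sign of $i^{n}$ is governed by $n\bmod 4$. It is precisely the hypothesis $n\in S$, i.e.\ $n\equiv 3\pmod 4$, that forces $i^{n}=-i$ and hence $g_n(2)=\tfrac12\big(3-3^{(n+1)/2}\big)$ (a plus sign would appear for $n\equiv1$). I would therefore isolate $k=0$ and flag this congruence as the single place where membership in $S$ is genuinely used; a small check such as $g_3(2)=-3$ confirms the sign.
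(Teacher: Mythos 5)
Your proposal is correct, and it follows the same overall reduction as the paper but evaluates the resulting character sums by a genuinely different method. Both arguments boil the matrix element down to the exponential sum $\sum_{\mathbf{s}\in\mbb{Z}_3^n}\omega^{\pm\left[(1-(-1)^{n-k})\prod_j s_j+\sum_{i\le n-k}s_i\right]}$ (your exponent is the negative of the paper's $g(\mathbf{s})$, which is harmless since the two sums are complex conjugates of a real quantity); you get there by conjugating the operator, $Z A_{(0,0)}Z^{\dagger}=A_{(1,0)}$, while the paper pushes $T_{(\mathbf{1},\mathbf{0})}^{\dagger}$ onto the state, a purely cosmetic difference. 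The divergence is in how the sum is computed. The paper exploits the antisymmetry $g(-\mathbf{s})=-g(\mathbf{s})$ to write the sum as $\tfrac{3}{2}\abs{g^{-1}(0)}-\tfrac{3^n}{2}$ and then counts $\abs{g^{-1}(0)}$ case by case with weight enumerators and roots-of-unity filters, which is elementary but heavy (its Case 3 is a sizable computation). You instead resum directly: for odd $k$ the sum factorizes and vanishes; for even $0<k<n$ you split off the last $k$ qudits, discard blocks containing a zero, and use the pairing identity $g_m(1)+g_m(2)=3$ for odd $m$ (which I verified: $g_m(1)+g_m(2)=2-\omega^m(1+\omega)^m=2-(-1)^m$), collapsing the hardest case of the paper to one line times the count $2^{k-1}$ of blocks of each product sign; for $k=0$ you evaluate the genuine Gauss-type sum via $(1\pm\omega)^n$ and track the phase, correctly identifying that the hypothesis $n\equiv3\pmod 4$ enters only here and only through the sign of the resulting power of $i$ (for $n\equiv1\pmod4$ one would get $\tfrac12\bigl(3+3^{(n+1)/2}\bigr)$ instead, destroying the negativity). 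Two trivial quibbles: in the odd-$k$ case you only need $n-k\ge1$, not $\ge2$; and in the $k=0$ case the Gauss factor comes out as $\bigl(\omega(1-\omega)\bigr)^{n+1}=(i\sqrt3)^{n+1}$ rather than $(i\sqrt3)^n$ in the cleanest bookkeeping, though your conclusion and the check $g_3(2)=-3$ are both right. Net assessment: your route is shorter and more structurally transparent about where the congruence class of $n$ is used; the paper's counting route is more pedestrian but requires no clever pairing identity.
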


\begin{proof}
Note that
\begin{equation}
\begin{aligned}
T_{(\mathbf{1},\mathbf{0})}^\dagger\ket{\mathrm{C}^{n-1}Z}=e^{i\theta}\frac{1}{\sqrt{3^n}}\sum_{\mathbf{s}\in\mbb{Z}_3^n}\omega^{\prod_{j=1}^ns_j-\sum_{j=1}^ns_j}\ket{\mathbf{s}},
\end{aligned}
\end{equation}
where $e^{i\theta}$ is a global phase factor, and
\begin{equation}
\begin{aligned}
\left(A_{(0,0)}^{\otimes n-k}\otimes\mbb{I}_{3}^{\otimes k}\right)T_{(\mathbf{1},\mathbf{0})}^\dagger\ket{\mathrm{C}^{n-1}Z}=&e^{i\theta}\frac{1}{\sqrt{3^n}}\sum_{\mathbf{s}\in\mbb{Z}_3^n}\omega^{\prod_{j=1}^ns_j-\sum_{j=1}^ns_j}\ket{-s_1,\cdots,-s_{n-k},s_{n-k+1},\cdots,s_{n}}\\
=&e^{i\theta}\frac{1}{\sqrt{3^n}}\sum_{\mathbf{s}\in\mbb{Z}_3^n}\omega^{(-1)^{n-k}\prod_{j=1}^ns_j+\sum_{j=1}^{n-k}s_j-\sum_{j=n-k+1}^ns_j}\ket{\mathbf{s}}.
\end{aligned}
\end{equation}
Thus we have
\begin{equation}\label{eq:qutrit_target}
\bra{\mathrm{C}^{n-1}Z}T_{(\mathbf{1},\mathbf{0})}\left(A_{(0,0)}^{\otimes n-k}\otimes\mbb{I}_{3}^{\otimes k}\right)T_{(\mathbf{1},\mathbf{0})}^\dagger\ket{\mathrm{C}^{n-1}Z}=\frac{1}{3^n}\sum_{\mathbf{s}\in\mbb{Z}_3^n}\omega^{\left((-1)^{n-k}-1\right)\prod_{j=1}^ns_j+2\sum_{j=1}^{n-k}s_j}=:\frac{1}{3^n}\sum_{\mathbf{s}\in\mbb{Z}_3^n}\omega^{g(\mathbf{s})},
\end{equation}
where we define function $g:\mbb{Z}_3^n\rightarrow \mbb{Z}_3$ by
\begin{equation}
g(\mathbf{s})=\left((-1)^{n-k}-1\right)\prod_{j=1}^ns_j+2\sum_{j=1}^{n-k}s_j.
\end{equation}
Since $g(-\mathbf{s})=-g(\mathbf{s})$, we know that the number of $\mathbf{s}$'s such that $g(\mathbf{s})=1,2$ are equal, that is, $\abs{g^{-1}(1)}=\abs{g^{-1}(2)}$. Thus
\begin{equation}
\sum_{\mathbf{s}\in\mbb{Z}_3^n}\omega^{g(\mathbf{s})}=\abs{g^{-1}(0)}-\frac{1}{2}\left(3^n-\abs{g^{-1}(0)}\right)=\frac{3}{2}\abs{g^{-1}(0)}-\frac{3^n}{2}.
\end{equation}
The following of our proof is focused on calculating $\abs{g^{-1}(0)}$.

For $\mathbf{s}\in\mbb{Z}_3^n$ and $j\in\mbb{Z}_3$, denote $\operatorname{wt}_j(\mathbf{s})$ be the number of $j$'s in the string $\mathbf{s}$.

(\textbf{Case 1.}) When $k=0$, we have
\begin{equation}
\begin{aligned}
g(\mathbf{s})=-\Big(2\prod_{j=1}^ns_j+\sum_{j=1}^{n}s_j\Big),
\end{aligned}
\end{equation}
thus $g(\mathbf{s})=0$ iff $\prod_{j=1}^ns_j=\sum_{j=1}^{n}s_j$. Note that
\begin{equation}
\prod_{j=1}^ns_j=
\begin{cases}
0,&\operatorname{wt}_0(\mathbf{s})>0,\\
1,&\operatorname{wt}_0(\mathbf{s})=0\text{ and }\operatorname{wt}_2(\mathbf{s})\text{ even},\\
2,&\operatorname{wt}_0(\mathbf{s})=0\text{ and }\operatorname{wt}_2(\mathbf{s})\text{ odd},\\
\end{cases}
\end{equation}
and for $l\in\mbb{Z}_3$,
\begin{equation}
\sum_{j=1}^{n}s_j=l,\text{ when }\operatorname{wt}_1(\mathbf{s})-\operatorname{wt}_2(\mathbf{s})\equiv l,
\end{equation}
where $\equiv$ denotes equivalent $\operatorname{mod}3$. We use $w_l$ to represent $\operatorname{wt}_l(\mathbf{s})$, which satisfies $w_0+w_1+w_2=n$. 

The number of $\mathbf{s}$ such that $\prod_{j=1}^ns_j=\sum_{j=1}^{n}s_j=0$ equals to
\begin{equation}\label{eq:qutrit_case1_0}
\sum_{w_0=1}^n\binom{n}{w_0}\sum_{w_1=0}^{n-w_0}\binom{n-w_0}{w_1}\delta_{w_1-w_2\equiv0}=\sum_{w_0=1}^n\binom{n}{w_0}\sum_{w_1=0}^{n-w_0}\binom{n-w_0}{w_1}\frac{(\omega^{2w_1+w_0-n}-\omega)(\omega^{2w_1+w_0-n}-\omega^2)}{(1-\omega)(1-\omega^2)}.
\end{equation}
The number of $\mathbf{s}$ such that $\prod_{j=1}^ns_j=\sum_{j=1}^{n}s_j=1$ equals to
\begin{equation}\label{eq:qutrit_case1_1}
\sum_{w_2=0}^{n}\binom{n}{w_2}\delta_{w_2\text{ even}}\delta_{w_1-w_2\equiv1}=\sum_{w_2=0}^{n}\binom{n}{w_2}\frac{1+(-1)^{w_2}}{2}\frac{(\omega^{n-2w_2}-1)(\omega^{n-2w_2}-\omega^2)}{(\omega-1)(\omega-\omega^2)}.
\end{equation}
The number of $\mathbf{s}$ such that $\prod_{j=1}^ns_j=\sum_{j=1}^{n}s_j=2$ equals to
\begin{equation}\label{eq:qutrit_case1_2}
\sum_{w_2=0}^{n}\binom{n}{w_2}\delta_{w_2\text{ odd}}\delta_{w_1-w_2\equiv2}=\sum_{w_2=0}^{n}\binom{n}{w_2}\frac{1-(-1)^{w_2}}{2}\frac{(\omega^{n-2w_2}-1)(\omega^{n-2w_2}-\omega)}{(\omega^2-1)(\omega^2-\omega)}.
\end{equation}
Summing Eq.~\eqref{eq:qutrit_case1_0}\eqref{eq:qutrit_case1_1}\eqref{eq:qutrit_case1_2}, we get
\begin{equation}
\begin{aligned}
\sum_{\mathbf{s}\in\mbb{Z}_3^n}\omega^{g(\mathbf{s})}=\frac{3}{2}\abs{g^{-1}(0)}-\frac{3^n}{2}=\frac{3}{2}+\frac{1}{4}3^{\frac{n+1}{2}}\left(ie^{-\frac{n\pi}{2}i}-ie^{\frac{n\pi}{2}i}\right)=\frac{1}{2}\left(3-3^\frac{n+1}{2}\right),
\end{aligned}
\end{equation}
where the last equality holds because we take $n=4m-1$.

(\textbf{Case 2.}) When $k=1,3,\cdots,n-2$, we have $g(\mathbf{s})=2\sum_{j=1}^{n-k}s_j$.
In this case the number of $\mathbf{s}$'s such that $g(\mathbf{s})=0,1,2$ are equal, that is, $\abs{g^{-1}(0)}=\abs{g^{-1}(1)}=\abs{g^{-1}(2)}=3^{n-1}$ (which can be shown by induction).

(\textbf{Case 3.}) When $k=2,4,\cdots,n-1$, we have
\begin{equation}
\begin{aligned}
g(\mathbf{s})=-\Big(2\prod_{j=1}^ns_j+\sum_{j=1}^{n-k}s_j\Big),
\end{aligned}
\end{equation}
thus $g(\mathbf{s})=0$ iff $\prod_{j=1}^ns_j=\sum_{j=1}^{n-k}s_j$. We use $w_l$ to represent $\operatorname{wt}_l(\mathbf{s})$, which satisfies $w_0+w_1+w_2=n$. Denote $\mathbf{s}'\in\mbb{Z}_3^{n-k}$ be the vector containing the first $n-k$ entries of $\mathbf{s}$, we use $w_l'$ to represent $\operatorname{wt}_l(\mathbf{s}')$, which satisfies $w_0'+w_1'+w_2'=n-k$.

The number of $\mathbf{s}$ such that $\prod_{j=1}^ns_j=\sum_{j=1}^{n-k}s_j=0$ equals to
\begin{equation}\label{eq:qutrit_case3_0}
\begin{aligned}
\sum_{w_0=1}^n\sum_{w_0'=0}^{w_0}\binom{n-k}{w_0'}\binom{k}{w_0-w_0'}2^{k-w_0+w_0'}\sum_{w_1'=0}^{n-k-w_0'}\binom{n-k-w_0'}{w_1'}\delta_{w_1'-w_2'\equiv0}
=3^{n-1}-2^{k}\sum_{w_1'=0}^{n-k}\binom{n-k}{w_1'}\delta_{w_1'-w_2'\equiv0}.
\end{aligned}
\end{equation}
The number of $\mathbf{s}$ such that $\prod_{j=1}^ns_j=\sum_{j=1}^{n-k}s_j=1$ equals to
\begin{equation}\label{eq:qutrit_case3_1}
\sum_{w_2=0}^{n}\sum_{w_2'=0}^{w_2}\binom{n-k}{w_2'}\binom{k}{w_2-w_2'}\delta_{w_2\text{ even}}\delta_{w_1-w_2\equiv1}=\sum_{w_2'=0}^{n-k}\sum_{w_2=w_2'}^{w_2'+k}\binom{n-k}{w_2'}\binom{k}{w_2-w_2'}\delta_{w_2\text{ even}}\delta_{w_1-w_2\equiv1}.
\end{equation}
The number of $\mathbf{s}$ such that $\prod_{j=1}^ns_j=\sum_{j=1}^{n-k}s_j=2$ equals to
\begin{equation}\label{eq:qutrit_case3_2}
\sum_{w_2=0}^{n}\sum_{w_2'=0}^{w_2}\binom{n-k}{w_2'}\binom{k}{w_2-w_2'}\delta_{w_2\text{ odd}}\delta_{w_1-w_2\equiv2}=\sum_{w_2'=0}^{n-k}\sum_{w_2=w_2'}^{w_2'+k}\binom{n-k}{w_2'}\binom{k}{w_2-w_2'}\delta_{w_2\text{ odd}}\delta_{w_1-w_2\equiv2}.
\end{equation}
Summing Eq.~\eqref{eq:qutrit_case3_0}\eqref{eq:qutrit_case3_1}\eqref{eq:qutrit_case3_2}, we get
\begin{equation}
\begin{aligned}
\sum_{\mathbf{s}\in\mbb{Z}_3^n}\omega^{g(\mathbf{s})}=\frac{3}{2}\abs{g^{-1}(0)}-\frac{3^n}{2}=-(-1)^n(-1)^k3\cdot2^{k-1}=3\cdot2^{k-1},
\end{aligned}
\end{equation}
where the last equality holds because $n$ odd $k$ even.

(\textbf{Case 4.}) When $k=n$, we have $g(\mathbf{s})=0$ for all $\mathbf{s}$, thus $\abs{g^{-1}(0)}=3^n$.

\end{proof}

\begin{figure}[t]
\centering
\includegraphics[width=0.9\textwidth]{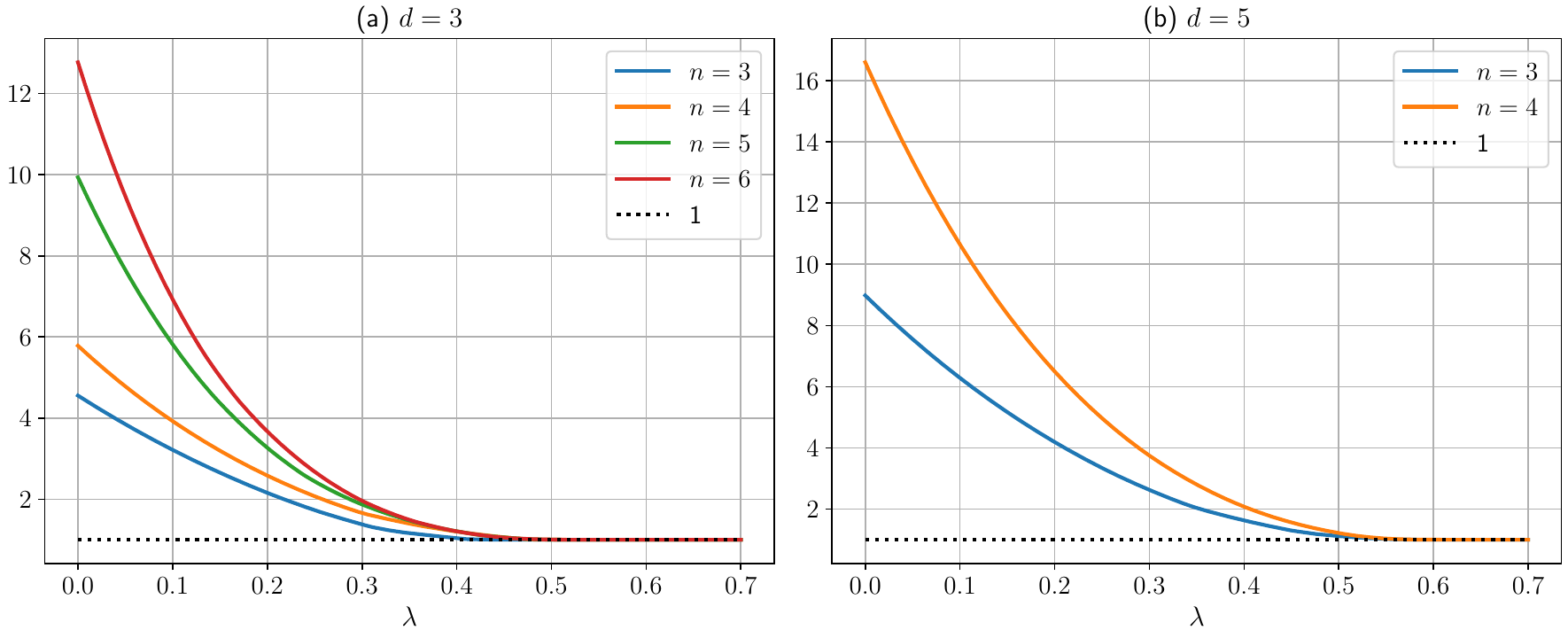}
\caption{Numerical values of $1+2\operatorname{sn}\left(\mc{E}_\lambda^{\otimes n}(\ketbra{\mathrm{C}^{n-1}Z}{\mathrm{C}^{n-1}Z})\right)$ with local dimension $d=3,5$.}
\label{fig:plot_N_qudit_sym_d3_d5}
\end{figure}

\subsection{Local magic}

Note that in Eq.~\eqref{eq:qudit_CCCZ_UB_app} we have shown that
\begin{equation}
\begin{aligned}
\mc{R}\left(\Tr_{[k]}(\rho_{n,1})\right)\le&\frac{1}{d^k}\Big(d^k+(d-1)^k4M_d(d-1)^{n-k}\Big)\\
=&1+4M_dd^{n-k}\big(\frac{d-1}{d}\big)^n.
\end{aligned}
\end{equation}
Take $k=n-3$, we have
\begin{equation}
\mc{R}\left(\Tr_{[n-3]}(\rho_{n,1})\right)\le1+4M_dd^{3}\Big(\frac{d-1}{d}\Big)^{n}=1+\mc{O}\Big(\Big(\frac{d-1}{d}\Big)^{n}\Big).
\end{equation}

\section{Fragility of edges with degrees near $n$}\label{app:Fragility_of_bigbigEdges}

\begin{lem}\label{lem:defferenceUB_by_wt}
For arbitrary $n$-qubit hypergraph states $\Psi_1$ and $\Psi_2$ with characteristic function $f_1$ and $f_2$, we have
\begin{equation}
\abs{\mc{R}(\Psi_1)-\mc{R}(\Psi_2)}<4\operatorname{wt}\left(f_1+f_2\right).
\end{equation}
\end{lem}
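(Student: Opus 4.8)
The plan is to adapt the stabilizer-decomposition argument of Lemma~\ref{lem:CCCZ_magic_UB} to the \emph{difference} of two hypergraph states. Writing $\ket{\Psi_1}=2^{-n/2}\sum_{\mathbf{s}}(-1)^{f_1(\mathbf{s})}\ket{\mathbf{s}}$ and $\ket{\Psi_2}=2^{-n/2}\sum_{\mathbf{s}}(-1)^{f_2(\mathbf{s})}\ket{\mathbf{s}}$, I would set $D=\{\mathbf{s}\in\mbb{Z}_2^n:f_1(\mathbf{s})\neq f_2(\mathbf{s})\}$, so that $\abs{D}=\operatorname{wt}(f_1+f_2)$, the two sign patterns agreeing off $D$ and being opposite on $D$. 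Expanding $\Psi_1-\Psi_2$ in the computational basis, the coefficient $(-1)^{f_1(\mathbf{x})+f_1(\mathbf{y})}-(-1)^{f_2(\mathbf{x})+f_2(\mathbf{y})}$ of $\ketbra{\mathbf{x}}{\mathbf{y}}$ survives only when exactly one of $\mathbf{x},\mathbf{y}$ lies in $D$, in which case it equals $-2(-1)^{f_2(\mathbf{x})+f_2(\mathbf{y})}$. Hence I expect
\[
\Psi_1-\Psi_2=\frac{1}{2^n}\sum_{\mathbf{x}\in D}\sum_{\mathbf{y}\notin D}(-2)(-1)^{f_2(\mathbf{x})+f_2(\mathbf{y})}\big(\ketbra{\mathbf{x}}{\mathbf{y}}+\ketbra{\mathbf{y}}{\mathbf{x}}\big),
\]
where each ordered pair appears once.

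The next step is to convert each Hermitian off-diagonal block into stabilizer states. As established in the proof of Lemma~\ref{lem:CCCZ_magic_UB}, for distinct $\mathbf{x},\mathbf{y}$ the states $\tfrac{1}{\sqrt2}(\ket{\mathbf{x}}\pm\ket{\mathbf{y}})$ are stabilizer states (each Clifford-equivalent to a GHZ-type state), and $\ketbra{\mathbf{x}}{\mathbf{y}}+\ketbra{\mathbf{y}}{\mathbf{x}}$ equals the difference of the two associated rank-one projectors, i.e.\ a pseudomixture of stabilizer states with total absolute weight $2$. The sign $(-1)^{f_2(\mathbf{x})+f_2(\mathbf{y})}$ merely flips which projector carries the positive coefficient, so each of the $\abs{D}(2^n-\abs{D})$ ordered pairs contributes absolute weight $\tfrac{2}{2^n}\cdot2$. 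Therefore $\Psi_1-\Psi_2=\sum_j d_j\phi_j$ with $\phi_j$ pure stabilizer states, $\sum_j d_j=\operatorname{Tr}(\Psi_1-\Psi_2)=0$, and $\sum_j\abs{d_j}\le\tfrac{4}{2^n}\abs{D}(2^n-\abs{D})=4\abs{D}\big(1-\abs{D}/2^n\big)<4\abs{D}$, the strictness coming from $\abs{D}\ge1$.

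Finally I would feed this into the convexity property (iv) of $\mc{R}(\cdot)$. Writing $\Psi_1=1\cdot\Psi_2+\sum_j d_j\phi_j$ with coefficients summing to $1$, convexity together with $\mc{R}(\phi_j)=1$ gives $\mc{R}(\Psi_1)\le\mc{R}(\Psi_2)+\sum_j\abs{d_j}$; exchanging the roles of $\Psi_1$ and $\Psi_2$ yields the reverse estimate, so $\abs{\mc{R}(\Psi_1)-\mc{R}(\Psi_2)}\le\sum_j\abs{d_j}<4\operatorname{wt}(f_1+f_2)$.

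The computation is essentially bookkeeping, so I do not anticipate a serious obstacle; the two points requiring care are (i) verifying the coefficient pattern so that only the $D$-versus-$D^{c}$ off-diagonal blocks appear, which is exactly where the structural fact ``$f_1=f_2$ off $D$, $f_1=f_2+1$ on $D$'' enters, and (ii) keeping the quasiprobability bookkeeping exact, namely that $\operatorname{Tr}(\Psi_1-\Psi_2)=0$ guarantees the combined coefficients still sum to $1$ so that property (iv) applies cleanly. The strict inequality (and implicitly the harmless assumption $\Psi_1\neq\Psi_2$, i.e.\ $\operatorname{wt}(f_1+f_2)\ge1$) comes for free from the factor $1-\abs{D}/2^n$.
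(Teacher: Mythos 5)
Your proposal is correct and follows essentially the same route as the paper's own proof: expanding $\Psi_1-\Psi_2$ in the computational basis, observing that only the cross terms between $D$ and its complement survive, rewriting each $\ketbra{\mathbf{x}}{\mathbf{y}}+\ketbra{\mathbf{y}}{\mathbf{x}}$ as a difference of the stabilizer projectors onto $\tfrac{1}{\sqrt{2}}(\ket{\mathbf{x}}\pm\ket{\mathbf{y}})$, and invoking convexity of $\mc{R}$ with the same count $\tfrac{4}{2^n}\abs{D}(2^n-\abs{D})<4\abs{D}$. Your explicit remarks on the zero-trace bookkeeping and the degenerate case $f_1=f_2$ (which the paper glosses over) are fine but do not change the argument.
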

\begin{proof}
Note that
\begin{equation}
\begin{aligned}
\Psi_1-\Psi_2&=2^{-n}\sum_{x,y\in\mbb{Z}_2^n}\left[(-1)^{f_1(x)+f_1(y)}-(-1)^{f_2(x)+f_2(y)}\right]\ketbra{\x}{\y}\\
&=2^{-n}\sum_{x,y\in\mbb{Z}_2^n}(-1)^{f_1(x)+f_1(y)}\left[1-(-1)^{(f_1+f_2)(x)+(f_1+f_2)(y)}\right]\ketbra{\x}{\y}\\
&=2^{1-n}\sum_{\substack{(f_1+f_2)(x)=1\\(f_1+f_2)(y)=0}}(-1)^{f_1(x)+f_1(y)}\big(\ketbra{\x}{\y}+\ketbra{y}{x}\big)\\
&=2^{1-n}\sum_{\substack{(f_1+f_2)(x)=1\\(f_1+f_2)(y)=0}}(-1)^{f_1(x)+f_1(y)}\big(\ketbra{X^+_{xy}}{X^+_{xy}}-\ketbra{X^-_{xy}}{X^-_{xy}}\big),
\end{aligned}
\end{equation}
where $\ket{X^\pm_{xy}}=\frac{1}{\sqrt{2}}(\ket{x}\pm\ket{y})$ are stabilizer states. By convexity of $\mc{R}$ we have
\begin{equation}
\mc{R}(\Psi_1)\le\mc{R}(\Psi_2)+4\frac{\operatorname{wt}(f_1+f_2)\big(2^n-\operatorname{wt}(f_1+f_2)\big)}{2^n}<\mc{R}(\Psi_2)+4\operatorname{wt}(f_1+f_2).
\end{equation}
By symmetric we also have $\mc{R}(\Psi_2)<\mc{R}(\Psi_1)+4\operatorname{wt}(f_1+f_2)$.
\end{proof}

\begin{lem}\label{lem:noisy_add_one_big_edge}
For an $n$-qubit hypergraph state $\Psi$, consider adding an $m$-edge $e\subset[n]$ to $\Psi$ and obtaining a new $n$-qubit hypergraph state $\Phi$. Then we have
\begin{equation}
\begin{aligned}
\mc{R}\left(\mc{E}_{\lambda}^{\otimes n}(\Phi)\right)\le\mc{R}\left(\mc{E}_{\lambda}^{\otimes n}(\Psi)\right)+4\cdot 2^{n-m}(1-\lambda/2)^n.
\end{aligned}
\end{equation}
\end{lem}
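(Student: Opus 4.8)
The plan is to reuse the difference-expansion strategy from the proof of Theorem~\ref{thm:HighDegreeEdgeFragile_app}, but to replace the crude magic-capacity estimate $\mc{R}(\Phi^{(I,\s)})\le5\,\mc{R}(\Psi^{(I,\s)})$ by the \emph{explicit} stabilizer decomposition of the single-edge difference constructed inside the proof of Lemma~\ref{lem:defferenceUB_by_wt}. This is exactly what eliminates the constant $C_\Psi$ (and the factor $6$) present in the $K=1$ case of Theorem~\ref{thm:HighDegreeEdgeFragile_app}, yielding the sharp bound $4\cdot2^{n-m}(1-\lambda/2)^n$. First I would expand both noisy states by the partial-trace rule (Lemma~\ref{lemma:HGstate_PartialTrace}) to write
\[
\mc{E}_{\lambda}^{\otimes n}(\Phi)-\mc{E}_{\lambda}^{\otimes n}(\Psi)=\sum_{I\subset[n]}(1-\lambda)^{n-\abs{I}}\lambda^{\abs{I}}\frac{1}{2^{\abs{I}}}\sum_{\s\in\mbb{Z}_2^{\abs{I}}}\bigl[\Phi^{(I,\s)}-\Psi^{(I,\s)}\bigr]\otimes\frac{\mbb{I}_I}{2^{\abs{I}}}.
\]

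Next I would invoke the partial-trace rule again to control which summands survive. Adding the single edge $e$ means $f_\Phi=f_\Psi+\prod_{i\in e}x_i$, and in the reduction labelled by $(I,\s)$ the edge persists (shrunk to $e\setminus I$) precisely when $\s$ restricted to $I\cap e$ equals $\mathbf{1}$; otherwise $\Phi^{(I,\s)}=\Psi^{(I,\s)}$ and the term vanishes. For a surviving term one has $f_{\Phi^{(I,\s)}}+f_{\Psi^{(I,\s)}}=\prod_{i\in e\setminus I}x_i$, a single monomial of degree $m-\abs{I\cap e}$ on $n-\abs{I}$ qubits, whose Hamming weight is $2^{(n-\abs{I})-(m-\abs{I\cap e})}$. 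The decomposition in the proof of Lemma~\ref{lem:defferenceUB_by_wt} then expresses $\Phi^{(I,\s)}-\Psi^{(I,\s)}$ as a real combination of pure stabilizer states $\ket{X^\pm_{xy}}$ with total absolute coefficient below $4\,\mathrm{wt}(f_{\Phi^{(I,\s)}}+f_{\Psi^{(I,\s)}})<4\cdot2^{(n-\abs{I})-(m-\abs{I\cap e})}$. Tensoring with the stabilizer state $\mbb{I}_I/2^{\abs{I}}$ preserves this total coefficient (property iii of RoM), so by convexity of $\mc{R}(\cdot)$ the full difference adds at most the probability-weighted sum of these coefficients on top of $\mc{R}\bigl(\mc{E}_{\lambda}^{\otimes n}(\Psi)\bigr)$.

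Finally I would run the combinatorial sum. Splitting $I=I_1\sqcup I_2$ with $I_1=I\cap e$ and $I_2=I\setminus e$, the surviving configurations fix $\s|_{I_1}=\mathbf{1}$ and leave $\s|_{I_2}$ free ($2^{\abs{I_2}}$ choices), while the per-term coefficient bound is $4\cdot2^{(n-\abs{I})-(m-\abs{I_1})}=4\cdot2^{\,n-\abs{I_2}-m}$. After cancelling the factor $2^{\abs{I_2}}$ against $2^{-\abs{I_2}}$ from the probability weight, the double sum factorizes as
\[
4\cdot2^{n-m}\Bigl(\textstyle\sum_{a}\binom{m}{a}(1-\lambda)^{m-a}(\lambda/2)^a\Bigr)\Bigl(\textstyle\sum_{b}\binom{n-m}{b}(1-\lambda)^{n-m-b}(\lambda/2)^b\Bigr)=4\cdot2^{n-m}(1-\lambda/2)^m(1-\lambda/2)^{n-m},
\]
which collapses to $4\cdot2^{n-m}(1-\lambda/2)^n$, giving the claim.

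I expect the only real difficulty to be bookkeeping rather than conceptual: one must correctly isolate the reductions $(I,\s)$ that leave the edge intact, track the reduced degree $m-\abs{I\cap e}$ of the surviving monomial, and arrange the weight estimate so that the binomial factorization yields exactly $(1-\lambda/2)^n$ with prefactor $2^{n-m}$. The crucial conceptual point — that a \emph{single} added edge produces a difference with an explicit stabilizer decomposition whose total weight is governed by $\mathrm{wt}$, rather than by the multiplicative magic-capacity bound — is precisely what makes this estimate tighter than the generic $K=1$ instance of Theorem~\ref{thm:HighDegreeEdgeFragile_app}.
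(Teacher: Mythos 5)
Your proposal is correct and follows essentially the same route as the paper's proof: the same difference expansion over $(I,\s)$, the same observation that only reductions with $\s|_{I\cap e}=\mathbf{1}$ survive with the edge shrunk to $e\setminus I$, the same weight-based stabilizer decomposition from the proof of Lemma~\ref{lem:defferenceUB_by_wt} giving the per-term coefficient $4\cdot2^{n-m-\abs{I\setminus e}}$, and the same combinatorial collapse to $4\cdot2^{n-m}(1-\lambda/2)^n$. The only cosmetic difference is that you factorize the final sum directly over $I_1=I\cap e$ and $I_2=I\setminus e$, whereas the paper sums over $k=\abs{I}$ and invokes the Vandermonde identity $\sum_a\binom{m}{a}\binom{n-m}{k-a}=\binom{n}{k}$; both yield $(1-\lambda/2)^m(1-\lambda/2)^{n-m}=(1-\lambda/2)^n$.
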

\begin{proof}
From the proof of Theorem~\ref{thm:HighDegreeEdgeFragile_app}, we know
\begin{equation}
\begin{aligned}
\mc{E}_{\lambda}^{\otimes n}(\Phi)-\mc{E}_{\lambda}^{\otimes n}(\Psi)=&\sum_{I\subset[n]}\Bigg[(1-\lambda)^{n-\abs{I}}\lambda^{\abs{I}}\frac{1}{2^{\abs{I}}}\sum_{\s|_{I\cap e}=1^{\abs{I\cap e}}}\left[\Phi^{(I,\s)}-\Psi^{(I,\s)}\right]\otimes\frac{\mbb{I}_I}{2^{\abs{I}}}\Bigg].
\end{aligned}
\end{equation}
Denote $f_1$ and $f_2$ be the characteristic function of the $n-\abs{I}$ qubit states $\Phi^{(I,\s)}$ and $\Psi^{(I,\s)}$, respectively. We have $(f_1+f_2)(x)=\prod_{i\in e-I}x_i$, thus $\operatorname{wt}(f_1+f_2)=2^{(n-\abs{I})-\abs{e-I}}=2^{n-m-\abs{I-e}}$. By Lemma~\ref{lem:defferenceUB_by_wt}, we know that $\Phi^{(I,\s)}-\Psi^{(I,\s)}$ can be written as a mixture of stabilizer states with total coefficients $4\cdot2^{n-m-\abs{I-e}}$, thus
\begin{equation}
\begin{aligned}
\mc{R}\left(\mc{E}_{\lambda}^{\otimes n}(\Phi)\right)\le\mc{R}\left(\mc{E}_{\lambda}^{\otimes n}(\Psi)\right)+\sum_{I\subset[n]}\Bigg[(1-\lambda)^{n-\abs{I}}\lambda^{\abs{I}}\frac{1}{2^{\abs{I}}}\sum_{\s|_{I\cap e}=1^{\abs{I\cap e}}}4\cdot2^{n-m-\abs{I-e}}\Bigg].
\end{aligned}
\end{equation}
The number of $I\subset{[n]}$ with $\abs{I}=k$, such that $\abs{I\cap e}=a$ (which means $\abs{I-e}=k-a$), equals to $\binom{m}{a}\binom{n-m}{k-a}$. When $\abs{I\cap e}=a$, the number of $s$ such that $\s|_{I\cap e}=1^{\abs{I\cap e}}$ equals to $2^{k-a}$. Note that for $p<q$ we have $\binom{p}{q}=0$. Thus we have
\begin{equation}
\begin{aligned}
\mc{R}\left(\mc{E}_{\lambda}^{\otimes n}(\Phi)\right)\le&\mc{R}\left(\mc{E}_{\lambda}^{\otimes n}(\Psi)\right)+4\cdot2^{n-m}\sum_{k=0}^n\Bigg[(1-\lambda)^{n-k}\lambda^k2^{-k}\sum_{a=0}^k\binom{m}{a}\binom{n-m}{k-a}2^{k-a}2^{-(k-a)}\Bigg]\\
=&\mc{R}\left(\mc{E}_{\lambda}^{\otimes n}(\Psi)\right)+4\cdot2^{n-m}\sum_{k=0}^n\binom{n}{k}(1-\lambda)^{n-k}\lambda^k2^{-k}\\
=&\mc{R}\left(\mc{E}_{\lambda}^{\otimes n}(\Psi)\right)+4\cdot2^{n-m}\left(1-\lambda/2\right)^n.
\end{aligned}
\end{equation}
\end{proof}

\begin{prop}\label{prop:HighDegreeEdgeFragile2}
Let $\Psi$ be a $n$-qubit hypergraph state. If $\Phi$ is obtained by adding arbitrary many edges with degree $\ge n-c$ from $\Psi$, where $c$ is a constant, then we have
\begin{equation}
\mc{R}\left(\mc{E}_{\lambda}^{\otimes n}\left(\Phi\right)\right)\le\mc{R}\left(\mc{E}_{\lambda}^{\otimes n}\left(\Psi\right)\right)+\operatorname{poly}(n)(1-\lambda/2)^n.
\end{equation}
\end{prop}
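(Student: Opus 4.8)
The plan is to obtain the proposition as an iterated corollary of Lemma~\ref{lem:noisy_add_one_big_edge}, which already controls the cost of adding a single high-degree edge. The crucial structural observation is that, although the statement allows ``arbitrary many'' edges, the number of \emph{distinct} hyperedges of degree $\ge n-c$ on $n$ vertices is only polynomial in $n$: parametrizing a size-$k$ subset by the $j=n-k$ omitted vertices, the count is
\[
\sum_{k=n-c}^n\binom{n}{k}=\sum_{j=0}^c\binom{n}{j}\le(c+1)\binom{n}{c}=\mc{O}(n^c)=\operatorname{poly}(n).
\]
Since each $\mathrm{C}^{\abs{e}-1}Z$ gate is an involution, repeated edges cancel in pairs or reduce to a single copy, so without loss of generality I may assume the added edges $e_1,\dots,e_K$ are distinct, with $K\le\sum_{j=0}^c\binom{n}{j}=\operatorname{poly}(n)$ and $\abs{e_i}\ge n-c$ for every $i$.

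First I would introduce the chain of intermediate hypergraph states $\Psi=\Psi_0,\Psi_1,\dots,\Psi_K=\Phi$, where $\Psi_i$ is obtained from $\Psi_{i-1}$ by adjoining the single edge $e_i$. Applying Lemma~\ref{lem:noisy_add_one_big_edge} at each step gives
\[
\mc{R}\left(\mc{E}_{\lambda}^{\otimes n}(\Psi_i)\right)\le\mc{R}\left(\mc{E}_{\lambda}^{\otimes n}(\Psi_{i-1})\right)+4\cdot2^{\,n-\abs{e_i}}(1-\lambda/2)^n.
\]
Telescoping these $K$ inequalities and using $\abs{e_i}\ge n-c$, so that $2^{\,n-\abs{e_i}}\le2^c$, yields
\[
\mc{R}\left(\mc{E}_{\lambda}^{\otimes n}(\Phi)\right)\le\mc{R}\left(\mc{E}_{\lambda}^{\otimes n}(\Psi)\right)+4\cdot2^c\,K\,(1-\lambda/2)^n.
\]
Substituting $K\le\sum_{j=0}^c\binom{n}{j}=\operatorname{poly}(n)$ and absorbing the constant $4\cdot2^c$ into the polynomial produces the claimed bound $\mc{R}\left(\mc{E}_{\lambda}^{\otimes n}(\Phi)\right)\le\mc{R}\left(\mc{E}_{\lambda}^{\otimes n}(\Psi)\right)+\operatorname{poly}(n)(1-\lambda/2)^n$.

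There is essentially no genuine analytic obstacle remaining once Lemma~\ref{lem:noisy_add_one_big_edge} (itself resting on the weight bound of Lemma~\ref{lem:defferenceUB_by_wt}) is available; the proposition is secured by combinatorial bookkeeping. The one point that deserves care is the \emph{uniformity} of the single-edge bound: Lemma~\ref{lem:noisy_add_one_big_edge} holds for an arbitrary underlying hypergraph state, with a bound depending only on $n$ and the added degree, not on the pre-existing edge structure. This uniformity is exactly what legitimizes the sequential insertion and makes the telescoping valid regardless of the order in which the $e_i$ are added. A secondary, minor subtlety is reducing ``arbitrary many edges'' to polynomially many distinct ones, which follows from the involutivity of the multi-controlled-$Z$ gates together with the fact that $[n]$ has only $\operatorname{poly}(n)$ subsets of size $\ge n-c$.
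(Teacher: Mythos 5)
Your proposal is correct and follows essentially the same route as the paper: sequentially adjoin the edges, apply Lemma~\ref{lem:noisy_add_one_big_edge} at each step with the uniform per-edge cost $4\cdot2^{c}(1-\lambda/2)^{n}$, and telescope over the $K=\operatorname{poly}(n)$ steps. Your explicit count of distinct degree-$\ge n-c$ edges (and the cancellation of repeated edges via involutivity) merely makes precise the point the paper states as ``since $c$ is a constant, $K=\operatorname{poly}(n)$''; otherwise the two arguments coincide.
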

\begin{proof}
Consider adding the edges one by one, and denote the sequence of hypergraphs by $\Psi=\Phi_0\overset{e_1}{\rightarrow}\Phi_1\overset{e_2}{\rightarrow}\Phi_2\overset{e_3}{\rightarrow}\cdots\overset{e_{K}}{\rightarrow}\Phi_K=\Phi$. Since $c$ is a constant, we have $K=\operatorname{poly(n)}$. Apply Lemma~\ref{lem:noisy_add_one_big_edge} sequentially, we have
\begin{equation}
\begin{aligned}
\mc{R}\left(\mc{E}_{\lambda}^{\otimes n}\left(\Phi\right)\right)\le&\mc{R}\left(\mc{E}_{\lambda}^{\otimes n}\left(\Phi_{K-1}\right)\right)+4\cdot2^c\left(1-\lambda/2\right)^{n}\\
\le&\mc{R}\left(\mc{E}_{\lambda}^{\otimes n}\left(\Phi_{K-2}\right)\right)+2\cdot4\cdot2^c\left(1-\lambda/2\right)^{n}\\
\le&\cdots\cdots\\
\le&\mc{R}\left(\mc{E}_{\lambda}^{\otimes n}\left(\Psi\right)\right)+K\cdot4\cdot2^c\left(1-\lambda/2\right)^{n}\\
=&\mc{R}\left(\mc{E}_{\lambda}^{\otimes n}\left(\Psi\right)\right)+\operatorname{poly}(n)\left(1-\lambda/2\right)^{n}.
\end{aligned}
\end{equation}
    
\end{proof}

\end{document}